\documentclass{svjour3}                     
\RequirePackage{fix-cm}
\smartqed  
\usepackage{mathptmx}      
%
\usepackage{amsmath,amssymb,graphicx}
\usepackage[hypertex]{hyperref}
%
\newcommand{\di}{\genfrac{}{}{0pt}{}}
\def\under#1{\kern.4pt\underline{\kern-.4pt{}#1\kern-.4pt}\kern.4pt}
\unitlength 0.8mm
\allowdisplaybreaks[4]
\sloppy

%
\journalname{preprint}
\dedication{Dedicated to Prof.\ Eberhard Zeidler on the occasion of
  his 75th birthday}

\begin{document}

\title{On the fixed point equation of a solvable 4D QFT model}

\author{Harald Grosse  \and  Raimar Wulkenhaar}

\institute{H. Grosse \at
              Fakult\"at f\"ur Physik, Universit\"at Wien, 
Boltzmanngasse 5, A-1090 Wien, Austria \\
              \email{harald.grosse@univie.ac.at}          
           \and
           R. Wulkenhaar \at
           Mathematisches Institut der Westf\"alischen
  Wilhelms-Universit\"at, 
Einsteinstra\ss{}e 62, D-48149 M\"unster, Germany\\
\email{raimar@math.uni-muenster.de}
}

\date{}


\maketitle

\begin{abstract}
The regularisation of the $\lambda\phi^4_4$-model on noncommutative
Moyal space gives rise to a solvable QFT model in which all correlation
functions are expressed in terms of the solution of a fixed point
problem. We prove that the non-linear operator for the logarithm of
the original problem satisfies the assumptions of the Schauder fixed
point theorem, thereby completing the solution of the QFT model.
\keywords{quantum field theory \and  solvable model \and 
Schauder fixed point theorem}
\subclass{81T16 \and 81T08 \and  47H10 \and 46B50}
\PACS{11.10.Lm \and  11.10.Nx \and 02.30.Sa \and 02.30.Rz}
\end{abstract}

\section{Introduction}
\label{intro}

This paper provides another key result in our long-term project on
quantum field theory on noncommutative geometries. This project was
strongly supported and influenced by Prof.\ Eberhard Zeidler.  One of
us (H.G.) spent a semester as Leibniz professor at the University of
Leipzig and enjoyed very much the hospitality at the
Max-Planck-Institute for Mathematics in the Sciences at
Inselstra{\ss}e, directed under Prof.\ Eberhard Zeidler. Shortly later
the other one of us (RW) was Schloe\ss{}mann fellow in the group of
Prof.\ Eberhard Zeidler. Our project started in this time.

\medskip

The first milestone was the
proof of perturbative renormalisability \cite{Grosse:2003aj},
\cite{Grosse:2004yu} of the $\lambda\phi^4_4$-model on Moyal space with
harmonic propagation. Eberhard Zeidler was constantly interested in
our work and played a decisive r\^ole in further development: He
understood that our computation of the $\beta$-function
\cite{Grosse:2004by} with the remarkable absence of the Landau ghost
problem \cite{Grosse:2004ik} could be of interest for Vincent
Rivasseau who visited the MPI Leipzig in summer 2004. Eberhard Zeidler
initiated a meeting of one of us (RW) with Vincent Rivasseau. This
contact led to a first joint publication \cite{Rivasseau:2005bh} which
brought the perturbative renormalisation proof of \cite{Grosse:2004yu}
closer to the constructive renormalisation programme
\cite{Rivasseau:1991ub}.  The growing group around Vincent Rivasseau
progressed much faster: they reproved the renormalisation theorem in
position space \cite{Gurau:2005gd}, derived the Symanzik polynomials
\cite{Gurau:2006yc}, extended the method to the Gross-Neveu model
\cite{VignesTourneret:2006nb} and so on \cite{Rivasseau:2007ab}.

The most important achievement started with a remarkable three-loop
computation of the $\beta$-function by Margherita Disertori and
Vincent Rivasseau \cite{Disertori:2006uy} in which they confirmed that
at a special self-duality point \cite{Langmann:2002cc}, the
$\beta$-function vanishes to three-loop order. Eventually, Margherita
Disertori, Razvan Gurau, Jacques Magnen and Vincent Rivasseau proved
in \cite{Disertori:2006nq} that the $\beta$-function vanishes to all
orders in perturbation theory. The central idea consists in combining 
the Ward identity for an $U(\infty)$ group action with Schwinger-Dyson 
equations.

We felt that the result of \cite{Disertori:2006nq} goes much deeper:  
Using these tools it must be possible to solve the model! Indeed we
succeeded in deriving a closed equation for the two-point function
of the self-dual model \cite{Grosse:2009pa}, which we 
renormalised and solved perturbatively to 3rd
order. The equation is a non-linear integral equation for a function
$G(\alpha,\beta)=:G_{\alpha\beta}$ on the unit square  $0\leq
\alpha,\beta< 1$:
\begin{align}
G_{\alpha\beta}
&=  1 
- \lambda \bigg( 
\frac{1-\alpha}{1-\alpha\beta} 
\big(\mathcal{M}_\beta-\mathcal{L}_\beta -\beta \mathcal{Y} \big)
+ \frac{1-\beta}{1-\alpha\beta} 
\big(\mathcal{M}_\alpha-\mathcal{L}_\alpha -\alpha \mathcal{Y}\big) 
\nonumber
\\
& 
+\frac{1-\beta}{1-\alpha\beta} 
\Big(\frac{G_{\alpha\beta}}{G_{0\alpha}} -1\Big)\big(
\mathcal{M}_\alpha-\mathcal{L}_\alpha + \alpha \mathcal{N}_{\alpha
  0}\big) 
-\frac{\alpha(1-\beta)}{1-\alpha\beta}  
\big(\mathcal{L}_{\beta}+\mathcal{N}_{\alpha\beta}-
\mathcal{N}_{\alpha 0}\big)
\nonumber
\\
&+  \frac{(1-\alpha)(1-\beta)}{1-\alpha \beta} 
(G_{\alpha\beta}-1)\mathcal{Y}
\bigg)\;,
\label{G-LMNY-neu}
\end{align}
where 
\begin{align}
\mathcal{L}_\alpha &:=\int_0^1 \!\! d\rho \; 
\frac{G_{\alpha\rho} -G_{0\rho}}{1-\rho} \;,
&
\mathcal{M}_\alpha &:=\int_0^1 \!\! d\rho \; 
\frac{\alpha\,G_{\alpha\rho}}{1-\alpha \rho} \;,
&
\mathcal{N}_{\alpha\beta} &:=\int_0^1 \!\! d\rho \; 
\frac{G_{\rho\beta}-G_{\alpha\beta}}{\rho -\alpha} \;,
\label{LMN}
\end{align}
and $\mathcal{Y}=\lim_{\alpha\to 0}
\frac{\mathcal{M}_\alpha-\mathcal{L}_\alpha}{\alpha}$. A solution
would be the key step to compute all higher correlation
functions. Unfortunately, all our attempts to solve this equation
failed, forcing us to put the problem aside for two years.

During the QFT workshop in November 2011 in Leipzig, one of us (RW) had the
chance to meet Eberhard Zeidler and to report about the programme: that
we succeeded to reduce all difficulties of a QFT model to a single
equation, but failed to solve it. Eberhard Zeidler immediately
offered help. He studied the problem (\ref{G-LMNY-neu})+(\ref{LMN}) during
the following three weeks, unfortunately without success.

This exchange led to a renewed interest and a subsequent major
breakthrough in spring 2012: We noticed that after suitable rescaling
of $G_{\alpha\beta}$ to $G_{ab}$, now with $a,b \in [0,\Lambda^2]$,
the difference function $D_{ab}=\frac{a}{b}(G_{ab}-G_{a0})$ satisfies
a \emph{linear} singular integral equation of Carleman type
\cite{Carleman} (the singular kernel is the $N_{\alpha\beta}$-integral
in (\ref{LMN})). We proved in \cite{Grosse:2012uv}, and with
corrections in \cite{Grosse:2014lxa} concerning a possible non-trivial
solution of the homogeneous Carleman equation \cite{Tricomi},
\cite{Muskhelishvili}, that given the boundary function $G_{a0}$ with
$G_{00}\equiv 1$, the full two-point function $G_{ab}$ reads
\begin{align}
G_{ab} &=
\frac{e^{\mathrm{sign}(\lambda)(\mathcal{H}_a^{\!\Lambda}[\tau_b]
-\mathcal{H}_0^{\!\Lambda}[\tau_0])} 
\sin (\tau_b(a))}{|\lambda| \pi a} \;,
\label{Gab}
\quad
\tau_b(a) :=\di{\raisebox{-1.2ex}{\mbox{\normalsize$\arctan$}}}{
\mbox{\scriptsize$[0,\pi]$}}
\Bigg(\dfrac{|\lambda| \pi a}{
b + \frac{1 +
\lambda \pi a \mathcal{H}_a^{\Lambda}[ G_{\bullet 0} ] }{G_{a 0}}}
\Bigg) \,.
\end{align}
By
$\mathcal{H}_a^{\!\Lambda}[f(\bullet)]:=\frac{1}{\pi}\lim_{\epsilon\to
  0} \Big(\int_0^{a-\epsilon}+\int_{a+\epsilon}^{\Lambda^2}
\Big)dx\;\frac{f(x)}{x-a}$ we denote the finite (or truncated) Hilbert
transform. We are mainly interested in the one-sided Hilbert transform
$\mathcal{H}_a^{\!\infty}[f(\bullet)]:=\lim_{\Lambda^2\to \infty}
\mathcal{H}_a^{\!\Lambda}[f(\bullet)]$.
As shown in \cite{Grosse:2014lxa}, this result is correct 
for $\lambda<0$, which is the interesting case
for reflection positivity \cite{Grosse:2013iva}.
For $\lambda>0$ one has to multiply 
(\ref{Gab}) by a factor $(1+\frac{\Lambda^2}{\Lambda^2-a}
(aC+F(b)))$, where $C$ is a constant and $F(b)$ an arbitrary
function with $F(0)=0$.

The symmetry condition $G_{ab}=G_{ba}$ of a two-point function 
leads for $a=0$ and $\lambda<0$ to the consistency condition
(in the limit $\Lambda\to \infty$)
\begin{align}
G_{b0}&= G_{0b}=
\frac{1}{1+b}
\exp\Bigg(
{-} \lambda 
\int_0^b \!\!\! dt  \int_0^{\infty}  \!\!\!
\frac{dp}{(\lambda \pi p)^2 
+\big( t + \frac{1 +\lambda \pi p 
 \mathcal{H}_p^{\infty}[G_{\bullet 0}]}{G_{p0}}\big)^2} 
\Bigg)\;.
\label{G0b}
\end{align}
Equation (\ref{G0b}) is a much simpler problem than
(\ref{G-LMNY-neu})+(\ref{LMN}). In \cite{Grosse:2012uv} we already proved
existence of a solution for $\lambda>0$ via the Schauder fixed point
theorem.  This case turned out to be much less interesting than
$\lambda<0$: Reflection positivity is excluded for $\lambda>0$
\cite{Grosse:2013iva}, and the formulae (\ref{Gab})+(\ref{G0b}) need to be
corrected by a winding number \cite{Grosse:2014lxa}.

The proof for $\lambda>0$ given in \cite{Grosse:2012uv} does not
generalise to the opposite sign.  In this paper we fill the gap and
prove that (\ref{G0b}) has a solution for $-\frac{1}{6}\leq
\lambda<0$. The key is to focus on the logarithm of $G_{a0}$, which is
an unbounded function. We are able to control the divergence at
$\infty$ and prove uniform continuity of the Hilbert transform on such
spaces.  For $-\frac{1}{6}\leq \lambda\leq 0$ we are able to verify the
assumptions of the \emph{Schauder fixed point theorem} 
so that (\ref{G0b}) has a solution with good
additional properties. We would like to warn the reader that the
estimates are cumbersome.

The Schauder fixed point theorem is a central topic in 
Eberhard Zeidler's book \cite[Chap.\ 2]{Zeidler:1986??}. It follows
from Brouwer's fixed point theorem for which an elementary proof
is given in \cite[\S 77]{Zeidler:1988??}.

\medskip

It is a pleasure to dedicate this paper to Prof.\ Eberhard Zeidler who showed
constant interest in our programme and provided strategic help. 
From our early common interaction on we were
strongly supported by the MPI (and ESI in Vienna), which allowed
  our long-standing fruitful interaction. We congratulate
  Prof.\ Zeidler to his birthday and wish him many happy recurrences.
We
hope he enjoys the connection between quantum field
theory \cite{Zeidler:2006rw} \cite{Zeidler:2009zz} \cite{Zeidler:2011zz}
and non-linear functional analysis 
\cite{Zeidler:1986??} \cite{Zeidler:1990??}, \cite{Zeidler:1985??}, 
\cite{Zeidler:1988??}.

\section{Logarithmically bounded functions}

Consider the following vector space of real-valued functions
\begin{align}
LB:=\Big\{f\in \mathcal{C}^1(\mathbb{R}_+)\;:~ 
f(0)=0\;,~ |f'(x)| \leq  \frac{C}{1+x}\text{ for some } C\geq 0\Big\}\;.
\label{LB}
\end{align}
These functions vanish at zero and grow/decrease at most
logarithmically at $\infty$. We equip $LB$ with the norm 
\begin{align}
\|f\|_{LB}:=|f(0)|+\sup_{x\geq 0} \big|(1+x) f'(x)\big|
\qquad \text{for } f\in LB\;.
\label{norm-LB}
\end{align}
Indeed, $\|f\|_{LB}=0$ means $f(0)=0$ and $|f'|=0$, hence $f'=0$ and
thus $f(x)=0$ everywhere. The addional $|f(0)|$ is redundant but makes
it easier to formulate the proofs.
\begin{proposition}
$(LB,\|~\|_{LB})$ is a Banach space.
\end{proposition}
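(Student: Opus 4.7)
My plan is to realize $(LB,\|\cdot\|_{LB})$ as isometrically isomorphic to $C_b(\mathbb{R}_+)$, the space of bounded continuous real-valued functions on $\mathbb{R}_+$ with supremum norm, and then invoke completeness of that standard Banach space. The key observation is that since every $f\in LB$ satisfies $f(0)=0$, the norm collapses to $\|f\|_{LB}=\sup_{x\geq 0}|(1+x)f'(x)|$, so the natural map $\Phi:LB\to C_b(\mathbb{R}_+)$ defined by $\Phi(f)(x):=(1+x)f'(x)$ is an isometry onto its image.

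First I would verify that $\Phi$ is well defined: for $f\in LB$, the function $x\mapsto (1+x)f'(x)$ is continuous because $f'\in \mathcal{C}^0(\mathbb{R}_+)$, and it is bounded by the very definition of $LB$, hence lies in $C_b(\mathbb{R}_+)$. Linearity of $\Phi$ is obvious, and the isometry property is exactly the definition of $\|\cdot\|_{LB}$ together with $f(0)=0$.

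Next I would show $\Phi$ is surjective by exhibiting an explicit inverse. Given $g\in C_b(\mathbb{R}_+)$, define
\begin{align*}
f(x):=\int_0^x \frac{g(t)}{1+t}\,dt\;.
\end{align*}
Since $t\mapsto g(t)/(1+t)$ is continuous, the fundamental theorem of calculus gives $f\in \mathcal{C}^1(\mathbb{R}_+)$ with $f(0)=0$ and $f'(x)=g(x)/(1+x)$, so $|f'(x)|\leq \|g\|_\infty/(1+x)$ and therefore $f\in LB$ with $\Phi(f)=g$. Thus $\Phi$ is an isometric isomorphism, and completeness of $LB$ follows from the well-known completeness of $(C_b(\mathbb{R}_+),\|\cdot\|_\infty)$: a Cauchy sequence $(f_n)$ in $LB$ gives a Cauchy sequence $(\Phi(f_n))$ which converges uniformly to some $g\in C_b(\mathbb{R}_+)$, and $f:=\Phi^{-1}(g)\in LB$ is the required limit with $\|f_n-f\|_{LB}=\|\Phi(f_n)-g\|_\infty\to 0$.

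There is no serious obstacle here; the only point worth a careful check is that the uniform limit $g$ of the continuous functions $\Phi(f_n)=(1+\cdot)f_n'$ is itself continuous, so that the integral defining $\Phi^{-1}(g)$ produces a genuinely $\mathcal{C}^1$ function. This is immediate, so the proposition reduces to bookkeeping of the identification.
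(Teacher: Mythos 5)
Your proposal is correct. The map $\Phi(f)(x):=(1+x)f'(x)$ is indeed a linear isometry of $LB$ onto $C_b(\mathbb{R}_+)$ (injectivity is automatic from the isometry once one uses $f(0)=0$, and your explicit inverse $g\mapsto \int_0^x \frac{g(t)}{1+t}\,dt$ settles surjectivity via the fundamental theorem of calculus), so completeness of $LB$ is inherited from the standard Banach space $\big(C_b(\mathbb{R}_+),\|\cdot\|_\infty\big)$.

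The route is genuinely different in presentation from the paper's, which argues directly with a Cauchy sequence: there, one shows that $(1+x)f_n'(x)$ converges pointwise (in fact uniformly) to some $(1+x)g(x)$, proves continuity of the limit by an $\epsilon/3$-argument, and then defines the candidate limit $f(x)=\int_0^x g(t)\,dt$, checking membership in $LB$ by hand. In substance the two proofs share the same mechanics — the paper's integration step is exactly your $\Phi^{-1}$, and its continuity argument is a reproof of the fact that a uniform limit of continuous functions is continuous, i.e.\ of the completeness of $C_b(\mathbb{R}_+)$ that you simply quote. What your approach buys is brevity and a structural insight: $LB$ is isometrically nothing but $C_b(\mathbb{R}_+)$ in disguise, so any Banach-space property of the latter transfers immediately. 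What the paper's approach buys is self-containedness, with every analytic ingredient (pointwise limit, continuity, fundamental theorem of calculus) displayed explicitly rather than delegated to a cited standard fact. Both are complete proofs; yours reduces the proposition to bookkeeping, as you say.
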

\begin{proof}
Given a Cauchy sequence $(f_n)_{n\in \mathbb{N}}$ in $LB$. This means 
$f_n(0)=0$ for every $n$, and for every $\epsilon>0$
there is $N_\epsilon\in \mathbb{N}$ with $\|f_n-f_m\|_{LB}=\sup_{x\geq 0}
|(1+x)f'_n(x)-(1+x)f'_m(x)|
<\epsilon$ for all $m,n\geq N_\epsilon$. This implies
$|(1+x) f'_n(x)-(1+x)f'_m(x)|<\epsilon$ for every $x\geq 0$.
By the completeness of $\mathbb{R}$, the sequence
$\big((1+x)f'_n(x)\big)_{n\in \mathbb{N}}$ converges at
every $x\geq 0$ and defines a limit function 
$(1+x)g(x):=\lim_{n\to \infty} (1+x)f'_n(x)$. 
Taking the limit $m\to \infty$ above shows that 
\begin{align}
\big|(1+x)f'_n(x)-(1+x)g(x)\big|<\epsilon\qquad 
\text{for every $x$ and $n\geq N_\epsilon$}. 
\tag{*}
\end{align}
Fix such $n\geq N_\epsilon$. By definition of $f_n\in LB$, the 
derivative $x\mapsto (1+x)f_n'(x)$ is continuous at
every $x$. This means that there is $\delta_x>0$ such that
$|(1+x)f'_n(x)-(1+y)f'_n(y)|< \epsilon$ for all $y\geq 0$ with
$|x-y|<\delta_x$. For such $y$ it follows
\begin{align*}
\big|(1{+}x)g(x)-(1{+}y)g(y)\big| &\leq 
\big|(1+x)g(x)-(1+x)f_n'(x)\big| 
+\big|(1+x) f_n'(x)-(1+y)f'_n(x)\big| 
\\
&+\big|(1+y) f_n'(y)-(1+y)g(y)\big| < 3\epsilon\;.
\end{align*}
Therefore, the limit function $t\mapsto (1+t)g(t)$ and hence 
$t\mapsto g(t)$ is continuous. As such it can be
integrated over any compact interval. We \emph{define} a function
$f(x)$ by
\[
f(x)=\int_0^x dt \,g(t)\;.
\]
This means $f(0)=0$, and by the fundamental theorem of calculus the
function $f$ is differentiable at every $x\geq 0$, and $f'(x)=g(x)$ is
continuous.
Expressing this as $(1+x)g(x)=(1+x)f'(x)$ we have proved with (*)
\begin{align*}
|f_n(0)-f(0)|=0 \;,\qquad
|(1+x)f'_n(x)-(1+x)f'(x)|<\epsilon\quad \text{for every $x$ 
and $n\geq N_\epsilon$}\;. 
\end{align*}
Hence, $(f_n)_{n\in \mathbb{N}}$ converges to a function $f \in
\mathcal{C}^1(\mathbb{R}_+)$ in the $LB$-norm. By construction 
we have $f\in LB$, hence $(LB,\|~\|_{LB})$ is complete. \hfill $\square$%
\end{proof}

Consider for $-\frac{1}{3}<\lambda<0$ the following subset 
\begin{align}
\mathcal{K}_\lambda=\Big\{f\in LB\;:~ f(0)=0\;,\qquad -\frac{1-|\lambda|}{1+x}
\leq f'(x) \leq 
-\frac{1-\frac{|\lambda|}{1-2|\lambda|}}{1+x}  \Big\}\subseteq LB\;.
\label{calK}
\end{align}
\begin{lemma}
$\mathcal{K}_\lambda$ is a norm-closed subset of the Banach space $LB$.
\end{lemma}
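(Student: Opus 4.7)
The plan is to verify closedness directly from the definition of the norm. Let $(f_n)_{n\in\mathbb{N}}$ be a sequence in $\mathcal{K}_\lambda$ converging in $\|\cdot\|_{LB}$ to some $f\in LB$; I want to show that $f$ itself satisfies all three conditions defining $\mathcal{K}_\lambda$.

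First I would read off $f(0)=0$ for free: the norm estimate
$$|f_n(0)-f(0)| \leq \|f_n-f\|_{LB}\to 0,$$
together with $f_n(0)=0$, forces $f(0)=0$. The heart of the argument is that the norm convergence controls $f'$ pointwise. Indeed, for every fixed $x\geq 0$,
$$\bigl|(1+x)\bigl(f_n'(x)-f'(x)\bigr)\bigr| \leq \sup_{y\geq 0}\bigl|(1+y)\bigl(f_n'(y)-f'(y)\bigr)\bigr| \leq \|f_n-f\|_{LB},$$
so after dividing by the positive number $1+x$ I obtain $f_n'(x)\to f'(x)$ pointwise.

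Next I would pass to the limit in the two-sided inequality. By hypothesis, for each $n$ and each $x\geq 0$,
$$-\frac{1-|\lambda|}{1+x} \leq f_n'(x) \leq -\frac{1-\frac{|\lambda|}{1-2|\lambda|}}{1+x}.$$
Since weak inequalities are preserved under pointwise limits of real numbers, letting $n\to\infty$ yields the identical inequality for $f'(x)$, valid at every $x\geq 0$. Combined with $f(0)=0$ and the already-known membership $f\in LB$, this places $f\in \mathcal{K}_\lambda$, so $\mathcal{K}_\lambda$ is norm-closed.

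There is no real obstacle here; the only conceptual point is that the $\|\cdot\|_{LB}$-norm dominates pointwise convergence of the weighted derivative $(1+x)f'(x)$, which is exactly the quantity constrained in the definition of $\mathcal{K}_\lambda$. Because both bounding functions in (\ref{calK}) are pointwise bounds on $f'$ with the same weight $\frac{1}{1+x}$, closedness is essentially immediate from the structure of the norm.
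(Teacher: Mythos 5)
Your proof is correct and is in substance the same argument as the paper's: the paper simply phrases it topologically, noting that $\widetilde{ev}(f)=f(0)$ and $ev_x(f)=(1+x)f'(x)$ are norm-continuous functionals on $LB$ and writing $\mathcal{K}_\lambda$ as an intersection of preimages of closed sets, which is exactly the fact you unwind sequentially (norm convergence dominates pointwise convergence of $f(0)$ and of the weighted derivative, so the defining inequalities pass to the limit). Nothing is missing.
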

\begin{proof} 
The evaluation maps $\widetilde{ev},ev_x :LB \to \mathbb{R}$, with
$\widetilde{ev}(f)=f(0)$ and $ev_x(f)=(1+x)f'(x)$ are continuous 
maps from $LB$ to $\mathbb{R}$. Hence, the following subset is closed
in $LB$:
\begin{align}
\mathcal{K}_\lambda= \widetilde{ev}^{-1}(\{0\}) \cap 
\bigcap_{x\geq 0} ev_x^{-1}\Big(\big[
-( 1-|\lambda|),-( 1-\tfrac{|\lambda|}{1-2|\lambda|})\big]\Big)\;.
\tag*{\mbox{$\square$}}
\end{align}
\end{proof}

In the sequel we use implicity the fact that the Hilbert 
transform of a function that simultaneously belongs for some $p>1$ to 
$L^p({[\Lambda^2,\infty[}$ and to the
$\alpha$-H\"older space on ${]0,\Lambda^2[}$ for some $0<\alpha<1$ is again 
a H\"older-continues function with the same H\"older exponent
$\alpha$. For functions on ${]{-}\pi,\pi[}$ this was proved by Priwaloff
\cite{Priwaloff:1916} for a variant of the Hilbert 
transform. This proof is easily generalised to ${]0,\Lambda^2[}$. The
$L^p$ condition is necessary for Hilbert transforms over $\mathbb{R}$ 
and clearly extends to the one-sided Hilbert transform over 
$\mathbb{R}_+$.  

This means that for $f\in \mathcal{K}_\lambda$ the following maps are
well defined (possibly with integrals restricted to
$[\epsilon,\Lambda^2]$; the convergence on $\mathbb{R}_+$ will be
verified in the following section):
\begin{subequations}
\begin{align}
Rf(a) &:= \frac{1-|\lambda|\pi a \mathcal{H}_a^{\!\infty}
[e^{f(\bullet)}]}{e^{f(a)}}\;,
\label{Rf}
\\
Tf(b) &:= 
-\log (1{+}b) 
+ \int_0^{\infty}\!
\frac{dt}{\pi t}\Big(
\arctan \frac{b+Rf(t)}{|\lambda|\pi t}-
\arctan \frac{Rf(t)}{|\lambda|\pi t}
\Big)\;.
\label{Tfa}
\end{align}
\end{subequations}
Formula (\ref{Tfa}) involves the standard branch of the
$\arctan$-function with range ${]{-}\frac{\pi}{2},\frac{\pi}{2}[}$,
related to the branch used in (\ref{Gab}) by 
$\di{\raisebox{-1.2ex}{\mbox{\normalsize$\arctan$}}}{
\mbox{\scriptsize$[0,\pi]$}} (x) =\frac{\pi}{2}-\arctan \frac{1}{x}$.
Comparing with (\ref{Gab}) at $a=0$, equivalent to (\ref{G0b}), shows 
$\log G_{0b}=(T \log G_{\bullet 0})(b)$. 

In the following three sections we prove three main results (for a
restricted set of $|\lambda|$): that $T$ maps $\mathcal{K}_\lambda$
into itself, that $T$ is norm-continuous on $\mathcal{K}_\lambda$ and
that the image $T\mathcal{K}_\lambda\subseteq \mathcal{K}_\lambda$ is
relatively compact:
\begin{theorem}
For $-\frac{1}{6}\leq \lambda \leq 0$,
consider the map $T$ defined by (\ref{Tfa}) on the subset 
$\mathcal{K}_\lambda \subseteq LB$ of the Banach space of 
logarithmically bounded function, see
(\ref{LB}), (\ref{norm-LB}) and (\ref{calK}).
Then for any $f\in \mathcal{K}_\lambda$ one has
\begin{enumerate}

\item[i)]  $Tf \in \mathcal{K}_\lambda$.

\item[ii)] $T:\mathcal{K}_\lambda \to \mathcal{K}_\lambda$ is norm-continuous.

\item[iii)] The restriction of $T\mathcal{K}_\lambda$ to any interval
  $[0,\Lambda^2]$ is relatively compact in  norm-topology. 

\end{enumerate}
In particular,
$T$ has a fixed point $f_*=Tf_* \in \mathcal{K}\big|_{[0,\Lambda^2]}$ which
we denote $\log G_{0b}:=f_*(b)$.
\label{maintheorem}
\end{theorem}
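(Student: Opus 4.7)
The overall strategy is to invoke the Schauder fixed point theorem: $\mathcal{K}_\lambda$ is manifestly convex and, by the preceding lemma, norm-closed, so it suffices to establish the three properties (i)--(iii). Schauder then applies to the restriction to $[0,\Lambda^2]$ and produces a fixed point $f_* \in \mathcal{K}_\lambda\big|_{[0,\Lambda^2]}$, which by the derivation leading to (\ref{G0b}) is identified with $\log G_{0b}$ on that interval.

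For (i), the value $Tf(0)=0$ is immediate from (\ref{Tfa}) since the two $\arctan$ terms coincide at $b=0$. Differentiating (\ref{Tfa}) under the integral sign, justified once $Rf$ is controlled, gives
$$(Tf)'(b) = -\frac{1}{1+b} + \int_0^\infty \frac{|\lambda|\,dt}{(|\lambda|\pi t)^2 + (b+Rf(t))^2}.$$
The required derivative inequalities translate into the two-sided estimate $\frac{1}{1+b} \leq \int_0^\infty \frac{dt}{(|\lambda|\pi t)^2 + (b+Rf(t))^2} \leq \frac{1}{(1-2|\lambda|)(1+b)}$, so the analytical heart is an a priori control of $Rf$ from (\ref{Rf}). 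The defining inequalities of $\mathcal{K}_\lambda$ integrate to $(1+x)^{-(1-|\lambda|)} \leq e^{f(x)} \leq (1+x)^{-(1-3|\lambda|)/(1-2|\lambda|)}$, which plugged into the one-sided truncated Hilbert transform and combined with standard principal-value estimates (split near the singularity plus power-law tail bounds) yields two-sided bounds on $Rf(t)$. Inserting these into the Schwinger-type integral and comparing against elementary integrals of the form $\int \frac{dt}{\alpha^2 t^2+(b+\beta)^2}$ closes the bootstrap, with the threshold $|\lambda|\leq\tfrac16$ entering precisely to reproduce the asymmetric endpoints $-\frac{1-|\lambda|}{1+b}$ and $-\frac{1-3|\lambda|}{(1-2|\lambda|)(1+b)}$.

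For (ii), suppose $f_n \to f$ in $\|\cdot\|_{LB}$. The norm controls $(1+x)(f_n'-f')$ uniformly, so by integration $|f_n(x)-f(x)| \leq \|f_n-f\|_{LB}\log(1+x)$, which yields H\"older closeness of $e^{f_n}$ to $e^f$ on compacta together with explicit power-law tail bounds on their difference. The Priwaloff-type continuity of the Hilbert transform recalled just before (\ref{Rf}) promotes this to uniform convergence $Rf_n \to Rf$ with controlled decay; a mean value estimate for $\arctan$ combined with dominated convergence then transfers this to $\sup_{b\geq 0}(1+b)\,|(Tf_n)'(b)-(Tf)'(b)|\to 0$, which by (\ref{norm-LB}) is precisely convergence in the $LB$-norm.

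For (iii), fix $\Lambda$. Part (i) supplies a uniform pointwise bound on $b\mapsto (1+b)(Tf)'(b)$ over $[0,\Lambda^2]$ and $f\in\mathcal{K}_\lambda$. Differentiating the above formula for $(Tf)'(b)$ once more in $b$, and using the H\"older regularity of $Rf$ on compact intervals guaranteed by Priwaloff, produces an equicontinuity (H\"older) bound for $(1+b)(Tf)'(b)$ on $[0,\Lambda^2]$ uniform over $\mathcal{K}_\lambda$. Arzel\`a--Ascoli then delivers relative compactness in $C([0,\Lambda^2])$, which is the asserted statement. The three parts combined verify Schauder's hypotheses on $\mathcal{K}_\lambda\big|_{[0,\Lambda^2]}$, yielding the fixed point. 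The main obstacle is part (i): the asymmetric range for $f'$ defining $\mathcal{K}_\lambda$ forces a tight self-consistency loop with the Hilbert transform of $e^f$, and verifying that $|\lambda|\leq\tfrac16$ closes this loop will require the long chain of cumbersome estimates that the introduction warns us to expect.
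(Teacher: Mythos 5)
Your skeleton agrees with the paper's (Schauder on the convex, norm-closed set $\mathcal{K}_\lambda$, with i)--iii) established separately, and the correct reduction of i) to the two-sided estimate $\frac{1}{1+b}\leq\int_0^\infty\frac{dt}{(|\lambda|\pi t)^2+(b+Rf(t))^2}\leq\frac{1}{(1-2|\lambda|)(1+b)}$), but for part i) --- which is the actual content of the theorem --- your outline contains a step that fails as stated and otherwise defers precisely the estimates that constitute the proof. You propose to obtain two-sided bounds on $Rf$ by inserting the pointwise bounds $(1+x)^{-(1-|\lambda|)}\leq e^{f(x)}\leq(1+x)^{-(1-|\lambda_r|)}$, $|\lambda_r|:=\frac{|\lambda|}{1-2|\lambda|}$, into the principal-value integral. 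This cannot work directly: the kernel changes sign at $x=a$, so an upper bound on $\mathcal{H}^{\!\infty}_a[e^{f(\bullet)}]$ would have to use opposite boundaries of $\mathcal{K}_\lambda$ for $x<a$ and $x>a$, and since those bounding functions do not agree at $x=a$ the logarithmic divergences of the two half-integrals no longer cancel; the paper flags exactly this obstruction below (\ref{Hilbert-minmax-0}). The paper's remedy is to bound the \emph{ratio} $e^{f(x)}/e^{f(a)}$, whose extremal bounds equal $1$ at $x=a$, so that (\ref{Hilbert-minmax}) reduces everything to the closed-form Hilbert transform of a pure power (\ref{Hilbert-2F1}), proved via zero-balanced hypergeometric asymptotics. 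Even granting that, preservation of the upper bound $Tf'(b)\leq-\frac{1-|\lambda_r|}{1+b}$ is not a routine bootstrap: it needs the lower bound $Rf(a)\geq 1+|\lambda|\pi a\cot(|\lambda_r|\pi)+|\lambda|F(a)$ with the partially negative correction $F$ of (\ref{Fa}) (Lemma~\ref{Lemma:Fa}), the piecewise-linear minorant $S$ of (\ref{Sa}) obtained from convexity/concavity properties of $F$, a three-piece splitting of the $t$-integral, Leibniz bounds on $\arctan$, and finally the (partly numerical) verification that all coefficients $c_k$ of a degree-$18$ polynomial are negative for $0\leq|\lambda|\leq\frac16$. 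Your proposal explicitly acknowledges this as "the main obstacle" but supplies none of it, so i) --- and with it the theorem --- is not established; nothing in your sketch even exhibits where $|\lambda|\leq\frac16$ is used.

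There are smaller gaps in ii) and iii). Since $\|\cdot\|_{LB}$ involves $\sup_{b\geq0}$ of $(1+b)$ times the derivative difference, dominated convergence gives only pointwise (in $b$) convergence; because the images lie in $\mathcal{K}_\lambda$, the quantity $(1+b)\big|(Tf_n)'(b)-(Tf)'(b)\big|$ is merely bounded by the fixed width $|\lambda_r|-|\lambda|$ of the defining corridor, so uniformity in $b$ must come from quantitative perturbation bounds on $Rf$ such as Lemma~\ref{Lemma-zeta} and Proposition~\ref{Prop:Rfg}, fed into (\ref{Tf-prime}) and estimated uniformly in $b$ as in Proposition~\ref{prop:continuity} (which yields an explicit Lipschitz-type constant between about $1.37$ and $4.1$, i.e.\ continuity but not contractivity). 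This part of your argument could be repaired along those lines. For iii), your plan is essentially the paper's, with one inaccuracy: differentiating (\ref{Tf-prime}) in $b$ requires no H\"older regularity of $Rf$ in $t$ (the Priwaloff input is needed earlier, to make sense of $Rf$ at all); equicontinuity of $(1+b)(Tf)'(b)$ follows from the uniform lower bound on $Rf$ established in i), and since $f(0)=0$ makes the restricted $LB$-norm the sup-norm of $(1+x)f'(x)$, Arzel\`a--Ascoli on $C([0,\Lambda^2])$ is indeed the right tool. So ii) and iii) are fixable sketches, but i), as proposed, is both methodologically flawed at the Hilbert-transform step and missing its entire quantitative core.
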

\begin{proof}
The domain $\mathcal{K}$ is also convex. Then i),ii),iii) are the
requirements of the Schauder fixed point theorem \cite[Chapter
2]{Zeidler:1986??} to guarantee
existence of fixed point $Tf_*=f_*$. 
The proof of i),ii),iii) is given in the 
following subsections. \hfill $\square$%
\end{proof}

In this way we prove existence of function $G_{0b}=G_{b0}$ which satisfies 
(\ref{G0b}) for all $0\leq b\leq \Lambda$. For $b>\Lambda^2$ 
there is possibly a discrepancy. Since both sides of 
(\ref{G0b})  belong to $\mathcal{K}_\lambda$ the error is $\leq 
(1+\Lambda^2)^{\frac{|\lambda|}{1-2|\lambda|}-1}-
(1+\Lambda^2)^{|\lambda|-1}$. To put it differently, for every $\epsilon>0$ 
there is $G_{0b} \in \exp \mathcal{K}_\lambda$ such that the difference between 
lhs and rhs of (\ref{G0b}), and consequently also the difference 
between their derivatives,  is $<\epsilon$. This statement means 
that (\ref{G0b}) has a solution in $\mathcal{C}^1_0(\mathbb{R}_+)$.

\section{$T$ preserves $\mathcal{K}_\lambda$}

\label{sec:TKK}

Integrating the definition (\ref{calK}) of $\mathcal{K}_\lambda$ from 
$a$ to $x>a$ yields
\[
\log\Big(\frac{1+a}{1+x}\Big)^{1-|\lambda|}
\leq f(x)-f(a) \leq \log\Big(\frac{1+a}{1+x}\Big)^{1-\frac{|\lambda|}{1-2|\lambda|}}
\]
and consequently (for $x>a$)
\begin{align}
\Big(\frac{1+a}{1+x}\Big)^{1-\lambda}
\leq 
\frac{e^{f(x)}}{e^{f(a)}} \leq  
\Big(\frac{1+a}{1+x}\Big)^{1-\frac{|\lambda|}{1-2|\lambda|}},\quad
\Big(\frac{1+x}{1+a}\Big)^{1-\frac{|\lambda|}{1-2|\lambda|}}
\leq \frac{e^{f(a)}}{e^{f(x)}} \leq  
\Big(\frac{1+x}{1+a}\Big)^{1-\lambda},
\end{align}
which we reinterpet as
\begin{align}
\left.
\begin{array}{c} 
\displaystyle  \Big(\frac{1+a}{1+x}\Big)^{1-\lambda}
\\
\displaystyle 
\Big(\frac{1+a}{1+x}\Big)^{1-\frac{|\lambda|}{1-2|\lambda|}}
\end{array}
\right\} 
\leq 
\frac{e^{f(x)}}{e^{f(a)}} \leq  \left\{
\begin{array}{cl} 
\displaystyle \Big(\frac{1+a}{1+x}\Big)^{1-\frac{|\lambda|}{1-2|\lambda|}}
& \quad \text{for } x>a \;,
\\
\displaystyle 
\Big(\frac{1+a}{1+x}\Big)^{1-\lambda}
& \quad \text{for } x<a\;.
\end{array}
\right.
\label{efg-5}
\end{align}
We take the one-sided Hilbert transform:
\begin{align} 
\frac{\mathcal{H}^{\!\infty}_a[e^{f(\bullet)}]}{e^{f(a)}}
&= \frac{1}{\pi} 
\lim_{\epsilon\to 0} \Big\{
-\int_0^{a-\epsilon} \!\! \frac{dx}{(a-x)} 
\frac{e^{f(x)}}{e^{f(a)}} + 
\int_{a+\epsilon}^\infty \frac{dx}{(x-a)} 
\frac{e^{f(x)}}{e^{f(a)}} \Big\}\;.
\label{Hilbert-ef}
\end{align}

The Hilbert transform (\ref{Hilbert-ef}) becomes maximal if for $x>a$ 
we use the maximal 
$\frac{e^{f(x)}}{e^{f(a)}}$ but for $x<a$ the minimal 
$\frac{e^{f(x)}}{e^{f(a)}}$. Conversely,
the Hilbert transform becomes minimal 
if for $x>a$ 
we use the minimal $\frac{e^{f(x)}}{e^{f(a)}}$ but for $x<a$ the
maximal
$\frac{e^{f(x)}}{e^{f(a)}}$:
\begin{align} 
&\frac{1}{\pi} \lim_{\epsilon\to 0}
\Big\{
-\int_0^{a-\epsilon} \frac{dx\;(1+a)^{1-\lambda} }{(a-x)(1+x)^{1-\lambda}}
+ 
\int_{a+\epsilon}^\infty \frac{dx\;(1+a)^{1-\lambda} }{(x-a)(1+x)^{1-\lambda}}\Big\}
\nonumber
\\
&\leq 
\frac{\mathcal{H}^{\!\infty}_a[e^{f(\bullet)}]}{e^{f(a)}} 
\leq  
\frac{1}{\pi} \lim_{\epsilon\to 0}
\Big\{
{-}\int_0^{a-\epsilon} \!\!\! \frac{dx\;(1{+}a)^{1-\frac{|\lambda|}{1-2|\lambda|}} 
}{(a{-}x)(1{+}x)^{1-\frac{|\lambda|}{1-2|\lambda|}}} 
+ 
\int_{a+\epsilon}^\infty \!\frac{dx\;(1+a)^{1-\frac{|\lambda|}{1-2|\lambda|}} 
}{(x{-}a)(1{+}x)^{1-\frac{|\lambda|}{1-2|\lambda|}} }\Big\}
\;.
\label{Hilbert-minmax-0}
\end{align}
Note that the analogue only for $\mathcal{H}^{\!\infty}_a[e^{f(\bullet)}]$ would not
hold; in that case the opposite boundaries of $\mathcal{K}_\lambda$ would
contribute to $x<a$ versus $x>a$, and there is no chance of a
reasonable estimate!
We can reformulate (\ref{Hilbert-minmax-0}) as
\begin{align} 
\frac{\mathcal{H}^{\!\infty}_a\big[(1+\bullet)^{|\lambda|-1}\big]}{
(1+a)^{|\lambda|-1}}
&\leq 
\frac{\mathcal{H}^{\!\infty}_a[e^{f(\bullet)}]}{e^{f(a)}}
\leq  
\frac{\mathcal{H}^{\!\infty}_a\big[(1+\bullet)^{\frac{|\lambda|}{1-2|\lambda|}-1}
\big]}{(1+a)^{\frac{|\lambda|}{1-2|\lambda|}-1}}\;.
\label{Hilbert-minmax}
\end{align}
We prove the following result which covers a slightly more general
case:
\begin{proposition}
For any $\mu<1$, with $\mu\neq 0$, and $\beta>0$ one has
\begin{align}
\frac{\mathcal{H}_a^{\infty}\big[(\beta+\bullet)^{\mu-1}\big]}{
(\beta+a)^{\mu-1}}
= - \cot (\pi \mu) 
+\frac{1}{\mu\pi}  \Big(\frac{\beta}{\beta+a}\Big)^\mu 
{}_2F_1\Big(\di{1,\mu}{1+\mu}\Big| \frac{\beta}{a+\beta}\Big)\;.
\label{Hilbert-2F1}
\end{align}
\end{proposition}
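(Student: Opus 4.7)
The plan is to reduce (\ref{Hilbert-2F1}) to a classical Mellin evaluation combined with an Euler integral representation of ${}_2F_1$ by two successive elementary changes of variables. First, I would substitute $u=\beta/(\beta+x)$, a smooth monotone bijection $(0,\infty)\to(0,1)$ sending the singularity $x=a$ to $u_0:=\beta/(\beta+a)\in(0,1)$ and satisfying $x-a=(\beta+a)(u_0-u)/u$. After further rescaling $v=u/u_0$ and dividing both sides by $(\beta+a)^{\mu-1}$, tracking the Jacobian $dx=-(\beta/u^2)\,du$ should reduce the identity to
\begin{equation*}
\frac{\mathcal{H}_a^{\infty}\big[(\beta+\bullet)^{\mu-1}\big]}{(\beta+a)^{\mu-1}}
= \frac{1}{\pi}\,\mathrm{p.v.}\!\int_0^{1/u_0}\!\frac{dv}{v^{\mu}(1-v)}.
\end{equation*}
Because both substitutions are smooth and monotone near the singularity, the Cauchy principal value at $x=a$ passes cleanly into one at $v=1$.

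Next, for $0<\mu<1$ I would split the resulting integral as the full range $(0,\infty)$ minus the tail $(1/u_0,\infty)$. The first piece is the classical Mellin evaluation $\mathrm{p.v.}\int_0^\infty v^{s-1}/(1-v)\,dv = \pi\cot(\pi s)$ applied with $s=1-\mu$, giving $-\pi\cot(\pi\mu)$. For the tail, the reciprocal substitution $w=1/v$ yields $-\int_0^{u_0}w^{\mu-1}/(1-w)\,dw$, and the rescaling $w=u_0 s$ turns this into $-u_0^\mu\int_0^1 s^{\mu-1}/(1-u_0 s)\,ds$, which is precisely Euler's integral representation and equals $-(u_0^\mu/\mu)\,{}_2F_1(1,\mu;1+\mu;u_0)$. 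Combining the two contributions and dividing by $\pi$ should produce (\ref{Hilbert-2F1}) throughout $0<\mu<1$.

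To cover the remaining range $\mu<0$ (with $\mu=0$ excluded), I would invoke analytic continuation in $\mu$. The left-hand side is holomorphic in $\{\mathrm{Re}\,\mu<1\}$ since the integrand decays like $x^{\mathrm{Re}\,\mu-2}$ at infinity while the singularity at $x=a$ is absorbed by the principal value. The right-hand side is meromorphic on the same strip, and using $(1/(\mu\pi))\,{}_2F_1(1,\mu;1+\mu;u_0)=\sum_{n\ge0}u_0^n/(\pi(\mu+n))$ one sees that the apparent poles at $\mu=0,-1,-2,\ldots$ coming from $-\cot(\pi\mu)$ are cancelled (with the exception of $\mu=0$ itself, which is excluded), so the right-hand side is actually holomorphic on $\{\mathrm{Re}\,\mu<1\}\setminus\{0\}$. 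The main obstacle I anticipate is not the computation itself but the careful bookkeeping of the Cauchy principal value through the successive substitutions — in particular the justification of the split in step two, where the auxiliary integral $\int_0^\infty v^{-\mu}/(1-v)\,dv$ is only conditionally convergent and must be combined with the tail so that no spurious contribution arises at $v=\infty$.
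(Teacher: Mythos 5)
Your proof is correct, but it takes a genuinely different route from the paper. You transform the principal-value integral by $u=\beta/(\beta+x)$ and $v=u/u_0$ with $u_0=\beta/(\beta+a)$ into $\frac{1}{\pi}\,\mathrm{p.v.}\int_0^{1/u_0}v^{-\mu}(1-v)^{-1}dv$, split off the classical Mellin evaluation $\mathrm{p.v.}\int_0^\infty v^{-\mu}(1-v)^{-1}dv=-\pi\cot(\pi\mu)$, and recognise the remaining piece, after $w=1/v$ and $w=u_0 s$, as Euler's integral for $\tfrac{1}{\mu}{}_2F_1(1,\mu;1+\mu;u_0)$; the range $\mu<0$ is then reached by analytic continuation in $\mu$, the poles of $-\cot(\pi\mu)$ at nonpositive integers being cancelled against $\sum_{n\geq0}u_0^{\,n}/(\pi(\mu+n))$. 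The paper instead evaluates the two halves of the truncated Hilbert transform by the antiderivatives (\ref{HyperG+}), (\ref{HyperG-}) expressed through ${}_2F_1$, and the whole difficulty is concentrated in the $\epsilon\to0$ limit at the singular point, which is controlled by the Ponnusamy--Vuorinen estimate (\ref{Ponnusamy}) for zero-balanced hypergeometric functions, the cotangent emerging as $\psi(\mu)-\psi(1-\mu)$. Your route buys elementarity: it needs no asymptotic lemma and no cancellation of logarithmic divergences, since the principal value is handled once and for all inside the standard cotangent integral; the price is the two-stage structure (direct proof only for $0<\mu<1$, then identity-theorem continuation, where you must verify holomorphy of the left-hand side in $\mathrm{Re}\,\mu<1$ and the pole cancellation on the right). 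Two small remarks: your worry about conditional convergence at $v=\infty$ is unfounded for $0<\mu<1$, since the integrand decays like $v^{-1-\mu}$ and is absolutely integrable there, so the only principal value is at $v=1$; and the invariance of the principal value under your substitutions does hold, and deserves the one-line justification that the changes of variables are $C^1$ with nonvanishing derivative at the singular point while the accompanying density is smooth there, so the asymmetry of the transformed excision window contributes only $O(\epsilon)$.
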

\begin{proof}
We use the following indefinite integrals:
\begin{subequations}
\begin{align}
\int dx \frac{(\beta +x)^{\mu-1}}{x+c}&=-\frac{(\beta+x)^{\mu-1}}{1-\mu} \;
{}_2F_1\Big(\di{1,1-\mu}{2-\mu}\Big|
\frac{-c+\beta}{x+\beta}\Big)\;, 
&x&>-c\;,
\label{HyperG+}
\\
\int dx \frac{(\beta +x)^{\mu-1}}{a-x}&=\frac{(\beta+x)^{\mu-1}}{\mu} 
\frac{\beta+x}{\beta+a}
\;
{}_2F_1\Big(\di{1,\mu}{1+\mu}\Big| \frac{x+\beta}{a+\beta}\Big)\;,
 &x &< a\;.
\label{HyperG-}
\end{align}
\end{subequations}
This is proved via $x$-differentiation using 
$\frac{d}{dx}{}_2F_1\big(\di{\alpha,\;\beta}{\gamma}\big|x\big)=
\frac{\alpha\beta}{\gamma}
{}_2F_1\big(\di{\alpha+1,\;\beta+1}{\gamma+1}\big|x\big)$
and use of the recursion relations \cite[\S 9.137]{Gradsteyn:1994??} for 
the hypergeometric function.
With a large cut-off $\Lambda^2$ we have for $\mu<1$ 
\begin{subequations}
\label{Hilbert-proof}
\begin{align} 
&\pi \mathcal{H}_a^{\infty}\big[(\beta+\bullet)^{\mu-1}\big]
= 
\lim_{\epsilon\to 0,\Lambda^2\to \infty}
\Big\{
-\int_0^{a-\epsilon} \!\!\! dx \;\frac{(\beta {+}x)^{\mu-1}}{a-x}
+
\int_{\epsilon}^{\Lambda^2-a} \!\!\! dx
\;\frac{(\beta {+}a{+}x)^{\mu-1}}{x}
\Big\}
\nonumber
\\
&= \lim_{\epsilon\to 0,\Lambda^2\to \infty}\Big\{ 
-\frac{(\beta+x)^{\mu-1}}{\mu} \frac{(\beta+x)}{(\beta+a)}
{}_2F_1\Big(\di{1,\mu}{1+\mu}\Big| \frac{\beta+x}{\beta+a}\Big)
\Big|_0^{a-\epsilon}
\nonumber
\\
&\qquad 
-\frac{(\beta+a+x)^{\mu-1}}{1-\mu} 
{}_2F_1\Big(\di{1,1-\mu}{2-\mu}\Big| \frac{\beta+a}{\beta+a+x}\Big)
\Big|_{\epsilon}^{\Lambda^2-a} \Big\}
\nonumber
\\
&= 
\frac{\beta^\mu}{\mu(\beta+a)}
{}_2F_1\Big(\di{1,\mu}{1+\mu}\Big| \frac{\beta}{\beta+a}\Big)
\label{Ponnusamy-final}
\\
&+ \lim_{\epsilon\to 0}
\Big\{
- \frac{(\beta+a-\epsilon)^\mu}{\beta+a}\,
B(1,\mu)\, {}_2F_1\Big(\di{1,\mu}{1+\mu}\Big| 
\frac{\beta+a-\epsilon}{\beta+a}\Big)
\nonumber
\\
& \qquad\qquad +(\beta+a+\epsilon)^{\mu-1} B(1,1-\mu)
{}_2F_1\Big(\di{1,1-\mu}{2-\mu}\Big| \frac{\beta+a}{\beta+a+\epsilon}\Big)
\Big\}\;,
\label{Ponnusamy-eps}
\end{align}
\end{subequations}
where the special values $B(1,1-\mu)=\frac{1}{1-\mu}$ and 
$B(1,\mu)=\frac{1}{\mu}$ for the Beta function have been used.
The limit $\epsilon\to 0$ is controlled by the following result in
\cite{Ponnusamy:1997??} (already claimed, but not proved, in Ramanujan's
notebooks) for zero-balanced hypergeometric functions:
If $0<\alpha,\beta,x \leq 1$, then
\begin{align}
-\psi(\alpha)-\psi(\beta)-2\gamma&<
B(\alpha,\beta)
{}_2F_1\Big(\di{\alpha,\;\beta}{\alpha+\beta}\Big|1-x\Big)
+\log(x)
\nonumber
\\*
&
< -\psi(\alpha)-\psi(\beta)-2\gamma+\frac{x}{1-x} \log \frac{1}{x}\;.
\label{Ponnusamy}
\end{align}
Here $\psi(x)=\frac{\Gamma'(x)}{\Gamma(x)}$, and $\gamma=-\psi(1)$ is the
Euler-Mascheroni constant. Since 
\begin{align*}
\lim_{\epsilon\to 0}
\Big\{ 
-\frac{(\beta+a-\epsilon)^\mu}{\beta+a} 
\log \Big(\frac{\epsilon}{\beta+a}\Big)
+\frac{(\beta+a+\epsilon)^\mu}{(\beta+a+\epsilon)}
\log \Big(\frac{\epsilon}{\beta+a+\epsilon}\Big)\Big\}=0\;,
\end{align*}
we can add the corresponding $\log$-terms to 
(\ref{Ponnusamy-eps}) and use (\ref{Ponnusamy}) to conclude that the
two lines (\ref{Ponnusamy-eps}) converge in the 
limit $\epsilon\to 0$ to 
\begin{align}
\lim_{\epsilon\to 0} \textup{(\ref{Ponnusamy-eps})}
= (\beta+a)^{\mu-1} \big( \psi(\mu)-\psi(1-\mu)\big)
= -(\beta+a)^{\mu-1} \pi \cot (\pi \mu)\;,
\tag{\ref{Hilbert-proof}c}
\label{Ponnusamy-eps-prime}
\end{align}
where \cite[\S 8.365.8]{Gradsteyn:1994??} has been used.
This finishes the proof.
\hfill $\square$%
\end{proof}

Inserting (\ref{Hilbert-2F1}) for $\beta=1$ and
$\mu=|\lambda|,\frac{|\lambda|}{1-2|\lambda|}$, respectively, 
into (\ref{Hilbert-minmax}) gives the following
bounds valid for any $f\in \mathcal{K}_\lambda$: 
\begin{align}
&-  \cot (|\lambda|\pi) +\frac{1}{|\lambda|\pi (1+a)^{|\lambda|}}\;
{}_2F_1\Big(\di{1,|\lambda|}{1+|\lambda|}\Big| \frac{1}{1+a}\Big)
\nonumber
\\*
&\leq
\frac{\mathcal{H}^{\!\infty}_a[e^{f(\bullet)}]}{e^{f(a)}}
\leq 
-  \cot \Big(\frac{|\lambda|\pi}{1-2|\lambda|}\Big)
+\frac{1-2|\lambda|}{|\lambda|\pi (1+a)^{\frac{|\lambda|}{1-2|\lambda|}}}
\;{}_2F_1\Big(\di{1,\frac{|\lambda|}{1-2|\lambda|}}{1+\frac{|\lambda|}{1-2|\lambda|}
}\Big| \frac{1}{1+a}\Big)\;.
\end{align}
Together with (\ref{efg-5}) taken at $x=0$ we
obtain for the function $Rf$ defined in (\ref{Rf}) the following
bounds:
\begin{align}
&|\lambda|\pi a \cot \Big(\frac{|\lambda|\pi}{1-2|\lambda|}\Big)
+
\frac{1+a}{(1+a)^{\frac{|\lambda|}{1-2|\lambda|}}}
-\frac{(1-2|\lambda|)a}{(1+a)^{\frac{|\lambda|}{1-2|\lambda|}}}
{}_2F_1\Big(\di{1,\frac{|\lambda|}{1-2|\lambda|}}{
1+\frac{|\lambda|}{1-2|\lambda|}
}\Big| \frac{1}{1+a}\Big)
\nonumber
\\
&\leq (Rf)(a)
\leq 
|\lambda|\pi a  \cot (|\lambda|\pi) 
+\frac{1+a}{(1+a)^{|\lambda|}}
-\frac{a}{(1+a)^{|\lambda|}}
{}_2F_1\Big(\di{1,|\lambda|}{1+|\lambda|}\Big| \frac{1}{1+a}\Big)\;.
\label{Rfbound1}
\end{align}
Since 
${}_2F_1\big(\di{1,|\lambda|}{1+|\lambda|}\Big| \frac{1}{1+a}\big)\geq
1$ we have 
$\frac{a}{(1+a)^{|\lambda|}}
-\frac{a}{(1+a)^{|\lambda|}}
{}_2F_1\big(\di{1,|\lambda|}{1+|\lambda|}\Big| \frac{1}{1+a}\big)
\leq 0$. This means that the upper bound is smaller than 
$|\lambda|\pi a  \cot (|\lambda|\pi) 
+\frac{1}{(1+a)^{|\lambda|}}\leq 
|\lambda|\pi a  \cot (|\lambda|\pi)+1$.
In the lower bound we use \cite[\S 9.137.12]{Gradsteyn:1994??} to 
write the hypergeometric function as 
${}_2F_1\Big(\di{1,\frac{|\lambda|}{1-2|\lambda|}}{
1+\frac{|\lambda|}{1-2|\lambda|}
}\Big| \frac{1}{1+a}\Big)
=1+\frac{|\lambda|}{(1-|\lambda|)(1+a)}\;
{}_2F_1\Big(\di{1,1+\frac{|\lambda|}{1-2|\lambda|}}{
2+\frac{|\lambda|}{1-2|\lambda|}
}\Big| \frac{1}{1+a}\Big)$. This gives,
partly expressed in terms of 
$|\lambda_r|:=\frac{|\lambda|}{1-2|\lambda|}$,  
\begin{align}
&(|\lambda|\pi a) \cot \Big(\frac{|\lambda|\pi}{1-2|\lambda|}\Big)
+ 1 + |\lambda| F_{\lambda_r}(a) 
\leq (Rf)(a) \leq |\lambda|\pi a  \cot (|\lambda|\pi) +1\;,
\quad 
\text{where} \nonumber \\
&
F_{\lambda_r}(a):=
\frac{1+2|\lambda_r|}{|\lambda_r|}
\Big(\frac{1+|\lambda_r|a }{(1+a)^{|\lambda_r|}}-1\Big)
+\frac{\hat{F}_{\lambda_r}(a)}{(1+a)^{|\lambda_r|}}\;,
\label{Flra}
\\
& 
\hat{F}_{\lambda_r}(a)
:= 
(1-2|\lambda_r|)a
-\frac{a}{(1+|\lambda_r|)(1+a)}\;
{}_2F_1\Big(\di{1,1+|\lambda_r|}{2+|\lambda_r|}
\Big| \frac{1}{1+a}\Big)\;.
\nonumber
\end{align}
We have to show
that $F_{\lambda_r}(a)$ is of positive mean for a certain integral. This
is easy to check for a computer, but we want to make it rigorous. 
For a lower bound we can remove the numerator $(1+2|\lambda_r|)$ in 
the middle line 
of (\ref{Flra}). The remaining piece 
$\frac{1}{|\lambda_r|}
\Big(\frac{1+|\lambda_r|a }{(1+a)^{|\lambda_r|}}-1\Big)$ is positive 
for $0 < |\lambda_r| <1$ by a particular case of Bernoulli's inequality. 
Then its $|\lambda_r|$-derivative reads
\[
\frac{d}{d|\lambda_r|}
\Big(\frac{1}{|\lambda_r|}
\Big(\frac{1+|\lambda_r|a }{(1+a)^{|\lambda_r|}}-1\Big)\Big)
=\frac{-1 + (1+a)^{|\lambda_r|}
-
(1+|\lambda_r|a) \log((1+a)^{|\lambda_r| }) }{
|\lambda_r|^2(1+a)^{|\lambda_r|}}\;.
\]
Using again Bernoulli's inequality, the  numerator is 
$\leq x-(1+x)\log(1+x)$ with $x:= (1+a)^{|\lambda_r|}-1$. The function 
$x-(1+x)\log(1+x)$ vanishes at $x=0$ and has negative derivative 
for any $x>0$. Consequently, 
$\frac{1}{|\lambda_r|}
\big(\frac{1+|\lambda_r|a }{(1+a)^{|\lambda_r|}}-1\big)$ is 
monotonously decreasing in $|\lambda_r|$ (hence in $|\lambda|)$ for 
any fixed $a$. 

We expand $\hat{F}_{\lambda_r}(a)$ in the last line of (\ref{Flra}) 
into a power series and take the 
$|\lambda_r|$-derivative:
\begin{align*}
\frac{d}{d|\lambda_r|}\hat{F}_{\lambda_r}(a)
&=-2a
+ a \sum_{k=0}^\infty \frac{1}{(k+1+|\lambda_r|)^2}
\frac{1}{(1+a)^{k+1}} < a\Big(-2+\frac{\pi^2}{6}\Big)\;,
\\
\frac{d}{d|\lambda_r|}\Big(\frac{\hat{F}_{\lambda_r}(a)}{(1+a)^{|\lambda_r|}}\Big)
&< \frac{a}{(1+a)^{|\lambda_r|}}
\Big(-2+\frac{\pi^2}{6}
- \frac{\log(1+a)}{a}\hat{F}_{\lambda_r}(a)\Big)\;.
\end{align*}
Hence also $\hat{F}_{\lambda_r}(a)$ is decreasing in $|\lambda_r|$,
and sufficient for extending this decrease to $F_{\lambda_r}(a)$ is 
$\hat{F}_{\lambda_r}(a)\geq -(2-\frac{\pi^2}{6})$. Using identities and recursion formulae such as \cite[\S 9.137.14+17
\S 9.131.1]{Gradsteyn:1994??} for the hypergeometric function it is
straightforward to compute and rearrange the derivatives of
$\hat{F}_\lambda$:
\begin{subequations}
\begin{align}
\hat{F}'_{\lambda_r}(a)
&=
1-2|\lambda_r| 
-\frac{1}{(1+a)^2}
\frac{1}{(1+\lambda_r|)(2+|\lambda_r|)}\;
{}_2F_1\Big(\di{2,1+|\lambda_r|}{3+|\lambda_r|}
\Big| \frac{1}{1+a}\Big)\;,
\label{F-prime}
\\
\hat{F}''_{\lambda_r}(a)
&=
\frac{2}{a(1+a)^2}
\frac{1}{(1+|\lambda_r|)(2+|\lambda_r|)}\;
{}_2F_1\Big(\di{2,|\lambda_r|}{3+|\lambda|_r}
\Big| \frac{1}{1+a}\Big)\;.
\label{F-primeprime}
\end{align}
\end{subequations}
From 
(\ref{F-primeprime}) we conclude that $\hat{F}$ is convex in $a$ for
any fixed $|\lambda_r|$, and  
(\ref{F-prime}) shows that $\hat{F}$ starts negative near $a=0$ and 
diverges (in case of $|\lambda_r|<\frac{1}{2}$) 
to $+\infty$ for $a\to \infty$.
Together with convexity, there is a 
unique zero $\hat{F}_{\lambda_r}(t_\lambda)=0$ at $t_\lambda>0$ and
a single and unique global=local minimum in $[0,t_\lambda]$.
One can check numerically or by
estimating the power series that $\hat{F}_{-\frac{1}{4}}(\frac{3}{2})>0$ and
$\hat{F}'_{-\frac{1}{4}}(\frac{1}{5})<0$. By convexity,
$\hat{F}_{-\frac{1}{4}}$ lies above any tangent, and the intersection
of the tangent $\hat{F}_{-\frac{1}{4}}(\frac{1}{5})+(t-\frac{1}{5}) 
\hat{F}'_{-\frac{1}{4}}(\frac{1}{5})$ with the tangent 
$\hat{F}_{-\frac{1}{4}}(\frac{3}{2})+(t-\frac{3}{2}) 
\hat{F}'_{-\frac{1}{4}}(\frac{3}{2})$ located at $(0.50048,-0.296723)$ 
gives a lower bound for the global minimum. This value 
confirms $\hat{F}_{\lambda_r}(a)\geq -(2-\frac{\pi^2}{6})$ first for 
$|\lambda_r|=\frac{1}{4}$ and then, since  $\hat{F}_{\lambda_r}(a)$
decreases in $|\lambda_r|$, for all $0\leq |\lambda_r|\leq \frac{1}{4}$.
We have thus established:
\begin{lemma}
Let $-\frac{1}{6}\leq \lambda\leq 0$ and $f\in
\mathcal{K}_\lambda$. Then
\begin{align}
&|\lambda|\pi a \cot \Big(\frac{|\lambda|\pi}{1-2|\lambda|}\Big)
+ 1 + |\lambda| F(a) 
\leq (Rf)(a) \leq |\lambda|\pi a  \cot (|\lambda|\pi) +1\;,
\qquad 
\text{where} \nonumber \\
&
F(a):=
\frac{4+a }{(1+a)^{\frac{1}{4}}}-4
+\frac{1}{(1+a)^{\frac{1}{4}}}
\Big( \frac{a}{2}
-\frac{4a}{5(1+a)}\;
{}_2F_1\Big(\di{1,\frac{5}{4}}{\frac{9}{4}}
\Big| \frac{1}{1+a}\Big)\Big)\;.
\label{Fa}
\end{align}
\label{Lemma:Fa}
\end{lemma}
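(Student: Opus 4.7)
The plan is to start from the bounds on $(Rf)(a)$ already recorded in (\ref{Rfbound1}) and reduce the lemma to a single one-parameter monotonicity statement. The upper bound is the cheap half: since ${}_2F_1\Big(\di{1,|\lambda|}{1+|\lambda|}\Big|\tfrac{1}{1+a}\Big)\geq 1$ (positive series coefficients evaluated on $[0,1]$), the two $\tfrac{a}{(1+a)^{|\lambda|}}$ terms in (\ref{Rfbound1}) combine into a non-positive contribution, so the right side of (\ref{Rfbound1}) is bounded above by $|\lambda|\pi a\cot(|\lambda|\pi)+(1+a)^{-|\lambda|}\leq |\lambda|\pi a\cot(|\lambda|\pi)+1$, which is the desired right-hand inequality.

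For the lower bound, I would apply the Gauss contiguous relation \cite[\S 9.137.12]{Gradsteyn:1994??} to the hypergeometric factor on the left of (\ref{Rfbound1}), producing exactly the decomposition (\ref{Flra}). With $|\lambda_r|:=\tfrac{|\lambda|}{1-2|\lambda|}$, the hypothesis $|\lambda|\leq\tfrac{1}{6}$ of the lemma corresponds precisely to $|\lambda_r|\leq\tfrac{1}{4}$, and a direct substitution shows that $F_{1/4}$ reproduces the function $F$ displayed in (\ref{Fa}). The task therefore reduces to proving that, at each fixed $a\geq 0$, the function $|\lambda_r|\mapsto F_{\lambda_r}(a)$ is monotonically decreasing on $[0,\tfrac{1}{4}]$.

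I would handle the two summands of $F_{\lambda_r}$ separately. For the ``Bernoulli piece'' $\tfrac{1}{|\lambda_r|}\bigl(\tfrac{1+|\lambda_r|a}{(1+a)^{|\lambda_r|}}-1\bigr)$, differentiation in $|\lambda_r|$ isolates $x-(1+x)\log(1+x)$ in the numerator with $x=(1+a)^{|\lambda_r|}-1\geq 0$; this quantity vanishes at $x=0$ and has strictly negative derivative, hence is $\leq 0$. For the second piece $\tfrac{\hat{F}_{\lambda_r}(a)}{(1+a)^{|\lambda_r|}}$, the derivative formula stated in the excerpt together with the estimate $\sum_{k\geq 0}(k+1+|\lambda_r|)^{-2}<\tfrac{\pi^2}{6}$ reduces monotonicity to the uniform pointwise bound $\hat{F}_{\lambda_r}(a)\geq -(2-\tfrac{\pi^2}{6})$. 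Since $\hat{F}_{\lambda_r}(a)$ itself is decreasing in $|\lambda_r|$ by the same $\pi^2/6$-estimate applied to $\hat{F}'$, it suffices to verify this bound at the endpoint $|\lambda_r|=\tfrac{1}{4}$.

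The final step, the pointwise bound on $\hat{F}_{-1/4}$, is where the main difficulty lies. Convexity of $a\mapsto\hat{F}_{\lambda_r}(a)$ from (\ref{F-primeprime}) combined with (\ref{F-prime}) (negative near $0$, growing to $+\infty$ for $|\lambda_r|<\tfrac{1}{2}$) yields a unique interior global minimum. I would sandwich this minimum from below by intersecting the tangent lines to the graph at $a=\tfrac{1}{5}$ and $a=\tfrac{3}{2}$. The obstacle is to make this tangent-intersection argument rigorous: the three numerical quantities $\hat{F}_{-1/4}(\tfrac{3}{2})$, $\hat{F}'_{-1/4}(\tfrac{1}{5})$ and $\hat{F}'_{-1/4}(\tfrac{3}{2})$ must be estimated by truncating the defining hypergeometric series and majorising the remainder geometrically, and the resulting intersection value must then be shown to lie strictly above $-(2-\tfrac{\pi^2}{6})$. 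Once this is done, convexity transfers the bound to all $a\geq 0$, monotonicity in $|\lambda_r|$ extends it to all $|\lambda_r|\in[0,\tfrac{1}{4}]$, and reassembling the pieces produces the claimed two-sided inequality.
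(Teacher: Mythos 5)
Your proposal is correct and follows essentially the same route as the paper: the upper bound from ${}_2F_1\geq 1$, the lower bound via the contiguous relation \cite[\S 9.137.12]{Gradsteyn:1994??}, monotonicity in $|\lambda_r|$ of the two pieces (the Bernoulli argument and the $\pi^2/6$ estimate reducing everything to $\hat F_{\lambda_r}(a)\geq -(2-\tfrac{\pi^2}{6})$), and the convexity/tangent-intersection verification at $|\lambda_r|=\tfrac14$, which the paper also settles semi-numerically. The only slip is your claim that $F_{1/4}$ from (\ref{Flra}) ``reproduces'' $F$ in (\ref{Fa}): the displayed $F$ has the prefactor $(1+2|\lambda_r|)/|\lambda_r|=6$ replaced by $1/|\lambda_r|=4$, a replacement that is legitimate (and in the favourable direction) only because the Bernoulli piece is nonnegative --- precisely the paper's ``remove the numerator $(1+2|\lambda_r|)$'' step --- and this factor-free version is in fact what your subsequent monotonicity argument treats, so the fix is a one-line remark rather than a genuine gap.
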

We prove:
\begin{lemma}
The function $F(a)$ defined in (\ref{Fa}) has the following
properties:
\begin{enumerate}
\item $F(a)$ is monotonously increasing for $a\geq \frac{1}{2}$.

\item $F(a)$ is convex for $0\leq a \leq \frac{9}{4}$.

\item $F(a)$ is concave for $a \geq \frac{5}{2}$.

\item $|F''(a)|<\frac{1}{10}$ for $\frac{9}{4} \leq a \leq
  \frac{5}{2}$.

\item $F(a) \geq 0$ for $a\geq \frac{4}{5}$.

\item $F(a) \geq -\frac{1}{5}$ for all $a\geq 0$.
\end{enumerate}
\end{lemma}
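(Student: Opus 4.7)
The plan is to derive closed-form expressions for $F'(a)$ and $F''(a)$ by applying the recursion formulae for ${}_2F_1$ from \cite[\S 9.137]{Gradsteyn:1994??} (analogous to the derivation of (\ref{F-prime}) and (\ref{F-primeprime}) at general $|\lambda_r|$, specialised to $|\lambda_r|=\frac{1}{4}$), and then to verify the six properties by elementary calculus arguments that reduce each continuous statement to a finite list of numerical checks. After differentiation, $F''(a)$ appears as a sum of algebraic pieces (from $\frac{4+a}{(1+a)^{1/4}}-4$ and from differentiating the prefactors $(1+a)^{-1/4}$) plus a manifestly positive ${}_2F_1$-piece (coming from $\hat{F}''_{1/4}$, cf.\ (\ref{F-primeprime})). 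All hypergeometric arguments $\frac{1}{1+a}\in(0,1]$ are bounded away from $1$ on the intervals that matter.

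I would tackle items 2, 3, 4 first, since they all concern $F''$. Near $a=0$ the contribution $\hat{F}''_{1/4}(a)\sim\frac{1}{a}$ dominates, so convexity there is automatic; the sign of $F''$ changes only as the $\hat{F}''_{1/4}$-piece decays and the negative algebraic pieces take over. Truncated power-series evaluations of ${}_2F_1$ at $a=\frac{9}{4}$ and $a=\frac{5}{2}$ (with explicit geometric-tail bounds, valid since the coefficients are positive and the argument is at most $\frac{4}{13}$) pin down the sign of $F''$ at these endpoints; a sign analysis of each piece of $F''$ in $a$ then promotes the pointwise signs into the required interval statements. Item 4 is the same computation retaining quantitative error bounds on the compact $[\frac{9}{4},\frac{5}{2}]$.

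Items 1, 5, 6 then follow by integration. On $[\frac{1}{2},\frac{9}{4}]$ convexity yields $F'(a)\geq F'(\frac{1}{2})$, so monotonicity reduces to the numerical check $F'(\frac{1}{2})\geq 0$; note that $F'(0^+)=-\infty$ because the ${}_2F_1$ appearing in $\hat{F}'_{1/4}$ is zero-balanced and diverges logarithmically at argument $1$, so $F$ genuinely drops before turning around. On $[\frac{9}{4},\frac{5}{2}]$ item 4 gives $F'(a)\geq F'(\frac{9}{4})-\frac{1}{40}$, and on $[\frac{5}{2},\infty)$ concavity together with the explicit asymptotic $F'(a)\sim\frac{9}{8\,a^{1/4}}>0$ yields $F'(a)\geq 0$. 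Item 5 is then item 1 plus the single numerical check $F(\frac{4}{5})\geq 0$. Item 6 reduces, via items 1 and 5, to bounding the minimum of $F$ on the compact $[0,\frac{4}{5}]$: convexity there permits the tangent-intersection trick already used for $\hat{F}_{-1/4}$ at the end of Section~3, with tangents taken at, say, $a=\frac{1}{5}$ and $a=\frac{1}{2}$ (avoiding the bad endpoint $a=0$), whose intersection lies above $-\frac{1}{5}$.

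The main obstacle is making the numerical checks rigorous. The delicate evaluations are at $a=\frac{1}{5},\frac{1}{2},\frac{4}{5}$, where the hypergeometric arguments $\frac{1}{1+a}=\frac{5}{6},\frac{2}{3},\frac{5}{9}$ are not particularly small; however, the ${}_2F_1$ factors that arise either are convergent at argument $1$ (parameters with $c-a-b>0$) or come multiplied by a prefactor that tames the logarithmic divergence, so a geometric-tail bound past a fixed index threshold is enough for the crude inequalities required. The bookkeeping is lengthy but routine.
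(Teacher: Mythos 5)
Your overall architecture mirrors the paper's (closed forms for $F'$, $F''$ via the ${}_2F_1$ recursions, convexity/concavity plus monotonicity, then tangent-line arguments for the lower bounds), but there is a concrete numerical failure in your treatment of item 6. The slack in that item is tiny: the minimum of $F$ is attained near $a\approx 0.24$ and is only about $-0.19$, so any lower-bounding device may lose at most $\approx 0.01$. The tangent-intersection trick loses an amount governed by how far the tangency points sit from the minimiser, and your second point $a=\tfrac12$ is much too far away. Numerically, $F(\tfrac15)\approx -0.189$, $F'(\tfrac15)\approx -0.11$, $F(\tfrac12)\approx -0.127$, $F'(\tfrac12)\approx +0.40$; the two tangent lines intersect near $t\approx 0.316$, where their common value is $\approx -0.2015$, i.e.\ \emph{below} $-\tfrac15$. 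So the specific claim ``whose intersection lies above $-\tfrac15$'' is false, and item 6 is not established by your choice. The remedy is exactly the paper's: keep the first tangent at $\tfrac15$ but take the second one close to the minimiser (the paper uses $\tfrac14$, where $F'(\tfrac14)\approx +0.04$), which brings the intersection value back above $-\tfrac15$; there is no freedom to move the second point out to $\tfrac12$.

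A secondary, smaller gap: for items 2--4 you determine the sign of $F''$ only at the endpoints $a=\tfrac94$ and $a=\tfrac52$ and then appeal to ``a sign analysis of each piece'' to propagate these signs to whole intervals, and you describe item 4 as the same computation ``retaining quantitative error bounds''. As stated this is not yet a proof: convexity on all of $[0,\tfrac94]$, concavity on $[\tfrac52,\infty)$ and the two-sided bound $|F''|<\tfrac1{10}$ on $[\tfrac94,\tfrac52]$ require envelopes for $F''$ valid uniformly on the intervals, not pointwise evaluations. The paper obtains these by monotonicity of the hypergeometric factors in $a$: bounding each positive ${}_2F_1$ from below by $1$ (for the lower envelope) and from above by its value at the left endpoint of the interval in question (for the upper envelope), which reduces everything to explicit elementary functions whose signs and sizes can be read off. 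Your plan can be completed along these lines, but the propagation step needs to be made explicit; endpoint sign checks alone do not suffice.
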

\begin{proof}
Recall that $F(a)= -\frac{1}{|\lambda_r|}+
\frac{1}{(1+a)^{|\lambda_r|}}\big(\frac{1}{|\lambda_r|}+a
+\hat{F}_{\lambda_r}(a)\big)\big|_{|\lambda_r|=\frac{1}{4}}$.
Differentiation gives with (\ref{F-prime})
\begin{align*}
F'(a)
&=\frac{1}{(1+a)^{\frac{5}{4}}}
\Big( 
\frac{1}{2}+\frac{9}{8}a
-\frac{16}{45(1+a)}\;
{}_2F_1\Big(\di{2,\frac{5}{4}}{\frac{13}{4}}
\Big| \frac{1}{1+a}\Big)
+ \frac{a}{5(1+a)}\;
{}_2F_1\Big(\di{1,\frac{5}{4}}{\frac{9}{4}}
\Big| \frac{1}{1+a}\Big)
\Big)\;.
\end{align*}
This implies the following estimate valid for $a\geq \frac{1}{2}$,
\begin{align*}
F'(a) \geq 
\frac{1}{(1+a)^{\frac{5}{4}}}
\Big( 
\frac{1}{2}+\frac{9}{8}a
-\underbrace{\frac{32}{135}\;
{}_2F_1\Big(\di{2,\frac{5}{4}}{\frac{13}{4}}
\Big| \frac{2}{3}\Big)}_{=0.507407}
+ \frac{a}{5(1+a)}\;
{}_2F_1\Big(\di{1,\frac{5}{4}}{\frac{9}{4}}
\Big| \frac{1}{1+a}\Big)
\Big)\;,
\end{align*}
which shows that $F$ is monotonously increasing for 
all $a\geq \frac{1}{2}$. The second derivative reads with 
(\ref{F-prime})+(\ref{F-primeprime})
\begin{align*}
F''(a)&=
\frac{1}{(1+a)^{\frac{9}{4}}}
\Big( \frac{16-9a}{32}
+\frac{8-a}{9(1+a)}\;
{}_2F_1\Big(\di{2,\frac{5}{4}}{\frac{13}{4}}
\Big| \frac{1}{1+a}\Big)
\nonumber
\\
&+\frac{32}{9\cdot 13a(1+a)}\;
{}_2F_1\Big(\di{2,\frac{5}{4}}{\frac{17}{4}}
\Big| \frac{1}{1+a}\Big)
-
 \frac{5a}{36(1+a)}\;
{}_2F_1\Big(\di{1,\frac{5}{4}}{\frac{13}{4}}
\Big| \frac{1}{1+a}\Big)
\Big)\;.
\end{align*}
Using ${}_2F_1\Big(\di{1,\frac{5}{4}}{\frac{13}{4}}
\Big| \frac{1}{1+a}\Big)\leq 
{}_2F_1\Big(\di{2,\frac{5}{4}}{\frac{13}{4}}
\Big| \frac{1}{1+a}\Big)$ and the lower bound $1$ for the
hypergeometric functions we have the following lower bound for $F''$:
\begin{align*}
F''(a)& \geq 
\frac{1}{(1+a)^{\frac{9}{4}}}
\Big( \frac{16-9a}{32}
+\frac{32-9a}{36(1+a)}
+\frac{32}{117 a(1+a)}\Big)\;.
\end{align*}
This proves that $F(a)$ is convex for all $0\leq a \leq 2.26204$, and
we have 
$F''(a) \geq -\frac{1}{10}$ for all $\frac{9}{4}\leq a \leq
\frac{5}{2}$. 

We derive the converse inequality for $a\geq \frac{9}{4}$ by splitting
the prefactor $\frac{8-a}{9(1+a)}$ at $\frac{9}{4}$. We estimate the
positive hypergeometric functions by its value at $\frac{9}{4}$ and
the negative hypergeometric functions by $1$:
\begin{align*}
a>\tfrac{9}{4}:\quad F''(a)& \leq 
\frac{1}{(1+a)^{\frac{9}{4}}}
\Big( \frac{16-9a}{32}
+\frac{\frac{9}{4}-a}{9(1+a)}
- \frac{5a}{36(1+a)}
\nonumber
\\*
&+\underbrace{\frac{512}{(117)^2}\;
{}_2F_1\Big(\di{2,\frac{5}{4}}{\frac{17}{4}}
\Big| \frac{4}{13}\Big)}_{=0.0458811}
+\underbrace{\frac{23}{117}\;
{}_2F_1\Big(\di{2,\frac{5}{4}}{\frac{13}{4}}
\Big| \frac{4}{13}\Big)}_{= 0.258398}
\Big)\;.
\end{align*}
This proves that $F$ is concave for $a\geq 2.48142$ and the upper
bound $F''(a)\leq 0.08$ for all $\frac{9}{4} \leq a \leq \frac{5}{2}$.

One has $F(1)=0.141693$ and then a good upper bound for $F(t_0)=0$
by the tangent to $F$ at 1, $F(1)+(\tilde{t}_0-1)F'(1)=0$. This shows
$t_0< \frac{4}{5}$. The tangent to $F$ at $\frac{1}{4}$ has positive
slope, the tangent at $\frac{1}{5}$ has negative slope. This means
that the value $F(t_m)$ at the intersection 
of these tangents
$F(\frac{1}{5})+(\tilde{t}_m-\frac{1}{5})F'(\frac{1}{5})=
F(\frac{1}{4})+(\tilde{t}_m-\frac{1}{4})F'(\frac{1}{4})$ gives a lower
bound for $F$. One finds $t_m= 0.223714$ and $F(t_m)= -0.190334$.
\hfill $\square$%
\end{proof}

We have now collected all information to prove:
\begin{lemma}
\begin{align}
F(a) 
\geq 
S(a) := \left\{ 
\begin{array}{c@{\qquad\text{for}\quad}c}
F(\frac{1}{5}) +(a-\frac{1}{5})
F'(\frac{1}{5})  & 0\leq a \leq \frac{1}{2}
\\
F(\frac{3}{2}) +(a-\frac{3}{2})
F'(\frac{3}{2})  & \frac{1}{2} < a < 6
\\
F(6) &  a \geq 6
\end{array}\right.
\label{Sa}
\end{align}
\label{Lemma:Sa}
\end{lemma}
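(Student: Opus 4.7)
My plan is to establish $F\geq S$ piece by piece, matching the three branches of $S$ and using the properties (1)--(6) of the previous lemma together with explicit numerical evaluations of $F$ and $F'$ at the anchor points $\tfrac{1}{5},\tfrac{3}{2},\tfrac{9}{4},\tfrac{5}{2},6$. For the outer pieces the argument is immediate: on $[0,\tfrac{1}{2}]$ and on $[\tfrac{1}{2},\tfrac{9}{4}]$, convexity of $F$ on $[0,\tfrac{9}{4}]$ (property 2) puts $F$ above its tangent at $\tfrac{1}{5}$ and at $\tfrac{3}{2}$ respectively, which are precisely the two affine branches of $S$; and for $a\geq 6$, monotonicity (property 1) gives $F(a)\geq F(6)=S(a)$ at once.

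The middle range $[\tfrac{9}{4},\tfrac{5}{2}]$ is where the real work is. I introduce $g(a):=F(a)-L(a)$ with $L$ the tangent line to $F$ at $\tfrac{3}{2}$, so that $g(\tfrac{3}{2})=g'(\tfrac{3}{2})=0$ and $g''=F''$. Convexity on $[\tfrac{3}{2},\tfrac{9}{4}]$ makes both $g$ and $g'$ nonnegative at $a=\tfrac{9}{4}$, and the key quantitative input is a lower bound $g(\tfrac{9}{4})\geq \tfrac{1}{100}$ obtained by evaluating $F(\tfrac{3}{2})$, $F'(\tfrac{3}{2})$, $F(\tfrac{9}{4})$ directly from their hypergeometric series with controlled truncation error. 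On $[\tfrac{9}{4},\tfrac{5}{2}]$, property (4) yields $|g''|\leq \tfrac{1}{10}$, hence $g'(a)\geq -\tfrac{1}{10}(a-\tfrac{9}{4})$ and therefore $g(a)\geq g(\tfrac{9}{4})-\tfrac{1}{20}(a-\tfrac{9}{4})^2\geq \tfrac{1}{100}-\tfrac{1}{320}>0$.

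For the upper range $[\tfrac{5}{2},6]$, concavity of $F$ (property 3) implies that $F$ dominates the secant $\sigma$ through the points $(\tfrac{5}{2},F(\tfrac{5}{2}))$ and $(6,F(6))$. Since both $\sigma$ and $L$ are affine, the comparison $\sigma\geq L$ on $[\tfrac{5}{2},6]$ reduces to the two endpoint inequalities $F(\tfrac{5}{2})\geq L(\tfrac{5}{2})$ and $F(6)\geq L(6)$, verified again by direct evaluation of the hypergeometric series defining $F$ (together with the already computed values of $F(\tfrac{3}{2})$ and $F'(\tfrac{3}{2})$).

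The main obstacle is the tightness of all these estimates: the three margins $F(\tfrac{9}{4})-L(\tfrac{9}{4})$, $F(\tfrac{5}{2})-L(\tfrac{5}{2})$, $F(6)-L(6)$ each come out to be only of order $10^{-2}$, which indicates that the anchor points $\tfrac{1}{5},\tfrac{3}{2},6$ are essentially optimal for this two-tangent-plus-plateau comparison. Rigour therefore hinges on carrying out the truncation error analysis for the four hypergeometric series arising at these anchor points with enough precision that each of the three margins is provably positive.
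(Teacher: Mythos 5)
Your proposal is correct and follows essentially the same route as the paper: convexity places $F$ above the tangents at $\tfrac15$ and $\tfrac32$ up to $a=\tfrac94$, the bound $|F''|\leq\tfrac1{10}$ bridges $[\tfrac94,\tfrac52]$ (your $g=F-L$ estimate is just the paper's parabola bound with the nonnegative term $g'(\tfrac94)$ discarded), concavity plus the secant--tangent comparison handles $[\tfrac52,6]$, and monotonicity gives $a\geq6$, with the same reliance on numerical evaluation of $F$, $F'$ at the anchor points. Only a caution on your margins: the decisive one, $F(6)-F(\tfrac32)-\tfrac92 F'(\tfrac32)$, is about $3\cdot10^{-3}$ rather than order $10^{-2}$, so the truncation-error control you flag is indeed where the remaining work lies.
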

\begin{proof}
The region $0\leq a \leq \frac{1}{2}$ follows from convexity of $F$,
the region $a\geq 6$ because $F$ is monotonously increasing for $a\geq
\frac{1}{2}$. In the intermediate region we have $F(a)\geq S(a)$ 
at least for $\frac{1}{2}\leq a \leq \frac{9}{4}$ because of convexity
of $F$. For $\frac{5}{2}\leq a\leq 6$ we know by concavity that 
\[
F(a)\geq \frac{(a-\tfrac{5}{2})  F(6)+(6-a)F(\tfrac{5}{2})}{6-\tfrac{5}{2}}
\qquad \text{for all } \tfrac{5}{2}\leq a \leq 6\;.
\]
Inserting the numerical values one checks that the secant 
$\frac{(a-\frac{5}{2})  F(6)+(6-a)F(\frac{5}{2})}{6-\frac{5}{2}}
$ lies above the tangent 
$F(\frac{3}{2}) +(a-\frac{3}{2})
F'(\frac{3}{2})$ for $\frac{5}{2}\leq a\leq 6$. There remains the gap 
$\frac{9}{4}\leq a\leq \frac{5}{2}$ where $F$ changes from convex to
concave. Using the bound $|F''(a)|<\frac{1}{10}$ in that region 
we have 
\[
F(a) \geq 
F(\tfrac{9}{4})
+(t-\tfrac{9}{4})F'(\tfrac{9}{4})-\frac{(t-\tfrac{9}{4})^2}{2}
\cdot \frac{1}{10}
\]
for all $\frac{4}{9}\leq a\leq \frac{5}{2}$. The parabola on the rhs 
lies above the tangent 
$F(\frac{3}{2}) +(a-\frac{3}{2})
F'(\frac{3}{2})$. \hspace*{\fill}$\square$%
\end{proof}

Observe that (\ref{Tfa}) implies $Tf(0)=0$ and 
\begin{align}
Tf'(b)&=-\frac{1}{1+b}+
|\lambda| \int_0^{\infty}\!
\frac{dt}{(|\lambda|\pi t)^2 + (b+Rf(t))^2}\;.
\label{Tf-prime}
\end{align}
The inequality of Lemma~\ref{Lemma:Fa} together with the lower bound 
(\ref{Sa}) are now used to derive bounds for $Tf'(b)$.
The inequality $Rf(t)\leq 1+|\lambda|\pi\cot(|\lambda|\pi)$ leads to
\begin{align}
f\in \mathcal{K}_\lambda\quad \Rightarrow \quad Tf'(b) &\geq -\frac{1}{1+b}+ 
\int_0^\infty dt \frac{|\lambda|}{(|\lambda\pi t)^2 
+ (1+b+|\lambda|\pi t\cot(|\lambda|\pi))^2}
\nonumber\\
&= -\frac{1-|\lambda|}{1+b}\;.
\label{Tf-lower}
\end{align}
We thus confirm that $T$ preserves the lower bound of $\mathcal{K}_\lambda$. 
Proving that $T$ preserves the other bound, i.e.\ 
$Tf'(b) +\frac{1-\frac{|\lambda|}{1-2|\lambda}}{1+b} \leq 0$, 
is more difficult. We insert the inequality 
$Rf(t)\geq 1+|\lambda|\pi\cot(|\lambda_r|\pi) 
+ |\lambda|S(a)$ into (\ref{Tfa}) and evaluate the pieces via 
$\displaystyle \int \frac{dt}{(\alpha t)^2+(\beta+\gamma t)^2}=\frac{
\arctan\big(\frac{\alpha t}{\beta + \gamma t} \big)}{ \alpha\beta}$.
This gives for any $f\in \mathcal{K}_\lambda$ and with partial use of 
$|\lambda_r|:=\frac{|\lambda|}{1-2|\lambda|}$:
\begin{align}
&Tf'(b) +\frac{1-\frac{|\lambda|}{1-2|\lambda}}{1+b}
\nonumber
\\
&\leq \int_0^\infty dt \frac{|\lambda|}{(|\lambda\pi t)^2 
+ (1+b+|\lambda|S(t)+|\lambda|\pi t\cot(|\lambda_r|\pi))^2}
-\frac{|\lambda_r|}{1+b}
\nonumber
\\
&=
\int_0^{\frac{1}{2}} dt \frac{|\lambda|}{(|\lambda|\pi t)^2 
+ \big((b{+}1+|\lambda|F(\frac{1}{5}) {-} \frac{|\lambda|}{5} 
F'(\frac{1}{5})) +
(|\lambda|t F'(\frac{1}{5}) 
+|\lambda|\pi t\cot(|\lambda_r|\pi)\big)^2}
\nonumber
\\
& + \int_{\frac{1}{2}}^6 dt \frac{|\lambda|}{(|\lambda|\pi t)^2 
+ \big((b{+}1+|\lambda|F(\frac{3}{2}) {-} \frac{3|\lambda|}{2} 
F'(\frac{3}{2})) +
(|\lambda|t F'(\frac{3}{2}) 
+|\lambda|\pi t\cot(|\lambda_r|\pi)\big)^2}
\nonumber
\\
&+ \int_6^\infty dt \frac{|\lambda|}{(|\lambda|\pi t)^2 
+ \big((b+1+|\lambda|F(6) )
+|\lambda|\pi t\cot(|\lambda_r|\pi)\big)^2}
- \frac{|\lambda_r|\pi}{\pi(1+b)}
\nonumber
\\
&= \frac{\arctan \Big( \frac{\frac{1}{2}|\lambda|\pi }{
b+1+|\lambda|F(\frac{1}{5}) + \frac{3|\lambda|}{10} 
F'(\frac{1}{5}) 
+\frac{1}{2}|\lambda|\pi \cot(|\lambda_r|\pi)}\Big)}{\pi 
(b+1+|\lambda|F(\frac{1}{5}) - \frac{|\lambda|}{5} 
F'(\frac{1}{5}))}
\nonumber
\\
&
- \frac{\arctan \Big( \frac{\frac{1}{2}|\lambda|\pi }{
b+1+|\lambda|F(\frac{3}{2}) - |\lambda|
F'(\frac{3}{2}) +
\frac{1}{2}|\lambda|\pi \cot(|\lambda_r|\pi)}\Big)}{\pi 
(b+1+|\lambda|F(\frac{3}{2}) - \frac{3|\lambda|}{2} 
F'(\frac{3}{2}))}
\nonumber
\\
&+ \frac{\arctan \Big( \frac{6|\lambda|\pi }{
b+1+|\lambda|F(\frac{3}{2}) + \frac{9|\lambda|}{2} 
F'(\frac{3}{2}) 
+6|\lambda|\pi \cot(|\lambda_r|\pi)}\Big)}{\pi 
(b+1+|\lambda|F(\frac{3}{2}) - \frac{3|\lambda|}{2} 
F'(\frac{3}{2}))}
- \frac{\arctan \Big( \frac{6|\lambda|\pi }{
b+1+|\lambda|F(6) 
+6|\lambda\pi \cot(|\lambda_r|\pi)}\Big)}{\pi 
(b+1+|\lambda|F(6) )}
\nonumber
\\
&
+ \frac{|\lambda_r|\pi}{\pi 
(b+1+|\lambda|F(6) )}
- \frac{|\lambda_r|\pi}{\pi (b+1)}\;.
\end{align}
For $0\leq |\lambda|\leq \frac{1}{6}$ we have $\cot
(|\lambda_r|\pi)\geq 1$. We are therefore within the convergence
domain of the $\arctan$ series, and Leibniz' criterion gives upper and
lower bounds:
\[
0\leq x\leq 1\quad \Rightarrow\quad 
x-\frac{x^3}{3} \leq  \arctan x \leq 
x-\frac{x^3}{3} +\frac{x^5}{5}\;.
\]
For the sake of transparence we abbreviate
\begin{align*}
\beta &:=\tfrac{b+1}{|\lambda|\pi}\;, &
\gamma &:= \cot(|\lambda_r|\pi)\;, &
\\
\delta_1 &:= \tfrac{1}{\pi}F(\tfrac{1}{5}) + \tfrac{3}{10\pi} F'(\tfrac{1}{5})\;, &
\delta_2 &:= \tfrac{1}{\pi}F(\tfrac{1}{5}) - \tfrac{1}{5\pi} F'(\tfrac{1}{5})\;, &
\delta_3 &:= \tfrac{1}{\pi}F(\tfrac{3}{2}) - \tfrac{1}{\pi}F'(\tfrac{3}{2})\;,
\\
\delta_4&:= \tfrac{1}{\pi}F(\tfrac{3}{2}) - \tfrac{3}{2\pi} F'(\tfrac{3}{2})\;,&
\delta_5&:= \tfrac{1}{\pi}F(\tfrac{3}{2}) + \tfrac{9}{2\pi} F'(\tfrac{3}{2})\;,&
\delta_6&:= \tfrac{1}{\pi}F(6)\;.
\end{align*}
Then 
\begin{align}
&|\lambda|\pi^2 \Big(Tf'(b) +\frac{1-\frac{|\lambda|}{1-2|\lambda}}{1+b}\Big)
\nonumber
\\
&\leq\frac{1}{(2\beta{+}2\delta_1{+}\gamma)(\beta{+}\delta_2)}
-\frac{1}{3(2\beta{+}2\delta_1{+}\gamma)^3 (\beta{+}\delta_2)}
+\frac{1}{5(2\beta{+}2\delta_1{+}\gamma)^5 (\beta{+}\delta_2)}
\nonumber
\\
&-\frac{1}{(2\beta{+}2\delta_3{+}\gamma)(\beta{+}\delta_4)}
+\frac{1}{3(2\beta{+}2\delta_3{+}\gamma)^3 (\beta{+}\delta_4)}
\nonumber
\\
&+\frac{1}{(\frac{1}{6}\beta{+}\frac{1}{6}\delta_5{+}\gamma)
(\beta{+}\delta_4)}
-\frac{1}{3(\frac{1}{6}\beta{+}\frac{1}{6}\delta_5{+}\gamma)^3
(\beta{+}\delta_4)}
+\frac{1}{5(\frac{1}{6}\beta{+}\frac{1}{6}\delta_5{+}\gamma)^5
(\beta{+}\delta_4)}
\nonumber
\\
&-\frac{1}{(\frac{1}{6}\beta{+}\frac{1}{6}\delta_6{+}\gamma)
(\beta{+}\delta_6)}
{+}\frac{1}{3(\frac{1}{6}\beta{+}\frac{1}{6}\delta_6{+}\gamma)^3
(\beta{+}\delta_6)}
{+}\frac{|\lambda_r|\pi}{(\beta+\delta_6)}
-\frac{|\lambda_r|\pi}{\beta}
\nonumber
\\
&=: \frac{\sum_{k=0}^{18} c_k
\big(\beta-\frac{1}{|\lambda\pi|}\big)^k }{
(\beta{+}\delta_1{+}\frac{1}{2}\gamma)^5 
(\beta{+}\delta_3{+}\frac{1}{2}\gamma)^3 
(\beta{+}\delta_5{+}6\gamma)^5 
(\beta{+}\delta_6{+}6\gamma)^3 
(\beta{+}\delta_2)
(\beta{+}\delta_4)(\beta{+}\delta_6)\beta}.
\end{align}
In the last line, the coefficients $c_k$ are 
polynomials in $\gamma,|\lambda_r\pi|$ and 
$\frac{1}{|\lambda|\pi}$. One finds with
$|\lambda_r|\cot(|\lambda_r|\pi) \geq \frac{1}{4}$ and 
$|\lambda_r|-|\lambda|\geq 0$ for all $0\leq
|\lambda|\leq \frac{1}{6}$:
\begin{align}
c_{18}&= -|\lambda_r|\pi \delta_6 = -3.53 |\lambda_r|\;,
\nonumber
\\
c_{17}&= - 29.01 |\lambda_r| - 183.74 |\lambda_r| \cot(|\lambda_r|\pi)
- \tfrac{20.25 |\lambda_r|-7.75|\lambda|}{|\lambda|}\;,
\nonumber
\\
c_{16}&=   - 101.11 |\lambda_r| - 1318.89|\lambda_r| \cot
(|\lambda_r|\pi)- 
 4264.94 |\lambda_r| \cot^2(|\lambda_r|\pi) 
\nonumber
\\
& - \tfrac{54.78 |\lambda_r|-41.92 |\lambda|}{|\lambda|^2} 
- \tfrac{156.99 |\lambda_r|-56.11 |\lambda|}{|\lambda|}
- 
\tfrac{994.28 |\lambda_r|- 355.99|\lambda|}{|\lambda|} \cot (|\lambda_r|\pi)
\;,
\nonumber
\\
c_{15}&= 
-426.99 \tfrac{|\lambda_r| \cot (|\lambda_r|\pi)-\frac{1}{4}}{
|\lambda|^2} 
- 191.62 |\lambda_r| - 3914.61 |\lambda_r| \cot (|\lambda_r|\pi) 
\nonumber
\\
&- 
 26296.4 |\lambda_r| \cot^2 (|\lambda_r|\pi) 
- 
|\lambda_r| \cot^3 (|\lambda_r|\pi) 
- \tfrac{92.992 |\lambda_r|}{|\lambda|^3} - 
\tfrac{399.76 |\lambda_r|- 285.75 |\lambda|}{|\lambda|^2}
\nonumber
\\
&- \tfrac{2104.91 |\lambda_r| - 
 1813.03 |\lambda|}{|\lambda|^2} \cot (|\lambda_r|\pi)
- \tfrac{514.93
   |\lambda_r|-74.85|\lambda|}{|\lambda|} 
\nonumber
\\
&- 
\tfrac{6717.04 |\lambda_r|- 2214.36 |\lambda|}{|\lambda|} 
\cot (|\lambda_r|\pi)
- \tfrac{21721.1 |\lambda_r|- 7195.88
   |\lambda|}{|\lambda|} \cot^2 (|\lambda_r|\pi)\;,
\nonumber
\\
c_{14}&= 
-5405 \big(|\lambda_r|\cot(|\lambda_r|\pi)-\tfrac{1}{4}\big)
- 2729
\tfrac{|\lambda_r|\cot(|\lambda_r|\pi)-\frac{1}{4}}{|\lambda_r|^2}
- 679.6 \tfrac{|\lambda_r|\cot(|\lambda_r|\pi)-\frac{1}{4}}{|\lambda_r|^3}
\nonumber
\\
&- 17313 \tfrac{|\lambda_r|\cot (|\lambda_r|\pi)-\frac{1}{4}}{
|\lambda|^2}\cot (|\lambda_r|\pi)
- 207.1 |\lambda_r| - 651.1 |\lambda_r| \cot (|\lambda_r|\pi)
\nonumber
\\
&
-  64754 |\lambda_r| \cot^2 (|\lambda_r|\pi)
- 301494 |\lambda_r| \cot^3 (|\lambda_r|\pi)
- 517659 |\lambda_r|\cot^4 (|\lambda_r|\pi) 
\nonumber
\\
&- \tfrac{111 |\lambda_r|}{|\lambda|^4} - \tfrac{636.239
  |\lambda_r|}{|\lambda|^3} 
-\tfrac{3350 |\lambda_r|}{|\lambda|^3} \cot (|\lambda_r|\pi) 
-\tfrac{1229 |\lambda_r|-
  357.4|\lambda|}{|\lambda|^2} 
- \tfrac{914.9 |\lambda_r|}{|\lambda|}
\nonumber
\\
&- \tfrac{13307 |\lambda_r| {-} 10573 |\lambda|}{|\lambda|^2} 
\cot (|\lambda_r|\pi)
- \tfrac{34542 |\lambda_r|{-} 34358 |\lambda|}{|\lambda|^2}  
\cot^2 (|\lambda_r|\pi)
- \tfrac{18691 |\lambda_r|{-} 2338|\lambda|}{|\lambda|} 
\cot (|\lambda_r|\pi)
\nonumber
\\
&
- \tfrac{125556 |\lambda_r|- 37587 |\lambda|}{|\lambda|} 
\cot^2 (|\lambda_r|\pi)
- \tfrac{277886|\lambda_r|-
84001 |\lambda|}{|\lambda|} \cot^3 (|\lambda_r|\pi)\;.
\end{align}
All contributions are manifestly negative. That negativity continues to 
all $c_k$, but the expressions become of exceeding length. It does not
make much sense to display these formulae. Instead we give in
Figure~\ref{fig:ck} a graphical discription of the coefficients $c_k$.
\begin{figure}[ht]
\begin{picture}(120,126)
  \put(0,60){\includegraphics[width=3.6cm,bb= 0 0 209 312]{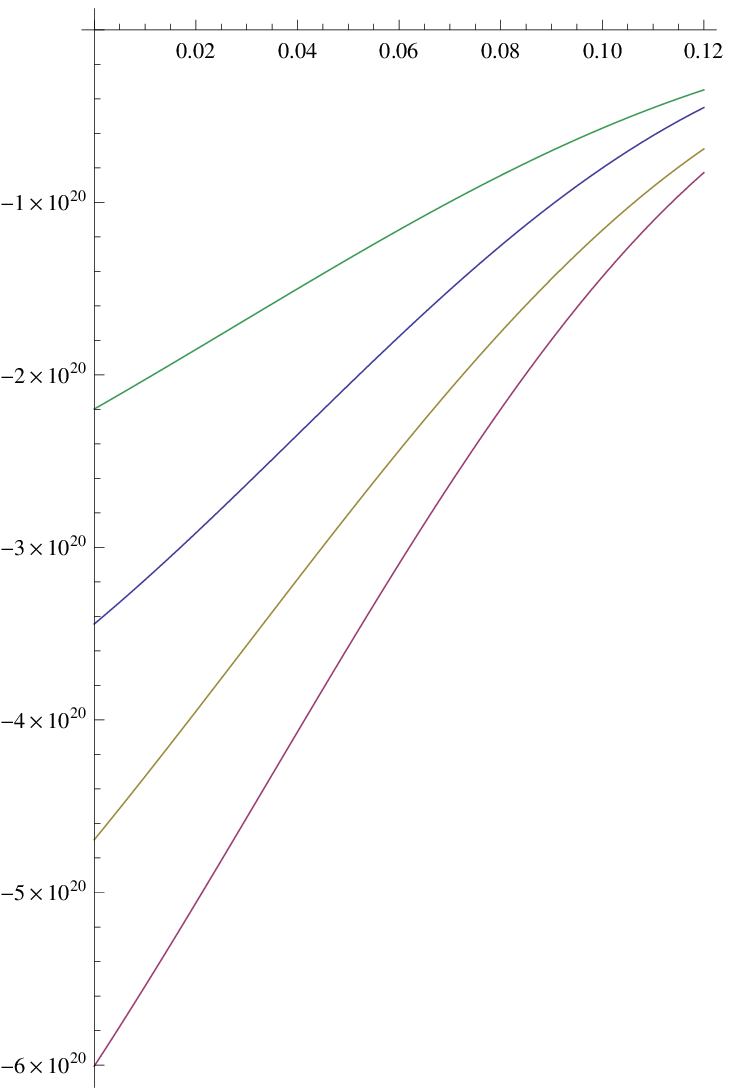}}
  \put(12,92){\mbox{\small$c_{0}'$}}
  \put(18,79){\mbox{\small$c_{1}'$}}
  \put(8,84){\mbox{\small$c_{2}'$}}
  \put(8,107){\mbox{\small$c_{3}'$}}
  \put(42,60){\includegraphics[width=2.42cm,bb= 0 0 141 312]{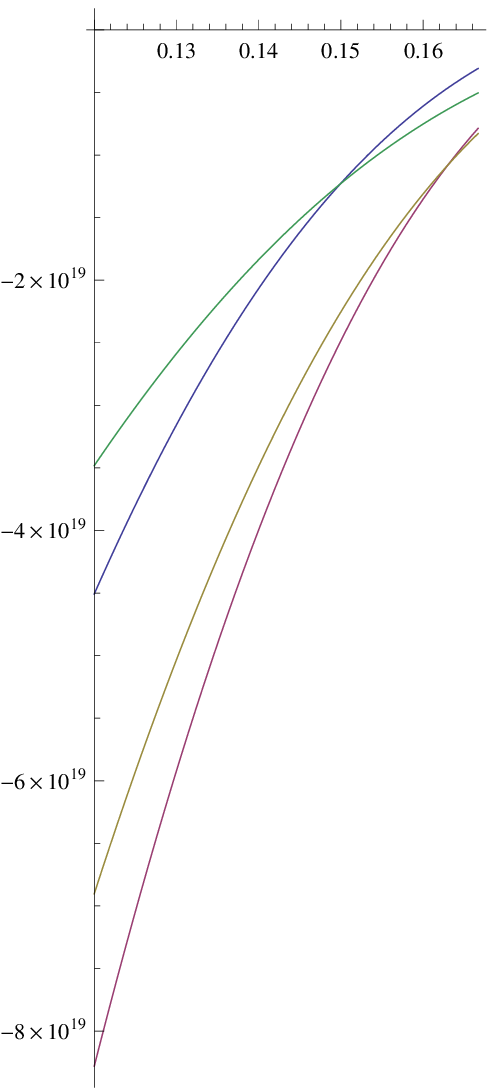}}
  \put(52,98){\mbox{\small$c_{0}'$}}
  \put(54,79){\mbox{\small$c_{1}'$}}
  \put(50,90){\mbox{\small$c_{2}'$}}
  \put(49,107){\mbox{\small$c_{3}'$}}
  \put(75,60){\includegraphics[width=5.85cm,bb= 0 0 260 237]{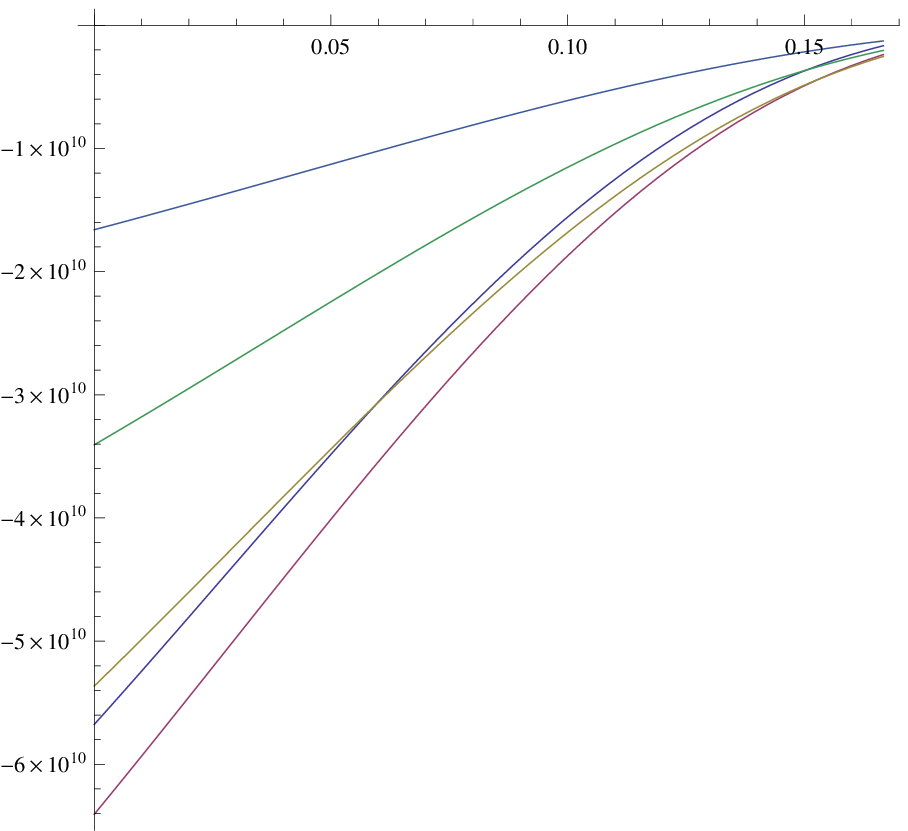}}
  \put(90,76){\mbox{\small$c_{4}''$}}
  \put(100,80){\mbox{\small$c_{5}''$}}
  \put(89,83){\mbox{\small$c_{6}''$}}
  \put(100,105){\mbox{\small$c_{7}''$}}
  \put(100,116){\mbox{\small$c_{8}''$}}
  \put(0,0){\includegraphics[width=5.85cm,bb= 0 0 260 196]{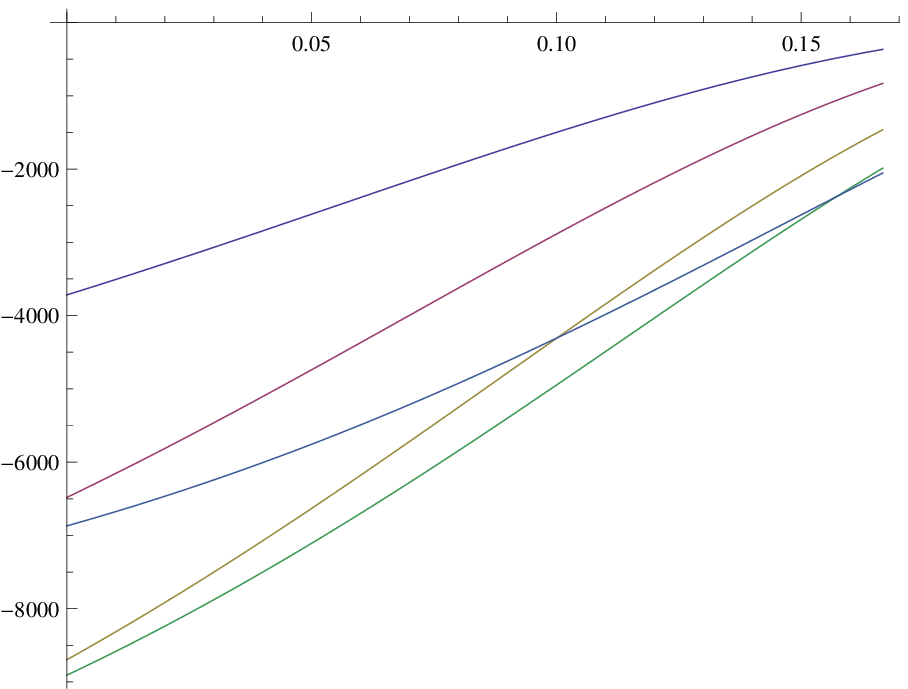}}
  \put(27,42){\mbox{$c_{9}'''$}}
  \put(21,28){\mbox{$c_{10}'''$}}
  \put(9,8){\mbox{$c_{11}'''$}}
  \put(28,11){\mbox{$c_{12}'''$}}
  \put(36,28){\mbox{$c_{13}'''$}}
  \put(75,0){\includegraphics[width=5.85cm,bb= 0 0 260 199]{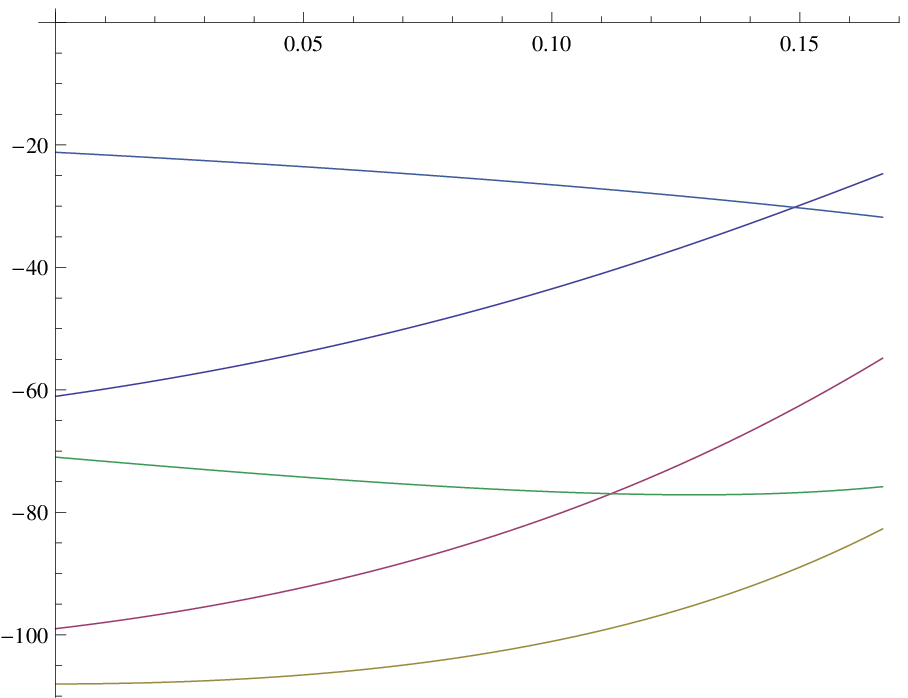}}
  \put(115,45){\mbox{$c_{18}''''$}}
  \put(110,19.5){\mbox{$c_{17}''''$}}
  \put(104,32.5){\mbox{$c_{14}''''$}}
  \put(90,10){\mbox{$c_{15}''''$}}
  \put(140,8){\mbox{$c_{16}''''$}}
\end{picture}
\caption{Plot of the rescaled coefficients 
$c_k'(\lambda):= \big(20\lambda\big)^{17-k} c_k$,
$c_k''(\lambda):= \big(4\lambda\big)^{17-k} c_k$,
$c_k'''(\lambda):= \big(\frac{\lambda}{6}\big)^{17-k} c_k$
and $c_k''''(\lambda):= \big(\frac{2\lambda}{3}\big)^{17-k} c_k$. All
of them are manifestly negative.
\label{fig:ck}}
\end{figure}
We confirm that all of them are negative for any $0\leq |\lambda|\leq
\frac{1}{6}$, thus proving 
\[
Tf'(b) \leq -\frac{1-\frac{|\lambda|}{1-2|\lambda}}{1+b}\;
\qquad\text{for all } b\geq 0 \text{ and any }
-\frac{1}{6}\leq \lambda\leq 0\;.
\]
This finishes the proof that $T$ maps $\mathcal{K}_\lambda$ into itself.

\section{$T$ is uniformly continuous 
on $\mathcal{K}_\lambda$, but not 
contractive}

Take $f,g\in LB$ with $\|f-g\|_{LB}:=\delta$. This means
$
-\frac{\delta}{1+x} \leq  f'(x)-g'(x) \leq
\frac{\delta}{1+x}$ for all $x\in \mathbb{R}_+$. 
Integration from $a$ to $x>a$ yields 
\[
-\delta \log \frac{1+x}{1+a} \leq  f(x)-g(x)-f(a)+g(a) \leq  
\delta \log \frac{1+x}{1+a}
\]
or $\displaystyle \Big(\frac{1+a}{1+x} \Big)^{\delta}
\leq  \frac{e^{f(x)}}{e^{f(a)}}
\frac{e^{g(a)}}{e^{g(x)}} \leq \Big(\frac{1+x}{1+a} \Big)^\delta$. Together 
with (\ref{efg-5}) we deduce the following inequalities valid for $x>a$
\begin{align*}
\max\Big\{ \Big(\frac{1+x}{1+a} \Big)^{|\lambda|-1},
\Big(\frac{1+x}{1+a} \Big)^{-\delta}\frac{e^{g(x)}}{e^{g(a)}} \Big\}
&\leq  \frac{e^{f(x)}}{e^{f(a)}}
\\
&\leq \min\Big\{ \Big(\frac{1+x}{1+a} \Big)^{\frac{|\lambda|}{2-|\lambda|}-1},
\Big(\frac{1+x}{1+a} \Big)^\delta 
\frac{e^{g(x)}}{e^{g(a)}} \Big\}\;.
\end{align*}
We subtract $\frac{e^{g(x)}}{e^{g(a)}}=: \big(\frac{1+x}{1+a} \big)^{\mu-1}$ 
with $|\lambda| \leq \mu \leq \frac{|\lambda|}{1-2|\lambda|}$. 
A careful discussion of $\mu$ versus $|\lambda|+\delta$ shows that 
for $x>a$ one has
\begin{align}
{-} \Big(  \Big(\frac{1 {+}x}{1{+}a} \Big)^{\frac{|\lambda|}{1-2|\lambda|}-1}
{-} \Big(\frac{1{+}x}{1{+}a} \Big)^{\frac{|\lambda|}{1-2|\lambda|}-\delta-1}\Big)
&\leq \frac{e^{f(x)}}{e^{f(a)}}-
\frac{e^{g(x)}}{e^{g(a)}} 
\nonumber
\\
&\leq 
\Big(\frac{1{+}x}{1{+}a} \Big)^{\frac{|\lambda|}{1-2|\lambda|}-1}
{-} \Big(\frac{1{+}x}{1{+}a} \Big)^{\frac{|\lambda|}{1-2|\lambda|}-\delta-1}.
\label{efg-1m}
\end{align}
Conversely, for $x<a$ we start from 
$\displaystyle \Big(\frac{1+a}{1+x} \Big)^{-\delta}
\leq \frac{e^{f(x)}}{e^{f(a)}}
\frac{e^{g(a)}}{e^{g(x)}} \leq \Big(\frac{1+a}{1+x} \Big)^\delta$, 
which together with (\ref{efg-5}) leads to
\begin{align*}
\max\Big\{ \Big(\frac{1+a}{1+x} \Big)^{1-\frac{|\lambda|}{1-2|\lambda|}},
\Big(\frac{1+a}{1+x} \Big)^{-\delta}\frac{e^{g(x)}}{e^{g(a)}} \Big\}
&\leq  \frac{e^{f(x)}}{e^{f(a)}}
\\
&\leq \min\Big\{ \Big(\frac{1+a}{1+x} \Big)^{1-|\lambda|},
\Big(\frac{1+a}{1+x} \Big)^\delta 
\frac{e^{g(x)}}{e^{g(a)}} \Big\}\;.
\end{align*}
We subtract $\frac{e^{g(x)}}{e^{g(a)}}=: \big(\frac{1+a}{1+x}
\big)^{1-\mu}$ with $|\lambda| \leq \mu \leq
\frac{|\lambda|}{1-2|\lambda|}$.  A careful discussion of $\mu$ versus
$\frac{|\lambda|}{1-2|\lambda|}-\delta$ shows that for $x<a$ one has
\begin{align}
- \Big(  \Big(\frac{1+a}{1+x} \Big)^{1-|\lambda|}
- \Big(\frac{1+a}{1+x} \Big)^{1-|\lambda|-\delta}\Big)
&\leq \frac{e^{f(x)}}{e^{f(a)}}-
\frac{e^{g(x)}}{e^{g(a)}} 
\nonumber
\\ 
&  \leq 
\Big(\frac{1+a}{1+x} \Big)^{1-|\lambda|}
- \Big(\frac{1+a}{1+x} \Big)^{1-|\lambda|-\delta}\;.
\label{efg-1n}
\end{align}

With these preparations we can prove that 
$\frac{\mathcal{H}^{\!\infty}_a[e^{f(\bullet)}]}{e^{f(a)}}$ varies slowly with $f$:
\begin{lemma}
  For any $f,g\in \mathcal{K}_\lambda$ with $\|f-g\|_{LB} =\delta$,
  hence $\delta \leq \frac{2|\lambda|^2}{1-2|\lambda|}$, one has for
  $0\leq |\lambda|<\frac{1}{3}$ the bound
\begin{align}
\Big|\frac{\mathcal{H}^{\!\infty}_a[e^{f(\bullet)}]}{e^{f(a)}}
- \frac{\mathcal{H}^{\!\infty}_a[e^{g(\bullet)}]}{e^{g(a)}}\Big|
&< \delta \cdot 
\Big( \zeta_\lambda 
+\frac{1}{|\lambda|\pi} 
\cdot 
\frac{(1+a)^{|\lambda|}-1-|\lambda|\log(1+a)}{
|\lambda|(1+a)^{|\lambda|}} \Big)\;,
\\
\zeta_\lambda&:= \frac{1}{\pi} \sum_{k=1}^\infty \Big(
\frac{1}{(k+|\lambda|)^2}+\frac{1}{(k-\frac{|\lambda|}{1-2|\lambda|})^2}\Big)\;.
\nonumber
\end{align}
\label{Lemma-zeta}
\end{lemma}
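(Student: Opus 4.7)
The plan is to write the difference of the normalised Hilbert transforms as a single principal value integral of $D(x):=\tfrac{e^{f(x)}}{e^{f(a)}}-\tfrac{e^{g(x)}}{e^{g(a)}}$ against the kernel $\tfrac{1}{\pi(x-a)}$, and then to apply the two-sided envelopes (\ref{efg-1m}) for $x>a$ and (\ref{efg-1n}) for $x<a$. Both envelopes vanish at $x=a$ and behave there like $O(\delta\,|x-a|/(1+a))$, so the principal value may be replaced by an ordinary absolute integral:
\begin{align*}
\Big|\frac{\mathcal{H}^{\!\infty}_a[e^{f(\bullet)}]}{e^{f(a)}}-\frac{\mathcal{H}^{\!\infty}_a[e^{g(\bullet)}]}{e^{g(a)}}\Big|
&\leq \frac{1}{\pi}\int_0^a\frac{(\tfrac{1+a}{1+x})^{1-|\lambda|}-(\tfrac{1+a}{1+x})^{1-|\lambda|-\delta}}{a-x}\,dx \\
&\quad + \frac{1}{\pi}\int_a^\infty\frac{(\tfrac{1+x}{1+a})^{\mu-1}-(\tfrac{1+x}{1+a})^{\mu-\delta-1}}{x-a}\,dx,
\end{align*}
with $\mu:=\tfrac{|\lambda|}{1-2|\lambda|}$.

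The substitution $u=(1+x)/(1+a)$, followed by $v=1/u$ on the outer piece, reduces these two integrals to $\int_{1/(1+a)}^{1}\tfrac{u^{|\lambda|-1}(1-u^\delta)}{1-u}\,du$ and $\int_0^1\tfrac{v^{-\mu}(1-v^\delta)}{1-v}\,dv$ respectively. Linearisation in $\delta$ is effected by the pointwise inequality $1-u^\delta\leq -\delta\log u$ on $(0,1]$, which follows from the concavity of $\delta\mapsto 1-u^\delta$ and from the value $-\log u$ of its derivative at $\delta=0$. The $v$-integral is then majorised by $\delta\cdot\int_0^1\tfrac{-v^{-\mu}\log v}{1-v}\,dv = \delta\,\psi'(1-\mu) = \delta\sum_{k\geq 1}(k-\mu)^{-2}$, which contributes exactly one half of $\zeta_\lambda$.

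For the $u$-integral I would decompose $\int_{1/(1+a)}^1=\int_0^1-\int_0^{1/(1+a)}$ after the same linearisation. The full integral on $[0,1]$ yields $\delta\,\psi'(|\lambda|)=\delta\bigl(\tfrac{1}{|\lambda|^2}+\sum_{k\geq 1}(k+|\lambda|)^{-2}\bigr)$, which carries a spurious $1/|\lambda|^2$ divergence. This is cured by bounding the removed piece on $[0,1/(1+a)]$ from below via $\tfrac{1}{1-u}\geq 1$; a single integration by parts then evaluates it exactly to $\tfrac{(1+a)^{-|\lambda|}\log(1+a)}{|\lambda|}+\tfrac{(1+a)^{-|\lambda|}}{|\lambda|^2}$. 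Subtracting cancels the $1/|\lambda|^2$ pole and rearranges to $\sum_{k\geq 1}(k+|\lambda|)^{-2}+\tfrac{(1+a)^{|\lambda|}-1-|\lambda|\log(1+a)}{|\lambda|^2(1+a)^{|\lambda|}}$, supplying the remaining half of $\zeta_\lambda$ together with the claimed correction term (after rewriting the overall $\tfrac{1}{|\lambda|^2\pi}$ as $\tfrac{1}{|\lambda|\pi}\cdot\tfrac{1}{|\lambda|}$).

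The delicate step is the $x<a$ contribution: a naive upper bound using only $\int_0^1$ produces a $\delta/|\lambda|^2$ term that blows up as $|\lambda|\to 0$ and cannot be absorbed into $\zeta_\lambda$. The explicit integration on $[0,1/(1+a)]$, enabled by the crude lower bound $\tfrac{1}{1-u}\geq 1$, is precisely what converts this pole into the $a$-dependent factor $\tfrac{(1+a)^{|\lambda|}-1-|\lambda|\log(1+a)}{|\lambda|^2(1+a)^{|\lambda|}}$, which vanishes at $a=0$ and saturates at $1/|\lambda|^2$ as $a\to\infty$. All other ingredients---the Lipschitz behaviour of the envelopes at $x=a$ that legitimises dropping the principal value, the concavity inequality for $1-u^\delta$, and the trigamma identifications---are routine.
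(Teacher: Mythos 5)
Your proposal is correct, and it reproduces exactly the bound of the lemma. The skeleton coincides with the paper's proof: you use the same envelopes (\ref{efg-1m}), (\ref{efg-1n}), drop the principal value because the envelopes vanish linearly at $x=a$, and after the substitution $u=\frac{1+x}{1+a}$ (resp.\ $v=\frac{1+a}{1+x}$) you arrive at precisely the two integrals the paper calls $I_1$ and $I_2$. Where you genuinely deviate is in how these are estimated. The paper evaluates $I_1$ exactly via the hypergeometric antiderivative (\ref{HyperG-}) with a cutoff at $1-\epsilon$, performs the $\epsilon\to 0$ limit by power-series rearrangement, and only then estimates the two resulting pieces (\ref{zeta-0}) and (\ref{zeta-minus}) separately -- the first by the same linearisation $t^{|\lambda|-1}-t^{|\lambda|+\delta-1}\le \delta(-\log t)\,t^{|\lambda|-1}$ that you use globally (written there as a double integral over $\xi$), the second by monotonicity in $a$; $I_2$ is computed from tabulated integrals as a digamma difference and then bounded by the trigamma series. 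You instead linearise in $\delta$ \emph{first}, which reduces everything to elementary trigamma integrals, $\psi'(1-\mu)$ for the outer piece and $\psi'(|\lambda|)$ for the inner one, and you recover the $a$-dependent correction by subtracting the tail $\int_0^{1/(1+a)}$ with the crude bound $\frac{1}{1-u}\ge 1$ and one integration by parts, which cancels the $1/|\lambda|^2$ pole explicitly. This buys a shorter and more elementary argument (no hypergeometric identities, no delicate $\epsilon\to 0$ cancellation), at no loss: the final constant is identical, including the splitting of $\zeta_\lambda$ into its two halves. The only cosmetic point is that your chain gives ``$\le$'' rather than the strict ``$<$'' in the statement, but the paper's own derivation has the same feature, so nothing is lost.
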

\begin{proof} We take the Hilbert transform of (\ref{efg-1m}) and
  (\ref{efg-1n}).  The principal value limit can be weakened to
  improper Riemann integrals:
\begin{align}
\Big|\frac{\mathcal{H}^{\!\infty}_a[e^{f(\bullet)}]}{e^{f(a)}}
- \frac{\mathcal{H}^{\!\infty}_a[e^{g(\bullet)}]}{e^{g(a)}}\Big|
& \leq
\int_0^a \frac{dx}{\pi}\;
\frac{\Big(\dfrac{1+x}{1+a}\Big)^{|\lambda|-1}
-\Big(\dfrac{1+x}{1+a}\Big)^{|\lambda|+\delta-1}}{(1+a)-(1+x)} 
\nonumber
\\
&+
\int_a^\infty \frac{dx}{\pi}\;
\frac{\Big(\dfrac{1+x}{1+a}\Big)^{\frac{|\lambda|}{1-2|\lambda|}-1}
-\Big(\dfrac{1+x}{1+a}\Big)^{\frac{|\lambda|}{1-2|\lambda|}-\delta-1}
}{(1+x)-(1+a)} 
\nonumber
\\
& 
=\underbrace{\lim_{\epsilon\to 0}\Big( 
\int_{\frac{1}{1+a}}^{1-\epsilon} \frac{dt}{\pi}\;
\frac{t^{|\lambda|-1}}{1-t} 
-\int_{\frac{1}{1+a}}^{1-\epsilon} \frac{dt}{\pi}\;
\frac{t^{|\lambda|+\delta-1}}{1-t} \Big)}_{I_1}
\nonumber
\\
&
+\underbrace{\int_1^\infty \frac{dt}{\pi}\;
\frac{t^{\frac{|\lambda|}{1-2|\lambda|}-1}
-t^{{\frac{|\lambda|}{1-2|\lambda|}-\delta-1}}}{t-1} }_{I_2}
\label{cont-HTinfty}\;. 
\end{align}
The second integral $I_2$ known from \cite[\S 3.231.6+\S 3.231.5]{Gradsteyn:1994??}:
\begin{subequations}
\begin{align}
I_2
&=\cot \big(\tfrac{|\lambda|}{1-2|\lambda|}\pi{-}\delta\pi\big)-
\cot \big(\tfrac{|\lambda|}{1-2|\lambda|}\pi\big)
-\frac{1}{\pi} \Big( \psi(\tfrac{|\lambda|}{1-2|\lambda|})
-\psi(\tfrac{|\lambda|}{1-2|\lambda|}{-}\delta))\Big)
\nonumber
\\
&= \frac{\psi(1-\tfrac{|\lambda|}{1-2|\lambda|}{+}\delta)
- \psi(1-\tfrac{|\lambda|}{1-2|\lambda|})}{\pi}
\nonumber
\\
&=\frac{1}{\pi}\sum_{k=0}^\infty\Big( \frac{1}{k+1-\tfrac{|\lambda|}{1-2|\lambda|}}
-\frac{1}{k+1-\tfrac{|\lambda|}{1-2|\lambda|}+\delta}
\Big)
\nonumber
\\
&\leq \frac{\delta}{\pi}\sum_{k=1}^\infty \frac{1}{
\big(k-\tfrac{|\lambda|}{1-2|\lambda|}\big)^2}
\;.
\label{zeta-plus}
\end{align}
We have used the power series expansion \cite[\S
8.363.3]{Gradsteyn:1994??} for the difference of digamma functions.
The result is uniformly bounded for $|\lambda|<\frac{1}{3}$.

The first integral $I_1$ is evaluated with (\ref{HyperG-}) to
\begin{align}
I_1&= 
\frac{1}{\pi} \lim_{\epsilon\to 0}\Big\{ 
\frac{(1{-}\epsilon)^{|\lambda|}}{|\lambda|}
{}_2F_1\Big(\di{1,|\lambda|}{1{+}|\lambda|}\Big| 
1{-}\epsilon\Big)
- 
 \frac{(1{-}\epsilon)^{|\lambda|+\delta}}{|\lambda|+\delta}
{}_2F_1\Big(\di{1,|\lambda|{+}\delta}{1{+}|\lambda|{+}\delta}\Big| 
1{-}\epsilon\Big)
\nonumber
\\
& - \frac{(\frac{1}{1+a})^{|\lambda|}}{|\lambda|}
{}_2F_1\Big(\di{1,|\lambda|}{1{+}|\lambda|}\Big| 
\frac{1}{1+a}\Big)
+\frac{(\frac{1}{1+a})^{|\lambda|+\delta}}{|\lambda|+\delta} 
{}_2F_1\Big(\di{1,|\lambda|{+}\delta}{1{+}|\lambda|{+}\delta}\Big| 
\frac{1}{1+a}\Big)\Big\}
\nonumber
\\
&= \frac{1-(\frac{1}{1+a})^{|\lambda|}}{|\lambda|\pi}
-\frac{1-(\frac{1}{1+a})^{|\lambda|+\delta}}{(|\lambda|+\delta)\pi}
\label{zeta-0}
\\
&+\frac{1}{\pi} \sum_{k=1}^\infty \Big\{
\frac{1}{k+|\lambda|} \Big(1-\frac{1}{(1+a)^{k+|\lambda|}}\Big)
-\frac{1}{k+|\lambda|{+}\delta} \Big(1-\frac{1}{(1+a)^{k+|\lambda|+\delta}}\Big)
\Big\}\;.
\label{zeta-minus}
\end{align}
\end{subequations}
Here we have expanded 
the hypergeometric functions into a 
power series and rearranged them to differences which admit the limit 
$\epsilon\to 0$.
The line (\ref{zeta-minus}) is monotonous in $a$ and thus can be estimated 
by its limit $a\to \infty$.
The same argument gives a possible uniform estimate of (\ref{zeta-0}).
\begin{align}
(\textup{\ref{zeta-minus}})\leq  \frac{\delta}{\pi}
\sum_{k=1}^\infty \frac{1}{\big(k+|\lambda|\big)^2}\;,\qquad\qquad
(\textup{\ref{zeta-0}})\leq  \frac{\delta}{\pi} \frac{1}{|\lambda|^2}\;.
\tag{\ref{zeta-minus}'+\ref{zeta-0}'}
\end{align}

The last estimate is enough for continuity, but not for contractivity.
We write (\ref{zeta-0}) as a double integral:
\begin{align}
&\frac{1-(\frac{1}{1+a})^{|\lambda|}}{|\lambda|\pi}
-\frac{1-(\frac{1}{1+a})^{|\lambda|+\delta}}{(|\lambda|+\delta)\pi}
=\frac{1}{\pi} \int_{\frac{1}{1+a}}^1 dt\; \big(t^{|\lambda|-1}
-t^{|\lambda|+\delta-1}\big)
\nonumber
\\
&=-\frac{1}{\pi} \int_{\frac{1}{1+a}}^1 dt\; \int_0^\delta d\xi \;\frac{d}{d\xi} 
t^{|\lambda|+\xi-1}
=\frac{1}{\pi} \int_{\frac{1}{1+a}}^1 dt\; \int_0^\delta d\xi \;(-\log t) 
t^{|\lambda|+\xi-1}
\nonumber
\\
&\leq 
\frac{\delta}{\pi} \int_{\frac{1}{1+a}}^1 dt\; (-\log t) 
t^{|\lambda|-1}
=\frac{\delta}{|\lambda|^2\pi} 
\Big(\frac{(1+a)^{|\lambda|}-1-|\lambda|\log(1+a))}{(1+a)^{|\lambda|}} \Big)
\;.
\end{align}
This gives together with 
(\ref{zeta-plus}) and the estimate (\ref{zeta-minus}') the claimed result.
\hfill $\square$%
\end{proof}

Putting $x=0$ in (\ref{efg-1n}) leads to
\begin{align}
\Big|\frac{1}{e^{f(a)}}-\frac{1}{e^{g(a)}}\Big|
& \leq 
(1+a)^{1-|\lambda|}-(1+a)^{1-|\lambda|-\delta}
=-\int_0^\delta d\xi \frac{d}{d\xi} 
(1+t)^{1-|\lambda|-\xi}
\nonumber
\\
& 
=\int_0^\delta d\xi \;
(1+t)^{1-|\lambda|-\xi}
\log (1+t)
\leq \delta 
(1+t)^{1-|\lambda|}
\log (1+t)
\;.
\label{efg-4}
\end{align}

Together with Lemma~\ref{Lemma-zeta} we have thus proved for the map 
$R$ defined in (\ref{Rf}):
\begin{proposition}
\label{Prop:Rfg}
Let $0\leq |\lambda|<\frac{1}{3}$.
For any $f,g\in \mathcal{K}_\lambda$ with 
$\|f-g\|_{LB} =\delta$ one has the pointwise bound
\begin{subequations}
\label{DeltaR}
\begin{align}
\Big|(Rf)(t)-(Rg)(t)\Big|
&\leq  (\Delta R)^{(1)}(t)+ 
(\Delta R)^{(2)}(t)+ 
(\Delta R)^{(3)}(t)\;,
\nonumber
\\
(\Delta R)^{(1)}(t)
&:=  \delta \cdot (1+t)^{1-|\lambda|} \log (1+t)\;,
\\
(\Delta R)^{(2)}(t)
&:=  \delta \cdot |\lambda|\pi t \zeta_\lambda\;,
\\
(\Delta R)^{(3)}(t)
&:=  \delta \cdot t \cdot
\frac{(1+t)^{|\lambda|} -1-|\lambda| \log (1+t) }{
|\lambda|(1+t)^{|\lambda|} }\;.
\end{align}
\end{subequations}
\end{proposition}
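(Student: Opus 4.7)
The plan is to derive the proposition as an essentially immediate combination of the two preceding results of this section, namely the pointwise estimate (\ref{efg-4}) and Lemma~\ref{Lemma-zeta}. First I would exploit the definition (\ref{Rf}) to decompose
\[
(Rf)(t) - (Rg)(t) = \Bigl(\tfrac{1}{e^{f(t)}} - \tfrac{1}{e^{g(t)}}\Bigr)
- |\lambda|\pi t \Bigl( \tfrac{\mathcal{H}_t^{\!\infty}[e^{f(\bullet)}]}{e^{f(t)}}
- \tfrac{\mathcal{H}_t^{\!\infty}[e^{g(\bullet)}]}{e^{g(t)}} \Bigr),
\]
and to bound the two summands separately after applying the triangle inequality.

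For the ``zero-order'' summand I would specialise (\ref{efg-4}) to the argument $t$, obtaining $\bigl|e^{-f(t)}-e^{-g(t)}\bigr|\leq \delta(1+t)^{1-|\lambda|}\log(1+t)$, which matches $(\Delta R)^{(1)}(t)$ verbatim. For the ``Hilbert-transform'' summand I would apply Lemma~\ref{Lemma-zeta} at $a=t$, which yields
\[
\Bigl|\tfrac{\mathcal{H}_t^{\!\infty}[e^{f(\bullet)}]}{e^{f(t)}}
- \tfrac{\mathcal{H}_t^{\!\infty}[e^{g(\bullet)}]}{e^{g(t)}}\Bigr|
< \delta\Bigl(\zeta_\lambda + \tfrac{1}{|\lambda|\pi}\cdot \tfrac{(1+t)^{|\lambda|}-1-|\lambda|\log(1+t)}{|\lambda|(1+t)^{|\lambda|}}\Bigr).
\]
Multiplying through by the prefactor $|\lambda|\pi t$ converts the first bracketed term into $\delta\cdot |\lambda|\pi t \zeta_\lambda = (\Delta R)^{(2)}(t)$ and the second into $\delta\cdot t\cdot \tfrac{(1+t)^{|\lambda|}-1-|\lambda|\log(1+t)}{|\lambda|(1+t)^{|\lambda|}} = (\Delta R)^{(3)}(t)$. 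Adding the three contributions closes the argument. The observation that $\|f-g\|_{LB}\leq \tfrac{2|\lambda|^2}{1-2|\lambda|}$ for $f,g\in\mathcal{K}_\lambda$ follows directly from the two-sided derivative bound (\ref{calK}) and is used only to ensure that the hypotheses of Lemma~\ref{Lemma-zeta} are met.

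Since both inputs are already established, there is essentially no residual obstacle. The genuine difficulty was anticipated and disposed of inside the proof of Lemma~\ref{Lemma-zeta}: the delicate reinterpretation of line (\ref{zeta-0}) as a double integral in $(t,\xi)$, producing the sharper factor $\tfrac{(1+a)^{|\lambda|}-1-|\lambda|\log(1+a)}{(1+a)^{|\lambda|}}$ in place of the crude uniform bound $\tfrac{\delta}{\pi|\lambda|^2}$. This refinement is precisely what forces $(\Delta R)^{(3)}$ to vanish linearly as $t\to 0$, a feature that will be indispensable when $(Rf)(t)-(Rg)(t)$ is later integrated against the $\arctan$ kernel of $T$ to establish norm-continuity of $T$ on $\mathcal{K}_\lambda$.
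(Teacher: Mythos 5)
Your proposal is correct and takes essentially the same route as the paper: there the proposition is stated as an immediate consequence of (\ref{efg-4}) and Lemma~\ref{Lemma-zeta}, which is exactly your decomposition of $Rf-Rg$ via the definition (\ref{Rf}) into the $1/e^{f(t)}$ term and the Hilbert-transform ratio, followed by the triangle inequality and multiplication of the lemma's bound by the prefactor $|\lambda|\pi t$. Your side remark that $\delta\leq\frac{2|\lambda|^2}{1-2|\lambda|}$ follows from the derivative bounds defining $\mathcal{K}_\lambda$ also matches the parenthetical in Lemma~\ref{Lemma-zeta}, so nothing is missing.
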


\begin{proposition}
\label{prop:continuity}
The map $T:\mathcal{K}_\lambda\to \mathcal{K}_\lambda$ is
norm-continuous. More precisely, for $-\frac{1}{6}\leq \lambda \leq 0$
one has
\begin{align}
\|Tf-Tg\|_{LB} \leq \|f-g\|_{LB} \cdot 
\frac{\sin^2(\frac{|\lambda|\pi}{1-2|\lambda|})}{(|\lambda|\pi)^2}
\frac{(1-\frac{|\lambda|}{5})^{-1}}{\cos(\frac{|\lambda|\pi}{1-2|\lambda|})}
\Big(1+ \frac{1+|\lambda|}{e} + |\lambda|^2 \pi \zeta_\lambda\Big)
\;.
\end{align}
The rhs ranges from 
$1.36788 \|f-g\|_{LB} $ for $|\lambda|=0$ to 
$4.09942 \|f-g\|_{LB} $ for  $|\lambda|=\frac{1}{6}$.
\end{proposition}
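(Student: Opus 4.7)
The plan is to estimate $\|Tf-Tg\|_{LB}$ by bounding the derivative difference pointwise in $b$. Since $Tf(0)=Tg(0)=0$ by \eqref{Tfa}, the norm reduces to
\[
\|Tf-Tg\|_{LB} = \sup_{b\geq 0}(1+b)\,|Tf'(b)-Tg'(b)|.
\]
Using \eqref{Tf-prime} and the algebraic identity $(b{+}Rg)^2-(b{+}Rf)^2 = (Rg{-}Rf)(Rf{+}Rg{+}2b)$, the integrand defining $Tf'(b)-Tg'(b)$ takes the form
\[
\frac{(Rg(t)-Rf(t))\,(Rf(t)+Rg(t)+2b)}{[A(t)^2+(b+Rf(t))^2][A(t)^2+(b+Rg(t))^2]},\qquad A(t):=|\lambda|\pi t.
\]

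First I would apply Proposition~\ref{Prop:Rfg} to dominate $|Rf(t)-Rg(t)|$ by $\delta\cdot\sum_{k=1}^3 (\Delta R)^{(k)}(t)/\delta$, where $\delta=\|f-g\|_{LB}$. For the denominators and the sum $Rf+Rg+2b$ I would use the uniform lower bound $b+Rf(t),\ b+Rg(t)\geq m(t):=c+A(t)\cot\theta$ with $c:=(1+b)-\tfrac{|\lambda|}{5}$ and $\theta:=\tfrac{|\lambda|\pi}{1-2|\lambda|}$, which follows from Lemma~\ref{Lemma:Fa} together with the uniform estimate $F(a)\geq -\tfrac15$. The hypothesis $|\lambda|\leq \tfrac16$ forces $\theta\leq \tfrac{\pi}{4}$ and hence $\cot\theta\geq 1$, so that $m(t)\geq A(t)$ and the function $x\mapsto x/(A(t)^2+x^2)$ is decreasing on $[m(t),\infty)$. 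Writing $\frac{U+V}{(A^2+U^2)(A^2+V^2)}=\frac{U/(A^2+U^2)}{A^2+V^2}+\frac{V/(A^2+V^2)}{A^2+U^2}$ and applying this monotonicity to each term yields the clean estimate
\[
|Tf'(b)-Tg'(b)|\leq 2|\lambda|\sum_{k=1}^{3}\int_0^\infty \frac{(\Delta R)^{(k)}(t)\,m(t)}{(A(t)^2+m(t)^2)^2}\,dt.
\]

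The main computation is then the evaluation of the three integrals $J_k$ via the substitution $u=A(t)+c\sin\theta\cos\theta$, $v:=c\sin^2\theta$, under which $A^2+m^2=(u^2+v^2)/\sin^2\theta$ and $u$ ranges over $[v\cot\theta,\infty)$. The standard antiderivatives of $1/(u^2+v^2)^2$ and $u/(u^2+v^2)^2$ reduce each piece to closed form. The easiest case $k=2$, where $(\Delta R)^{(2)}(t)/\delta=A(t)\zeta_\lambda$, is elementary and produces the $|\lambda|^2\pi\zeta_\lambda$ summand of the claimed bound. For $k=1$ the factor $\log(1+t)$ is controlled by the universal estimate $\log(1+t)/(1+t)\leq 1/e$ attained at $1+t=e$, which is precisely the origin of the $(1+|\lambda|)/e$ term; the prefactor $2|\lambda|$ cancels a $1/|\lambda|$ produced by the remaining integral to yield a $\lambda$-uniform contribution. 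For $k=3$ one expands $(1+t)^{|\lambda|}-1-|\lambda|\log(1+t)$ as an integrated remainder and integrates term by term, producing the leading $1$ in the parenthesis. Finally one pulls out $(1+b)$ via the uniform estimate $(1+b)/c\leq(1-|\lambda|/5)^{-1}$ for $b\geq 0$, which generates the $(1-|\lambda|/5)^{-1}$ factor; the trigonometric combination $\sin^2\theta/[(|\lambda|\pi)^2\cos\theta]$ emerges from the $\sin^2\theta$ factor in $A^2+m^2$ together with the identity $u_0/v=\cot\theta$ at the lower limit of the substitution.

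The principal obstacle is the handling of $J_1$ and $J_3$: neither integrand reduces to an elementary antiderivative in $u$, so the bound $\log(1+t)\leq (1+t)/e$ is essential to prevent a spurious divergence as $|\lambda|\to 0$, and tracking the resulting powers of $\sin\theta$, $\cos\theta$ and $c$ so that the three contributions assemble into the single claimed coefficient requires careful bookkeeping. Verification of the endpoint numerical values $1.36788$ at $\lambda=0$ and $4.09942$ at $\lambda=\tfrac16$ is then direct substitution.
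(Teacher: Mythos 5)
Your structural setup is sound and essentially parallels the paper: reduce to $\sup_b(1+b)|Tf'(b)-Tg'(b)|$, insert the pointwise bound of Proposition~\ref{Prop:Rfg} to split into three contributions, and push the denominators down with the lower bound $b+Rf(t)\geq c+|\lambda|\pi t\cot\frac{|\lambda|\pi}{1-2|\lambda|}$, $c=1+b-\frac{|\lambda|}{5}$, from Lemma~\ref{Lemma:Fa} (your $\frac{2m}{(A^2+m^2)^2}$ bound is even marginally sharper than the paper's $\frac{2}{(A^2+m^2)^{3/2}}$, and your substitution identity $A^2+m^2=(u^2+v^2)/\sin^2\theta$ is correct). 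The $\tau=2$ piece and the $\tau=3$ piece are fine in principle; for $\tau=3$ the paper in fact takes a shortcut you could borrow, namely $(\Delta R)^{(3)}(t)\leq \delta t/|\lambda|$, which reduces it verbatim to the $\tau=2$ integral and yields the leading $1$ in the parenthesis.

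The genuine gap is your treatment of $\tau=1$. Bounding $\log(1+t)\leq(1+t)/e$ replaces $(\Delta R)^{(1)}(t)=\delta(1+t)^{1-|\lambda|}\log(1+t)$ by $\delta(1+t)^{2-|\lambda|}/e$, and the resulting $t$-integral against $(t+A_\lambda)^{-3}$ (with $A_\lambda\propto 1+b$) scales as $B(3-|\lambda|,|\lambda|)\,A_\lambda^{-|\lambda|}\sim A_\lambda^{-|\lambda|}/|\lambda|$. Multiplying by the prefactor, which carries a factor $|\lambda|(1+b)$, you obtain a contribution of order $\frac{2\delta}{e}(1+b)^{1-|\lambda|}$: the $1/|\lambda|$ indeed cancels, but the bound is \emph{not uniform in $b$} and diverges as $b\to\infty$, so the supremum defining $\|Tf-Tg\|_{LB}$ cannot be controlled this way and the claimed coefficient cannot emerge. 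The point of the paper's argument is precisely to avoid upgrading the logarithm to a power: it keeps $\log(1+t)$, applies Young's inequality $\big(A_\lambda\big)^{|\lambda|}(1+t)^{1-|\lambda|}\leq |\lambda|A_\lambda+(1-|\lambda|)(1+t)$, computes the resulting function $C_\lambda(A_\lambda)$ in closed form, and the $\frac{1}{e}$ then arises from $\sup_{x>0}\frac{\log x^{|\lambda|}}{x^{|\lambda|}}=\frac1e$, i.e.\ the combination $|\lambda|\log A_\lambda\cdot A_\lambda^{-|\lambda|}$ stays bounded uniformly in $A_\lambda$ (hence in $b$); the final step $\sup_x C_\lambda(x)\leq\frac{1+|\lambda|}{e}$ is checked numerically. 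You would need to replace your pointwise log estimate by an argument of this type (some mechanism trading the $\log$ against the residual power $A_\lambda^{-|\lambda|}$ after the $b$-supremum) for the $\tau=1$ term to close.
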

\begin{proof}
The definition (\ref{Tf-prime}) gives for $f,g\in \mathcal{K}_\lambda$
\begin{align}
&\|Tf-Tg\|_{LB} 
\nonumber
\\
&= 
\sup_{a\geq 0} 
|\lambda| \int_0^{\infty}\!dt \;
\frac{(1+a)\big|Rg(t)-Rf(t)\big|(2a+Rg(t)+Rf(t))}{\big((|\lambda|\pi t)^2 + (a+Rf(t))^2\big)
\big((|\lambda|\pi t)^2 + (a+Rg(t))^2\big)}
\nonumber
\\
& \leq \sum_{\tau=1}^3 
\sup_{a\geq 0} 
2|\lambda| \int_0^{\infty}\!dt \;
\frac{(1+a) (\Delta R)^{(\tau)}(t)}{
\big((|\lambda|\pi t)^2 
+ \big(a+1+|\lambda|\pi t\cot(\frac{|\lambda|\pi}{1-2|\lambda|}) 
+ |\lambda|F(t))\big)^2 \big)^{\frac{3}{2}}}\;,
\end{align}
where we have inserted the lower bound
$Rf(t)\geq 
1+|\lambda|\pi t\cot(\frac{|\lambda|\pi}{1-2|\lambda|}) 
+ |\lambda|F(t)$ derived in 
Lemma~\ref{Lemma:Fa}.
We write this as corresponding decomposition 
$\|Tf-Tg\|_{LB} \leq \sum_{\tau=1}^3 
\|Tf-Tg\|_{LB}^{(\tau)}$.

We start with the easiest contribution $\tau=2$ where we substitute
$u=|\lambda|\pi t$:
\begin{align*}
\|Tf-Tg\|_{LB}^{(2)} 
&:= 
\sup_{a\geq 0} 
\frac{2\delta}{\pi} \int_0^{\infty}\!du \;
\frac{(1+a) \zeta_\lambda u }{
\big( u^2 + \big(a+1+|\lambda|F(\frac{u}{|\lambda|\pi})
+ u \cot(\frac{|\lambda|\pi}{1-2|\lambda|}) 
\big)^2 \big)^{\frac{3}{2}}}\;.
\end{align*}
There is no doubt that $F(t)$ is of positive mean also for this
integral (the small-$u$-region is suppressed) so that it is safe to
put $F(\,.\,)\mapsto 0$. We postpone this proof and temporarily work
with the conservative estimate
$1+|\lambda|F(\frac{u}{|\lambda|\pi})\geq
h_\lambda:=1-\frac{|\lambda|}{5}$. This reduces the problem to a
standard integral \cite[\S 3.252.7]{Gradsteyn:1994??}:
\begin{align}
&\|Tf-Tg\|_{LB}^{(2)} 
\nonumber
\\
&=
\sup_{a\geq 0} 
\int_0^{\infty}\!du \;
\frac{2\delta \zeta_\lambda (1+a) 
\sin^3(\frac{|\lambda|\pi}{1-2|\lambda|}) \cdot u}{\pi 
\big( u^2 + 2u 
\sin(\frac{|\lambda|\pi}{1-2|\lambda|}) 
\cos(\frac{|\lambda|\pi}{1-2|\lambda|}) 
(a+h_\lambda)
+\big(\sin(\frac{|\lambda|\pi}{1-2|\lambda|}) (a+h_\lambda)\big)^2
\big)^{\frac{3}{2}}}
\nonumber
\\
&=\sup_{a\geq 0} 
\frac{2\delta \zeta_\lambda}{\pi} \frac{a+1}{a+h_\lambda}
\frac{\sin^2(\frac{|\lambda|\pi}{1-2|\lambda|})}{
1+\cos(\frac{|\lambda|\pi}{1-2|\lambda|})}
=\delta \cdot 
\frac{2 \zeta_\lambda}{h_\lambda \pi} 
\frac{\sin^2(\frac{|\lambda|\pi}{1-2|\lambda|})}{
1+\cos(\frac{|\lambda|\pi}{1-2|\lambda|})}
\label{Tfg2}
\end{align}
which becomes arbitrarily small for $\lambda\to 0$.

The contribution $\tau=1$ is more difficult, but can be controlled. 
Again we expect $F(t)$ to be of positive mean. We postpone the proof
and temporarily work with a conservative estimate 
$1+|\lambda|F(t)\geq h_\lambda:=1-\frac{|\lambda|}{5}$ for 
$0\leq |\lambda|\leq \frac{1}{6}$. Then 
$(a+1) \leq \frac{a+h_\lambda}{h_\lambda}$ and consequently 
\begin{align}
\|Tf-Tg\|^{(1)}_{LB}
&
\leq \sup_a \frac{\delta}{h_\lambda} \int_0^{\infty}\!dt \;
\frac{2|\lambda| 
\frac{\sin^3(\frac{|\lambda|\pi}{1-2|\lambda|})}{(|\lambda|\pi)^3}
(a+h_\lambda)\cdot  
(1+t)^{1-|\lambda|} \log(1+t) }{
\Big(t + (a{+}h_\lambda) 
\frac{\sin(\frac{\lambda|\pi}{1-2|\lambda|}) }{|\lambda|\pi}
\cos(\frac{\lambda|\pi}{1-2|\lambda|}) \Big)^3 }
\nonumber
\\
&=
\frac{2\delta |\lambda|}{h_\lambda \cos  (\frac{|\lambda|\pi}{1-2|\lambda|})}
\frac{\sin^2(\frac{|\lambda|\pi}{1-2|\lambda|})}{(|\lambda|\pi)^2}
\sup_{A_\lambda}\int_0^{\infty}\!dt \;
\frac{A_\lambda(a)\cdot  (1+t)^{1-|\lambda|} \log(1+t) }{
\big(t + A_\lambda(a)\big)^3 }\;,
\end{align}
where 
$A_\lambda(a):= (a{+}h_\lambda) 
\frac{\sin(\frac{\lambda|\pi}{1-2|\lambda|}) }{|\lambda|\pi}
\cos(\frac{\lambda|\pi}{1-2|\lambda|})$.
We use Young's inequality 
\begin{align}
\big(A_\lambda(a)\big)^\lambda(1+t)^{1-|\lambda|} &\leq \lambda A_\lambda(a) 
+(1-\lambda)(1+t)
\nonumber
\\
&=
(1-\lambda)+ (2\lambda-1) A_\lambda(a) + 
(1-\lambda)(t+ A_\lambda(a))
\label{Young}
\end{align}
to write 
\begin{align}
|Tf-Tg\|^{(2)}_{LB}
&
\leq \sup_{A_\lambda}
\frac{2\delta |\lambda|}{h_\lambda \cos  (\frac{|\lambda|\pi}{1-2|\lambda|})}
\frac{\sin^2(\frac{|\lambda|\pi}{1-2|\lambda|})}{(|\lambda|\pi)^2}
\big(A_\lambda(a)\big)^{1-|\lambda|}
\nonumber
\\*
&\times 
\int_0^{\infty}\!dt \;
\Big(
\frac{(1-\lambda)\log(1+t) }{(t + A_\lambda(a))^2 }
+\frac{((1-\lambda)+ (2\lambda-1) A_\lambda(a) )
\log(1+t) }{(t + A_\lambda(a))^3 }
\Big)
\nonumber
\\*
&= \delta \cdot 
\frac{(1-\frac{|\lambda|}{5})^{-1}}{
\cos  (\frac{|\lambda|\pi}{1-2|\lambda|})}
\frac{\sin^2(\frac{|\lambda|\pi}{1-2|\lambda|})}{(|\lambda|\pi)^2}
\cdot \sup_{A_\lambda} C_\lambda(A_\lambda(a))\;,
\end{align}
where (after integration by parts)
\begin{align}
C_\lambda(x)&:= |\lambda| x^{1-|\lambda|}
\int_0^{\infty}\!dt \;
\Big(
\frac{2(1-|\lambda|)}{(t+1)(t + x) }
+\frac{(1-|\lambda|)+ (2|\lambda|-1)x}{(1+t) (t + x)}
\Big)
\nonumber
\\
&=
\frac{-|\lambda|^2 + |\lambda|(1-2|\lambda|) (x-1) }{x^{|\lambda|}(x-1) }
+\frac{x^2 - (1-|\lambda|)x}{(x-1)^2} 
\frac{\log x^{|\lambda|}}{x^{|\lambda|}}
\;.
\end{align}
The maximum of $C_\lambda$ is governed by the function
$\frac{\log x^{|\lambda|}}{x^{|\lambda|}}$ which reaches $\frac{1}{e}$ 
at $x=e^{\frac{1}{|\lambda|}}$. 
For the range of $|\lambda|$ under consideration, this becomes huge so that all 
other terms except for $x^{|\lambda|}\approx e$ become 
negligible. Therefore we expect 
\[
\sup_x C_\lambda(x)
\leq  \frac{1+|\lambda|}{e}\;.
\]
A numerical investigation confirms this. 

\bigskip

It remains the contribution from $\tau=3$. There is a short cut
resulting from the crude bound $(\Delta R)^{(3)}(t) \leq \frac{\delta
  t}{|\lambda|}= \frac{(\Delta R)^{(3)}(t) }{|\lambda|^2\pi \zeta_\lambda}$.
Inserting this relation into (\ref{Tfg2}) gives
\begin{align}
\|Tf-Tg\|^{(3)}_{LB} \leq \frac{\delta}{1-\frac{|\lambda|}{5}} 
\frac{\sin^2(\frac{|\lambda|\pi}{1-2|\lambda|})}{(|\lambda|\pi)^2}
\frac{2}{1+\cos(\frac{|\lambda|\pi}{1-2|\lambda|})}\;.
\label{Tfg3-crude}
\end{align}
We show that this na\"{\i}ve bound is optimal. For that we start from 
Taylor's formula
\[
(\Delta R)^{(3)}(t)=
\delta |\lambda|
\int_0^1 d\xi \;\frac{(1-\xi) t
  (\log(1+t))^2}{(1+t)^{(1-\xi)|\lambda|}}\;.
\]
Up to an order $|\lambda|^2$-error we may replace $F(t)\mapsto
0$. Then 
\begin{align}
&\|Tf-Tg\|_{LB}^{(3)} 
\nonumber
\\
&=\sup_{a\geq 0} 
\int_0^1 \!\! d\xi \int_0^{\infty}\!dt \;
\frac{(1-\xi) t
 (\log(1{+}t))^2}{(1+t)^{(1-\xi)|\lambda|}}
\frac{2|\lambda|^2\delta  (1+a) }{
\big((|\lambda|\pi t)^2 
+ \big(a+1+|\lambda|\pi t\cot(\frac{|\lambda|\pi}{1-2|\lambda|}) 
\big)^2 \big)^{\frac{3}{2}}}
\nonumber
\\
&\leq 
\frac{\delta}{\cos (\frac{|\lambda|\pi}{1-2|\lambda|})}
\Big( \frac{\sin(\frac{|\lambda|\pi}{1-2|\lambda|})}{|\lambda|\pi}\Big)^2
\sup_{A_\lambda\geq 0} \tilde{C}_\lambda(A_\lambda)\;,
\label{Tfg3a}
\\
&\tilde{C}_\lambda(A_\lambda)
:=2|\lambda|^2
\int_0^1 d\xi \int_0^{\infty}\!dt \;
\frac{(1-\xi) A_\lambda (\log(1+t)^2) (1+t)^{1-(1-\xi)|\lambda|}
}{(t+A_\lambda)^3 }\;,
\nonumber
\end{align} 
where $A_\lambda:= \frac{(1+a)}{|\lambda|\pi}
\sin (\frac{|\lambda|\pi}{1-2|\lambda|})
\cos (\frac{|\lambda|\pi}{1-2|\lambda|})$.
Inserting Youngs's inequality 
(\ref{Young}) we get:
\begin{align}
\tilde{C}_\lambda (\alpha)
&=2|\lambda|^2
\int_0^1 \! d\xi \int_0^{\infty}\!dt \;
\Big\{
\nonumber
\\
& \qquad\qquad  
(1{-}\xi) \alpha^{1-(1-\xi)|\lambda|} (1-|\lambda|(1-\xi))
\Big(\frac{(\log(1+t))^2}{(t+\alpha)^2 }
+\frac{(\log(1+t))^2}{(t+\alpha)^3 }
\Big)
\nonumber
\\
& \qquad\qquad + 
(1-\xi) (2|\lambda|(1-\xi)-1)
\alpha^{2-(1-\xi)|\lambda|}  
\frac{(\log(1+t))^2}{
(t+\alpha)^3 }
\Big\}
\nonumber
\\
&= 2
\int_0^{\infty}\!dt \;
\Big\{
\alpha^{1-|\lambda|} \Big(
-\frac{2\alpha^{|\lambda|}-2}{(\log \alpha)^3}
+\frac{\alpha^{|\lambda|}+2|\lambda|-1}{(\log \alpha)^2}
-\frac{|\lambda|(1-|\lambda|)}{\log \alpha}\Big)
\nonumber
\\
& \qquad\qquad\qquad\qquad \times  \Big(\frac{(\log(1+t))^2}{(t+\alpha)^2 }
+\frac{(\log(1+t))^2}{(t+\alpha)^3 }
\Big)
\nonumber
\\
& \qquad+ 
\alpha^{2-|\lambda|}  
\Big(\frac{4\alpha^{|\lambda|}-4}{(\log \alpha)^3}
-\frac{\alpha^{|\lambda|}+4|\lambda|-1}{(\log \alpha)^2}
+\frac{|\lambda|(1-2|\lambda|)}{\log \alpha}\Big)
\frac{(\log(1+t))^2}{
(t+\alpha)^3 }
\Big\}.
\end{align}
We need the following integrals
\begin{align*}
&\int_0^\infty dt \;\frac{(\log(1+t))^2}{(t+\alpha)^2}
= \left\{
\begin{array}{c@{\text{\quad for }}l}
\dfrac{2\mathrm{Li}_2(1-\alpha)}{(1-\alpha)} 
& 0<\alpha <1 \\
2 & \alpha = 1
\\
\dfrac{(\log\alpha)^2 
+2  \mathrm{Li}_2(1-\frac{1}{\alpha})}{(\alpha-1)} & \alpha >1
\end{array}
\right.
\\
&\int_0^\infty dt \;\frac{(\log(1+t))^2}{(t+\alpha)^3}
= \left\{
\begin{array}{c@{\text{\quad for }}l}
\dfrac{-\log \alpha -
  \mathrm{Li}_2(1-\alpha)}{(1-\alpha)^2} 
& 0<\alpha <1 \\
\frac{1}{4} & \alpha = 1
\\
\dfrac{\frac{1}{2} (\log(\alpha))^2 -\log \alpha
+  \mathrm{Li}_2(1-\frac{1}{\alpha})}{(\alpha-1)^2} & \alpha >1
\end{array}
\right.
\end{align*}
We specify to $\alpha>1$ (the other cases are analytic continuations):
\begin{align}
\tilde{C}_\lambda (\alpha)
&= \frac{\alpha^2}{(\alpha-1)^2}
\Big\{
\Big(
1-\frac{1}{\alpha^{|\lambda|}}
-\frac{\log(\alpha^{|\lambda|})}{\alpha^{|\lambda|} }\Big)
\Big(1+2|\lambda|^2  \frac{\mathrm{Li}_2(1-\tfrac{1}{\alpha})}{
(\log\alpha^{|\lambda|})^2 }\Big)
\nonumber
\\
& \qquad\qquad\qquad
+ 
\Big(- \frac{8|\lambda|^2(1-\frac{1}{\alpha^{|\lambda|}})}{
(\log \alpha^{|\lambda|})^2 }
+ \frac{2|\lambda|(1-\frac{1-4|\lambda|}{\alpha^{|\lambda|}})}{(\log
    \alpha^{|\lambda|} )}
-\frac{2|\lambda|(1-2|\lambda|)}{\alpha^{|\lambda|} }\Big)
\Big\}
\nonumber
\\
&+ \frac{\alpha}{(\alpha-1)^2}\Big\{
\Big(
\frac{4|\lambda|^2(1-\frac{1}{\alpha^{|\lambda|}})}{
(\log \alpha^{|\lambda|})^2}
-\frac{2|\lambda|(1-\frac{1-2|\lambda|}{\alpha^{|\lambda|}}) }{
(\log \alpha^{|\lambda|} ) }
+\frac{2|\lambda|(1-|\lambda|)}{\alpha^{|\lambda|} }\Big)
\nonumber
\\
&
+ 
\Big(
\frac{2|\lambda|(1-\frac{1}{\alpha^{|\lambda|}})}{(\log \alpha^{|\lambda|})}
-1+\frac{1-2|\lambda|}{\alpha^{|\lambda|}}
+\frac{(\log \alpha^{|\lambda|}) (1-|\lambda|)}{
\alpha^{|\lambda|} }\Big)
\nonumber
\\
&\qquad\qquad\qquad\qquad\qquad\qquad\qquad\qquad
\times \Big(1+2|\lambda|^2  \frac{\mathrm{Li}_2(1-\tfrac{1}{\alpha})}{
(\log\alpha^{|\lambda|})^2 }\Big)
\Big\}\;.
\end{align}
This shows $\lim_{\alpha\to \infty} \tilde{C}_\lambda(\alpha)=1$. 
The next-to-leading terms turn out to be 
$1-\frac{\log x}{x}+\frac{2|\lambda|}{\log x}$, where $x:=
\alpha^{|\lambda|}$. This function gets bigger $1$ with a local
maximum $\approx 1+\frac{|\lambda|}{4}$ for $|\lambda|\leq
\frac{1}{6}$. 
A closer numerical simulation
confirms this bound  
$\sup_\alpha \tilde{C}_\lambda(\alpha)\leq 1+\frac{|\lambda|}{4}$ for all
$0\leq |\lambda|\leq \frac{1}{6}$. Inserted into (\ref{Tfg3a}) gives
no improvement compared with the crude bound (\ref{Tfg3-crude}).
\hfill $\square$%
\end{proof}

\section{Equicontinuity and Arzel\`a-Ascoli theorem}

The remaining task is to prove a variant of the Arzel\`a-Ascoli
theorem which establishes that if a
subset $\mathcal{T} \subseteq LB$ is equicontinuous and
pointwise bounded, then $\mathcal{T}$ is
compact. We start with the equicontinuity:

\begin{lemma}
The subset $T\mathcal{K}_\lambda \subseteq LB$ is equicontinuous in 
the norm topology of $LB$. 
More precisely, given $\epsilon>0$ one has 
$
\big|(1+a)(Tf)'(a)-(1+b)(Tf)'(b)\big| < \epsilon$ 
for all $f\in \mathcal{K}_\lambda$ and all $a,b\in \mathbb{R}_+$ 
with $|a-b|< \epsilon$.
\end{lemma}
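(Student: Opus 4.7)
The plan is to show that $(1+a)(Tf)'(a)$ is Lipschitz in $a$ with a constant depending only on $\lambda$ (not on $f\in\mathcal{K}_\lambda$), which immediately yields equicontinuity. Starting from (\ref{Tf-prime}), write
\[
(1+a)(Tf)'(a) = -1 + I_f(a), \qquad
I_f(a):=|\lambda|(1+a)\int_0^{\infty}\frac{dt}{(|\lambda|\pi t)^2+(a+Rf(t))^2}\;,
\]
so the task reduces to bounding $|I_f(a)-I_f(b)|$ uniformly in $f$.

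First I would differentiate $I_f$ under the integral sign. Setting $D(a,t):=(|\lambda|\pi t)^2+(a+Rf(t))^2$, one gets
\[
I_f'(a)=|\lambda|\int_0^\infty\frac{dt}{D(a,t)}-2|\lambda|(1+a)\int_0^\infty\frac{(a+Rf(t))\,dt}{D(a,t)^2}\;,
\]
and using $|a+Rf(t)|\leq D(a,t)^{1/2}$ I would bound
\[
|I_f'(a)|\leq |\lambda|\int_0^\infty\frac{dt}{D(a,t)}+2|\lambda|(1+a)\int_0^\infty\frac{dt}{D(a,t)^{3/2}}.
\]
Both integrals are now controlled by the lower bound on $Rf$ coming from Lemma~\ref{Lemma:Fa} combined with $F(a)\geq -\tfrac{1}{5}$:
\[
Rf(t)\;\geq\; 1-\tfrac{|\lambda|}{5}+|\lambda|\pi t\cot\!\Big(\tfrac{|\lambda|\pi}{1-2|\lambda|}\Big)\;,
\]
which is crucial because it is uniform in $f\in\mathcal{K}_\lambda$.

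Next, substituting $u=|\lambda|\pi t$, $\beta:=a+1-\tfrac{|\lambda|}{5}$, $c:=\cot(\tfrac{|\lambda|\pi}{1-2|\lambda|})$ and then rescaling $u=\beta v$ reduces the two integrals to
\[
\int_0^\infty\frac{dt}{D}\leq\frac{1}{|\lambda|\pi\,\beta}\int_0^\infty\frac{dv}{(1+cv)^2+v^2}\;,\qquad
\int_0^\infty\frac{dt}{D^{3/2}}\leq\frac{1}{|\lambda|\pi\,\beta^2}\int_0^\infty\frac{dv}{[(1+cv)^2+v^2]^{3/2}}\;,
\]
both of which are elementary rational integrals with closed form depending only on $c$ (hence only on $\lambda$). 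Since the $(1+a)$ prefactor on the second term satisfies $(1+a)/\beta^2\leq (1-\tfrac{|\lambda|}{5})^{-2}/(1+a)$, combining the two estimates gives $|I_f'(a)|\leq K_\lambda/(1+a)$ for a constant $K_\lambda$ depending only on $\lambda$.

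Finally, for any $a,b\geq 0$ with $a\leq b$,
\[
|I_f(a)-I_f(b)|\leq\int_a^b|I_f'(s)|\,ds\leq K_\lambda\log\frac{1+b}{1+a}\leq K_\lambda|a-b|\;,
\]
uniformly in $f\in\mathcal{K}_\lambda$. Choosing $\delta=\epsilon/K_\lambda$ (or just $\epsilon$ after absorbing the constant into the modulus of continuity) proves the lemma. The main technical obstacle is keeping the bound uniform over the entire family: this is precisely why the previous section's explicit lower envelope on $Rf$ (in particular the $F(a)\geq-\tfrac{1}{5}$ control) is needed, since otherwise the denominators $D(a,t)$ could not be estimated from below in an $f$-independent way.
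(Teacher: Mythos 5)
Your proof is correct and follows essentially the same route as the paper: both bound the $x$-derivative of $(1+x)(Tf)'(x)$ by the two terms $|\lambda|\int_0^\infty dt/D$ and $2|\lambda|(1+x)\int_0^\infty dt/D^{3/2}$ (using $x+Rf(t)\le D^{1/2}$), estimate the denominators from below via the uniform bound $Rf(t)\ge 1-\tfrac{|\lambda|}{5}+|\lambda|\pi t\cot\big(\tfrac{|\lambda|\pi}{1-2|\lambda|}\big)$, and then integrate in $x$; the only cosmetic difference is that the paper quotes the bound $\int_0^\infty \tfrac{|\lambda|\,dt}{D}\le \tfrac{|\lambda|}{1-2|\lambda|}\tfrac{1}{1+x}$ from Section~\ref{sec:TKK}, whereas you evaluate that integral directly in closed form. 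To get the lemma's literal "more precisely" statement (difference $<\epsilon$ whenever $|a-b|<\epsilon$, not $|a-b|<\epsilon/K_\lambda$) you should still verify, as the paper does, that your Lipschitz constant satisfies $K_\lambda\le 1$ for $0\le|\lambda|\le\tfrac{1}{6}$ --- which your estimates do yield, since they agree with the paper's constant up to a harmless factor $(1-\tfrac{|\lambda|}{5})^{-1}$ on the first term.
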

\begin{proof}
We estimate via (\ref{Tf-prime})
\begin{align*}
&\big|(1+a)((Tf)'(a)-(1+b)(Tf)'(b))\big|
=\Big| \int_b^a dx \frac{d}{dx}
\big( (1+x)(Tf)'(x)\big)\Big|
\nonumber
\\
&=\Big| \int_b^a dx \;
\int_0^\infty dt\; \frac{d}{dx} \frac{|\lambda|(1+x)}{
(|\lambda|\pi t)^2 + (x+Rf(t))^2}
\Big|
\nonumber
\\
&=|\lambda|
\Big| \int_b^a dx \;
\int_0^\infty dt\; 
\frac{|\lambda|}{(|\lambda|\pi t)^2 + (x+Rf(t))^2 }
-\int_0^\infty dt\; 
\frac{
2|\lambda|(1+x)(x+Rf(t))}{\big(
(|\lambda|\pi t)^2 + (x+Rf(t))^2\big)^2 }
\Big|\;.
\end{align*}
We have the following upper bound:
\begin{align*}
\int_0^\infty dt\; \frac{2|\lambda|(1+x)(x+Rf(t))}{
((|\lambda|\pi t)^2 + (x+Rf(t))^2)^2}
&\leq 
\int_0^\infty dt\; \frac{2|\lambda|(1+x)}{
((|\lambda|\pi t)^2 + (x+1-\frac{|\lambda|}{5}+|\lambda|\pi t  
\cot \frac{|\lambda|\pi}{1-2|\lambda|})^2)^\frac{3}{2}}
\\*
&=\frac{2(1+x)\sin\frac{|\lambda|\pi}{1-2|\lambda|} }{\pi
(x+1-\frac{|\lambda|}{5})^2
(1+\cos \frac{|\lambda|\pi}{1-2|\lambda|})
}\;.
\end{align*}
We ignore possible cancellations and add the upper bound 
$\int_0^\infty dt\; \frac{|\lambda| }{
(|\lambda|\pi t)^2 + (x+Rf(t))^2 }
\leq \frac{|\lambda|}{1-2|\lambda|} \frac{1}{1+x}$ established 
in the proof of $T\mathcal{K}_\lambda\subseteq 
\mathcal{K}_\lambda$. Taking also the supremum in $x$ we conclude
\begin{align*}
&\big|(1+a)((Tf)'(a)-(1+b)(Tf)'(b))\big| 
\\
&\leq |a-b|\frac{|\lambda|}{1-2|\lambda|} 
\cdot \Big(1 + \frac{\sin\frac{|\lambda|\pi}{1-2|\lambda|} }{
\frac{|\lambda|\pi}{1-2|\lambda|} } \frac{2(1-\frac{|\lambda|}{5})^{-2}}{
(1+\cos \frac{|\lambda|\pi}{1-2|\lambda|})}\Big)\;.
\end{align*}
The rhs is $\leq |a-b|$ for any $0\leq |\lambda|\leq \frac{1}{6}$.
\hfill $\square$%
\end{proof}

The standard Arzel\'a-Ascoli theorem concerns continuous functions on
\emph{compact} spaces. This can largely be generalised to
$\mathcal{C}(X,Y)$ equipped with the compact-open topology relative to
general Hausdorff spaces $X,Y$, see \cite{Myers:1946??}. The idea is to
prove that for an equicontinuous family $\mathcal{T}$, 
the compact-open topology and
the pointwise topology coincide. Pointwise compactness of
$\mathcal{T}(x)$ for every $x\in  X$ implies 
compactness of $\prod_{x\in X} \mathcal{T}(x)$ by Tychonoff's theorem,
thus compactness of the equicontinuous family $\mathcal{T}$ in
the compact-open topology. We cannot make use of this setting because
to prove continuity of $T$ we had to control the Hilbert transform via
the global behaviour of functions in $\mathcal{K}_\lambda$. It seems unlikely
that this can be replaced by a local control in the compact-open
topology.

Being forced to work in norm topology, the only chance to rescue
Arzel\'a-Ascoli for equicontinuous families in $LB$ is to restrict to
compact subsets of $\mathbb{R}_+$. This is not unreasonable because we
worked originally over the cut-off space $[0,\Lambda^2]$. We find it
necessary to  reprove the Arzel\'a-Ascoli theorem for equicontinuous
subsets of $LB$.

\begin{lemma}
The subset $T\mathcal{K}_\lambda\subseteq LB$ is relatively compact in the 
$\|~\|_{LB}$ topology if restricted to any compact interval 
$[0,\Lambda^2]$.
\end{lemma}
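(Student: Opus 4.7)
The plan is to apply the classical Arzel\`a-Ascoli theorem to the auxiliary family
\begin{equation*}
\mathcal{F} := \bigl\{\phi_f : f \in \mathcal{K}_\lambda\bigr\} \subseteq \mathcal{C}([0,\Lambda^2],\mathbb{R}),
\qquad \phi_f(x) := (1+x)(Tf)'(x),
\end{equation*}
and then to transfer relative compactness of $\mathcal{F}$ in the supremum norm back to relative compactness of $T\mathcal{K}_\lambda|_{[0,\Lambda^2]}$ in the restricted $LB$-norm. The key observation making this reduction work is that for any $h_1,h_2\in \mathcal{C}^1([0,\Lambda^2])$ with $h_j(0)=0$ one has
\begin{equation*}
\|h_1-h_2\|_{LB|_{[0,\Lambda^2]}} = \sup_{x\in[0,\Lambda^2]} \bigl|(1+x)h_1'(x)-(1+x)h_2'(x)\bigr|,
\end{equation*}
the $|h(0)|$-term dropping out. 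Since $Tf(0)=0$ for every $f\in\mathcal{K}_\lambda$ by construction of $T$, relative compactness of $T\mathcal{K}_\lambda|_{[0,\Lambda^2]}$ in the restricted $LB$-norm is thus equivalent to relative compactness of $\mathcal{F}$ in sup-norm.

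Next I would verify the three hypotheses of classical Arzel\`a-Ascoli for $\mathcal{F}$. Uniform boundedness is immediate from $Tf \in \mathcal{K}_\lambda$ (Theorem i)), which gives $|\phi_f(x)|\leq 1$ uniformly on $\mathbb{R}_+$. Continuity of each $\phi_f$ on $[0,\Lambda^2]$ follows from $Tf\in \mathcal{C}^1(\mathbb{R}_+)$. Equicontinuity of $\mathcal{F}$ on $[0,\Lambda^2]$ is precisely the content of the preceding lemma, which even gives the uniform estimate $|\phi_f(a)-\phi_f(b)|\leq |a-b|$. Classical Arzel\`a-Ascoli then yields: for every sequence $(f_n)\subseteq \mathcal{K}_\lambda$ there exists a subsequence $(f_{n_k})$ and a function $\phi\in \mathcal{C}([0,\Lambda^2])$ with $\phi_{f_{n_k}} \to \phi$ uniformly on $[0,\Lambda^2]$.

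Finally I would construct the candidate limit by setting
\begin{equation*}
g(x) := \int_0^x \frac{\phi(t)}{1+t}\, dt, \qquad x\in [0,\Lambda^2],
\end{equation*}
and verify that $g\in \mathcal{C}^1([0,\Lambda^2])$ with $g(0)=0$ and $(1+x)g'(x)=\phi(x)$, so that $g$ lies in the natural restriction of $LB$ to $[0,\Lambda^2]$. By the displayed identity for the restricted norm, $Tf_{n_k}\to g$ in $\|\,\cdot\,\|_{LB|_{[0,\Lambda^2]}}$. Since $LB|_{[0,\Lambda^2]}$ is a metric space, sequential relative compactness coincides with relative compactness, and the claim follows.

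There is no genuine obstacle: all the analytic work has been done in the previous lemma (equicontinuity) and in the proof that $T\mathcal{K}_\lambda\subseteq \mathcal{K}_\lambda$ (uniform bounds). The only conceptual point is the identification of the restricted $LB$-norm with the sup-norm on the transformed family $\mathcal{F}$, made possible by the pointwise boundary condition $Tf(0)=0$. Once this identification is made, the result reduces to a direct application of the classical Arzel\`a-Ascoli theorem on $\mathcal{C}([0,\Lambda^2],\mathbb{R})$.
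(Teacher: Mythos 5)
Your argument is correct, and it reaches the same conclusion from the same two analytic inputs as the paper (the uniform bound $|(1+x)(Tf)'(x)|\leq 1-|\lambda|$ coming from $T\mathcal{K}_\lambda\subseteq\mathcal{K}_\lambda$, and the equicontinuity lemma with its Lipschitz-type estimate $\leq|a-b|$), but the packaging is genuinely different. The paper deliberately \emph{reproves} Arzel\`a--Ascoli inside $LB$: it covers $[0,\Lambda^2]$ by finitely many $\tfrac{\epsilon}{3}$-neighbourhoods, extracts by Bolzano--Weierstra\ss{} a subsequence of $\big((1+x)f_k'(x)\big)$ converging at the finitely many centres, and closes with the usual $\tfrac{\epsilon}{3}$ triangle estimate to get a uniformly Cauchy, hence $\|\cdot\|_{LB}$-convergent, subsequence. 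You instead observe that on the subspace of $\mathcal{C}^1([0,\Lambda^2])$-functions vanishing at $0$ the map $h\mapsto(1+\cdot)\,h'$ is an isometry onto a subset of $\big(\mathcal{C}([0,\Lambda^2]),\|\cdot\|_\infty\big)$, so the classical Arzel\`a--Ascoli theorem can simply be cited for the family $\phi_f=(1+\cdot)(Tf)'$, with the limit pulled back via $g(x)=\int_0^x\phi(t)(1+t)^{-1}dt$; the boundary condition $Tf(0)=0$ is exactly what makes the $|h(0)|$-term in the norm harmless. Your route is shorter and cleaner, at the cost of introducing the auxiliary identification; the paper's route is self-contained and makes explicit where compactness of $[0,\Lambda^2]$ (finite subcover) enters and why the argument cannot be run on all of $\mathbb{R}_+$ --- a point the surrounding discussion in the paper emphasizes. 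Either proof is complete; just make sure to state, as you do, that sequential relative compactness suffices because the restricted $LB$-space is metric.
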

\begin{proof}  Choose any $\Lambda^2>0$. The family
$T\mathcal{K}_\lambda\subseteq LB$ is bounded and equicontinuous on
$[0,\Lambda^2]$ with respect to $f\mapsto (1+x)f'(x)$. On metric
spaces such as $LB$, compactness is equivalent to sequentially
compactness. We thus have to prove that any sequence $(f_k) \in
T\mathcal{K}_\lambda$ has a $\|~\|_{LB}$-convergent subsequence when
restricted to $[0,\Lambda^2]$.

Given $\epsilon>0$, there is for every $0<x<\Lambda^2$ an open
$\frac{\epsilon}{3}$-neighbourhood
$U_{\frac{\epsilon}{3}}(x):=\{y\in \mathbb{R}_+\;:~
  |y-x|<\frac{\epsilon}{3}\}$ which by the equicontinuity of
  $T\mathcal{K}_\lambda$ has the property that
\[
\big|(1+s)f'(s)- (1+x)f'(x)\big| < \frac{\epsilon}{3} \quad
\text{for all } s\in U_{\frac{\epsilon}{3}}(x) \text{ and all } 
f\in T\mathcal{K}_\lambda\;.
\]
These $\big\{U_{\frac{\epsilon}{3}}(x)\big\}_{0<x<\Lambda^2}$ form
an open cover of $[0,\Lambda^2]$ which by the compactness of 
$[0,\Lambda^2]$ can be reduced to a finite subcover 
$\big\{U_{\frac{\epsilon}{3}}(x_i)\big\}_{i=1,\dots,N}$
(it is this step which does not work for $\mathbb{R}_+$). It suffices to take 
$x_i=\frac{\epsilon}{4}(2i-1)$ and thus
$N=\frac{2\Lambda^2}{\epsilon}$. 

Start at $x_1$ and note that $((1+x_1)f'_k(x_1))_{k\in\mathbb{N}}$ is 
bounded for every member of the sequence $(f_k)$. 
By the Bolzano-Weierstra\ss{} theorem there is a subsequence
$(f_{k_1}(x_1))_{k_1 \in \mathbb{N}}$ such that
$((1+x_1)f'_{k_1}(x_1))_{k_1\in\mathbb{N}}$ converges at $x_1$.
Repeat this to construct a subsequence $(f_{k_2})_{k_2 \in
  \mathbb{N}}$ of $(f_{k_1})_{k_1 \in \mathbb{N}}$ such that 
both $((1+x_1) f_{k_2}'(x_1))_{k_2\in \mathbb{N}}$ and 
$((1+x_2) f_{k_2}'(x_2))_{k_2\in \mathbb{N}}$ converge. And so
on. This eventually produces a subsequence 
$(f_{k_N})_{k_N\in \mathbb{N}}$ of $(f_k)$ which has the property
that $((1+x_i) f_{k_N}'(x_i))_{k_N\in \mathbb{N}}$ converges for every
$i=1,\dots N$. We rename $(f_{k_N})_{k_N\in \mathbb{N}}=(\tilde{f}_\ell)_{\ell \in
  \mathbb{N}}$ for simplicity. 

Convergence implies that for every 
$i=1,\dots ,N$ there is a $K_i(\epsilon) \in \mathbb{N}$ such that 
\[
\big| (1+x_i)\tilde{f}_\ell'(x_i)- (1+x_i)\tilde{f}'_m(x_i)\big|
<\frac{\epsilon}{3} \quad \text{for all } 
\ell,m\geq K_i(\epsilon)\;.
\]

Given any $x\in [0,\Lambda]$, choose one index $j\in \{1,\dots,N\}$
such that $x\in U_{\frac{\epsilon}{3}}(x_j)$. 
Then for any 
$\ell,m \geq K(\epsilon):=\max_{i=1,\dots,N} K_i(\epsilon)$ one has
\begin{align*}
\big|(1+x)\tilde{f}_\ell'(x)-
(1+x)\tilde{f}_m'(x)\big| 
&< \big|(1+x)\tilde{f}_\ell'(x)-
(1+x_j)\tilde{f}_\ell'(x_j)\big| 
\\
&+\big|(1+x_j)\tilde{f}_\ell'(x_j)-
(1+x_j)\tilde{f}_m'(x_j)\big| 
\\
& +
\big|(1+x_j)\tilde{f}_m'(x_j)-
(1+x)\tilde{f}_m'(x)\big| <\epsilon
\end{align*}
In other words, any sequence $(f_k)_{k\in \mathbb{N}}$ in
$T\mathcal{K}_\lambda$ has a subsequence $\tilde{f}_{\ell \in \mathbb{N}}$
such that 
$\big((1+x)\tilde{f}_\ell'(x)\big)_{\ell \in  \mathbb{N}}$ converges
uniformly on any compact interval $[0,\Lambda^2]$ to a differentiable 
limit function which belongs to the closure $\overline{T\mathcal{K}_\lambda}
\subseteq \mathcal{K}_\lambda$. This means that 
$T\mathcal{K}_\lambda$ is $\|~\|_{LB}$-relatively compact in $LB$ if
restricted to $[0,\Lambda^2]$.
\hfill $\square$%
\end{proof}

\section{Conclusions}

In proving existence of a solution of (\ref{G0b})
we closed a major gap in our programme to construct a solvable
quantum field theory model in four dimensions. In 
\cite{Grosse:2014lxa} we have studied the numerical iteration of 
(\ref{G0b}) in the spirit of the Banach fixed point theorem and
convinced ourselves that the iteration converges numerically. As shown
in Figure~\ref{fig:2} 
\begin{figure}[t]
\begin{picture}(120,94)
\put(0,46){\includegraphics[width=5.8cm,bb= 0 0 260 160]{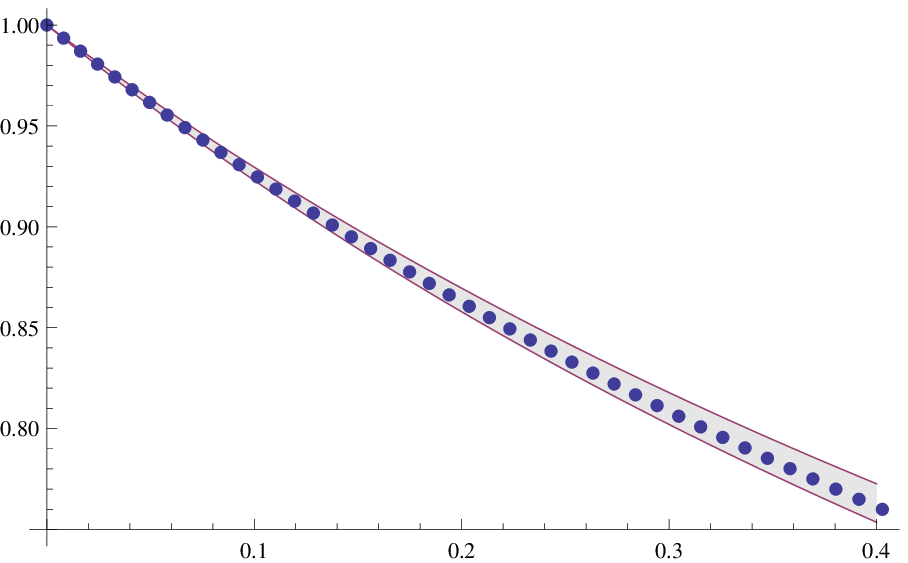}}
\put(74,46){\includegraphics[width=5.8cm,bb= 0 0 260 162]{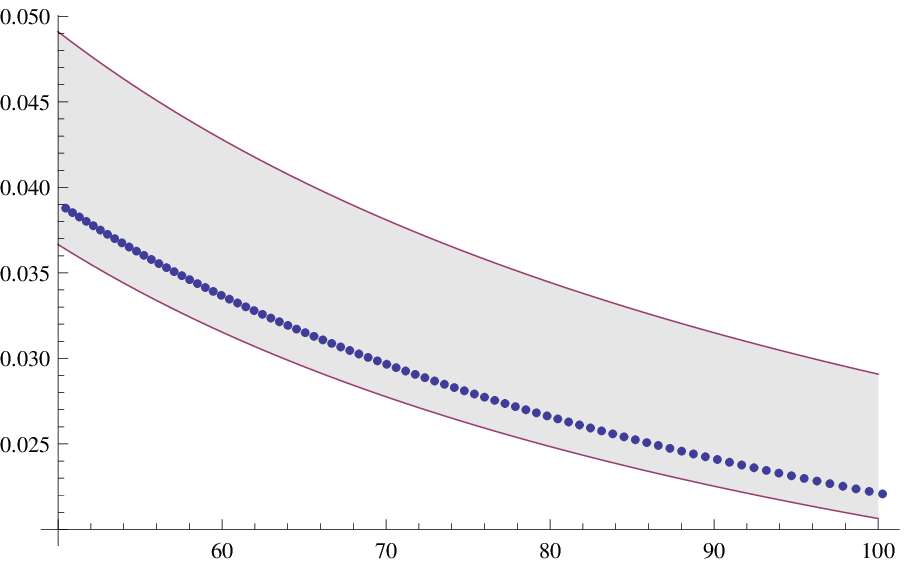}}
\put(0,0){\includegraphics[width=5.8cm,bb= 0 0 260 156]{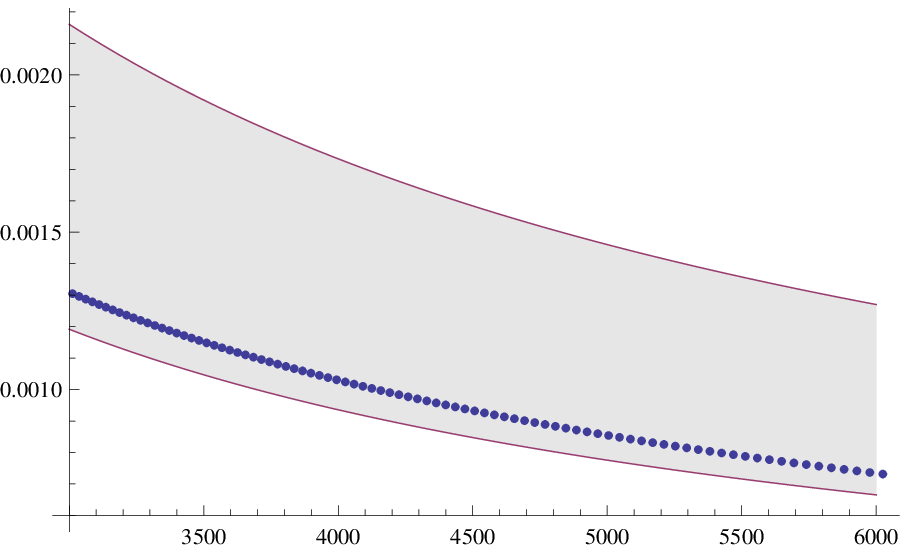}}
\put(74,00){\includegraphics[width=5.8cm,bb= 0 0 260 151]{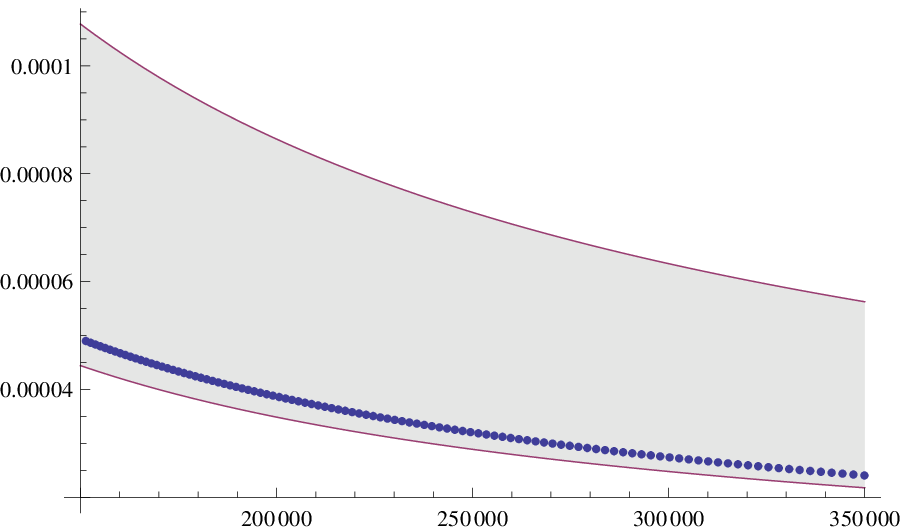}}
\end{picture}
\caption{\label{fig:2} 
Comparison between the numerical solution
$b\mapsto G_{0b}$ (obtained in \cite{Grosse:2014lxa})
of the equation 
  (\ref{G0b}) for $\lambda=-\frac{1}{2\pi}$ (blue dots) with the
  domain $\exp \mathcal{K}_\lambda$ (shaded region, defined in
  (\ref{calK})) in which we proved
  existence of a fixed point. Observe the big variation of
  $b$-intervals and corresponding values $G_{0b}$.}
\end{figure}
there is \emph{perfect agreement} between the 
numerical solution (at $\lambda=-\frac{1}{2\pi}$) and the 
analytically established fixed point domain $\exp (\mathcal{K}_\lambda)$.

The numerical treatment \cite{Grosse:2014lxa} leaves no doubt that the
solution $G_{0b}$ inside $\exp (\mathcal{K}_\lambda)$ is unique.  It
would be very desirable to prove this also analytically. As shown in
the appendix where we prove that also $G_{0b}=1$ solves (\ref{G0b})
for $\lambda<0$, the restriction to $\exp (\mathcal{K}_\lambda)$ is
essential.  We slightly missed in Prop.~\ref{prop:continuity} the
contractivity criterion of the Banach fixed point theorem. If we knew
the asymptotic exponent $\lim_{b\to \infty} \frac{-\log G_{0b}}{\log
  (1+b)}$ then we could considerably improve the bound (\ref{zeta-0})
by an integration from the other end. Another strategy would be to
prove that, starting with the very good estimate $f^{(0)}(b):=\log
G_{0b}^{(0)}=-(1-|\lambda|) \log(1+b)$, one has $(Tf^{(n)})(b)=:
f^{(n+1)}(b)\geq f^{(n)}$. Together with the boundedness proved here,
such a monotonicity would also imply uniqueness.

As discussed in \cite{Grosse:2013iva} and \cite{Grosse:2014lxa} it is
very important to know that $G_{0b}$ is a Stieltjes function (see
e.g.\ \cite{Berg}). We have no doubt that this is true, but the proof
is missing. The boundaries of $\exp (\mathcal{K}_\lambda)$ are
Stieltjes and the numerical solution is parallel to these boundaries
(Figure~\ref{fig:2}). We made recently some progress in this direction
using results of this paper in an essental way: We can prove that
\emph{any} fixed point solution $G_{0b}$ of (\ref{G0b}) inside $\exp
(\mathcal{K}_\lambda)$ has a holomorphic continuation $z\mapsto
G_{0z}$ to complex $z$ with $\mathrm{Re}(z)>-1+\frac{|\lambda|}{5}$
(in fact a bit more) and satisfies the anti-Herglotz property
$\mathrm{Im}(G_{0z})\leq 0$ for $\mathrm{Im}(z)>0$ in that half space.
To prove the Stieltjes property we have to extend these results to the
cut plane $\mathbb{C}\setminus {]-\infty,0]}$, see \cite{Berg}. The
estimates proved in this paper will definitely be relevant for this
step.

\begin{appendix}
\renewcommand{\theequation}{\thesection.\arabic{equation}}
\setcounter{equation}{0}
\makeatletter\@addtoreset{equation}{section}\makeatother

\section{The fixed point operator applied 
to the constant function}

\label{sec:G0b=1}

We have proved in sec.~\ref{sec:TKK} that the operator $T$ defined in
(\ref{Tfa}) maps $\mathcal{K}_\lambda$ defined in (\ref{calK}) into
itself. We add a small note showing the existence of fixed
points outside $\mathcal{K}_\lambda$. Concretely we show that $T0$
converges pointwise to $0$ for $\Lambda^2\to \infty$. We have to
reintroduce a finite cut-off $\Lambda^2$ to make sense of the Hilbert
transform of $\exp(0)=1$, namely
$\mathcal{H}^{\!\Lambda^2}_p(1)=\frac{1}{\pi} \log
\frac{\Lambda^2-p}{p}$. We then have for (\ref{Tf-prime})
\begin{align}
(T0)'(b) 
&:= -\frac{1}{1+b} + |\lambda| 
\int_0^{\Lambda^2} \frac{dp}{
(|\lambda| \pi p)^2 + 
(b+1-|\lambda|p\log
\frac{\Lambda^2-p}{p})^2}
\nonumber
\\
&= -\frac{1}{1+b} +\frac{1}{|\lambda|\Lambda^2}
\int_0^{\infty} \frac{dq}{
\pi^2 + \big(\frac{1+b}{\Lambda^2|\lambda|}(1+q)
-\log q\big)^2}\;, \label{T1-}
\end{align}
where we have substituted $\frac{\Lambda^2-p}{p}=q$. We prove:
\begin{lemma} For $u>0$ one has
$\displaystyle 
\int_0^{\infty} \frac{dq}{
\pi^2 + \big(u(1+q)-\log q\big)^2} 
= \frac{1}{u(u+1)}$.
\label{Lemma:Cauchy-}
\end{lemma}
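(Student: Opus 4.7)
The plan is to evaluate the integral by complex contour integration. First I would make the substitution $q = e^s$, which rewrites the claim as
\[
I(u) := \int_{-\infty}^{\infty} \frac{e^s\,ds}{\pi^2 + (u(1+e^s) - s)^2} = \frac{1}{u(u+1)}.
\]
Using the partial-fraction identity $\frac{1}{\pi^2 + w^2} = \frac{1}{2i\pi}\bigl(\frac{1}{w - i\pi} - \frac{1}{w + i\pi}\bigr)$, I would introduce the meromorphic function
\[
\varphi(s) := \frac{e^s}{u(1+e^s) - s - i\pi},
\]
whose imaginary part on $\mathbb{R}$ reproduces the integrand: the difference of the two real-line integrals $\int \varphi(s)\,ds$ and its $+i\pi$ companion equals exactly $-2\pi i\, I(u)$, even though each piece alone diverges.

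The central idea is to apply the residue theorem to $\varphi$ on the rectangle with vertices $\pm R$ and $\pm R - 2\pi i$, traversed counterclockwise. The top edge (the real axis, from $R$ to $-R$) contributes $-\int_{-R}^R \varphi(x)\,dx$. On the bottom edge, where $s = x - 2\pi i$, the periodicity $e^{s-2\pi i}=e^s$ combined with the $-2\pi i$ shift in the denominator effectively replaces $-i\pi$ by $+i\pi$, producing precisely the missing companion integrand $\frac{e^x}{u(1+e^x) - x + i\pi}$. The left vertical edge decays like $e^{-R}/R$ and vanishes as $R \to \infty$, while the right vertical edge, where the denominator is dominated by $u e^s$, contributes $\int_{-2\pi}^{0} \frac{i\,dy}{u} = \frac{2\pi i}{u}$ in the limit.

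To locate the enclosed poles I would analyse $h(s) = u(1+e^s) - s - i\pi$ by setting $v := s + i\pi$, which recasts $h(s)=0$ as $v + u e^v = u$. Separating real and imaginary parts yields $\mathrm{Im}(v) = -u\,e^{\mathrm{Re}(v)}\sin(\mathrm{Im}(v))$; for $|\mathrm{Im}(v)| < \pi$ this forces $\mathrm{Im}(v)=0$, and the real equation then admits the unique root $v=0$ because its derivative $1+ue^v$ is strictly positive. Hence $s=-i\pi$ is the only pole of $\varphi$ inside the strip, with residue $\frac{e^{-i\pi}}{ue^{-i\pi}-1} = \frac{1}{u+1}$; the same parity argument also rules out zeros of $h$ on the horizontal boundaries $\mathrm{Im}(s)\in\{0,-2\pi\}$. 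Collecting the four edge contributions and applying the residue theorem yields
\[
-2\pi i\, I(u) + \frac{2\pi i}{u} = \frac{2\pi i}{u+1},
\]
which rearranges to $I(u) = \frac{1}{u} - \frac{1}{u+1} = \frac{1}{u(u+1)}$. The main technical obstacle is the uniformity on the right vertical edge: one needs the asymptotic $u(1+e^s) - s - i\pi \sim u e^s$ to hold uniformly for $\mathrm{Im}(s) \in [-2\pi,0]$ as $R\to\infty$, isolating the subleading $-s - i\pi$ from the dominant $ue^s$.
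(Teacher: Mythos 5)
Your proof is correct, and it is essentially the paper's residue computation transplanted into logarithmic coordinates, where the bookkeeping comes out differently and, in some respects, more cleanly. The paper stays in the $q$-plane: it uses the same partial-fraction split of $\frac{1}{\pi^2+w^2}$, adds the regulator $\frac{1}{u(1+q)}$ to each piece to gain decay at infinity, and closes a keyhole contour around the cut of $\log(ze^{-\mathrm{i}\pi})$ along $\mathbb{R}_+$; the value is then the sum of the residues at $z=-1$ of the regulator (giving $\tfrac{1}{u}$) and of the term with the logarithmic denominator (giving $-\tfrac{1}{u+1}$), and the uniqueness of that root is established by exactly your sign argument on the imaginary part. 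Under $q=e^{s}$ your period rectangle of height $2\pi$ is the image of that keyhole: the branch cut disappears (the logarithm becomes the linear term $-s$), the role of the paper's counterterm is taken over by the right vertical edge, whose limit $\tfrac{2\pi \mathrm{i}}{u}$ you justify by the uniform asymptotics $u(1+e^{s})-s-\mathrm{i}\pi\sim u e^{s}$, and the paper's root $z=-1$ becomes your simple pole at $s=-\mathrm{i}\pi$ with residue $\tfrac{1}{u+1}$; moreover the divergence of each half separately is handled by cancellation between the top and bottom edges at finite $R$ rather than by adding a counterterm, which is a neat way to avoid the paper's regulator trick. The only blemish is the sentence asserting that the difference of $\int\varphi$ and its $+\mathrm{i}\pi$ companion $\int\varphi_{+}$ equals $-2\pi \mathrm{i}\,I(u)$: with your partial-fraction normalisation $\int\varphi-\int\varphi_{+}=+2\pi \mathrm{i}\,I(u)$, and it is the top-plus-bottom edge combination $\int(\varphi_{+}-\varphi)$ that equals $-2\pi \mathrm{i}\,I(u)$ --- which is precisely what your final identity $-2\pi \mathrm{i}\,I(u)+\frac{2\pi \mathrm{i}}{u}=\frac{2\pi \mathrm{i}}{u+1}$ uses, so the conclusion is unaffected.
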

\begin{proof} We have
\begin{align}
\int_0^\infty \frac{dq}{\pi^2+\big(u(1+q)-\log q\big)^2}
&= \int_0^\infty \frac{dq}{2\pi \mathrm{i}} 
\Big( \Big\{\frac{1}{ -u(1+q)+\log q-\mathrm{i}\pi}+\frac{1}{u(1+q)}\Big\}
\nonumber
\\*
&- \Big\{\frac{1}{ -u(1+q)+\log q+\mathrm{i}\pi}+\frac{1}{u(1+q)}\Big\}\Big)\;.
\end{align}
The terms $\frac{1}{u(1+q)}$ are added to improve the deacy at
infinity. We put $z=qe^{\mathrm{i}\epsilon}$ in the first $\{\dots\}$ 
and $z=qe^{\mathrm{i}(2\pi -\epsilon)}$ in the second $\{\dots\}$. Then for 
$\epsilon\to 0$ we have
\begin{align*}
\begin{array}{l}
\displaystyle
\pm \int_0^\infty \frac{dq}{2\pi \mathrm{i}} 
\Big\{\frac{1}{ -u(1+q)+\log
  q \mp\mathrm{i}\pi}+\frac{1}{u(1+q)}\Big\}
\\[2ex]
= \displaystyle \lim_{\epsilon\to 0}\int_{c_\pm}
\frac{dz}{2\pi \mathrm{i}} 
\Big\{\frac{1}{ -u(1+z)+\log
  (ze^{-\mathrm{i}\pi})}+\frac{1}{u(1+z)}\Big\}
\end{array}\quad\qquad 
\parbox[c]{30mm}{\unitlength.5mm\begin{picture}(50,40)
\put(0,25){\vector(1,0){50}}
\put(25,0){\vector(0,1){50}}
\thicklines
\put(25,25){\vector(4,1){12}}
\put(25,25){\line(4,1){24}}
\put(25,25){\line(4,-1){24}}
\put(49,19){\vector(-4,1){12}}
\put(35,31){\oval(28,28)[tr]}
\put(35,19){\oval(28,28)[br]}
\put(18,31){\oval(28,28)[tl]}
\put(18,19){\oval(28,28)[bl]}
\put(15,45){\line(1,0){20}}
\put(15,5){\line(1,0){20}}
\put(4,15){\line(0,1){20}}
\put(37,16){\mbox{\small$c_-$}}
\put(37,32){\mbox{\small$c_+$}}
\put(10,40){\mbox{\small$c_\infty$}}
\put(18,25){\makebox(0,0){\mbox{$\times$}}}
\put(18,21){\makebox(0,0){\mbox{\small$-1$}}}
\end{picture}}
\end{align*}
with $\mathbb{R}_+$ chosen as the cut of $\log (ze^{-\mathrm{i}\pi})$. The decay at
$\infty$ guarantees that the intgral over the arc $c_\infty$ does not
contribute. Therefore the residue theorem gives
\begin{align}
\int_0^\infty \frac{dq}{\pi^2+\big(u(1+q)+\log q\big)^2}
&=\sum_{z\in \mathbb{C}\setminus \mathbb{R}_+}
\mathrm{Res}
\Big( \frac{1}{ -u(1+z)+\log (ze^{-\mathrm{i}\pi})}
+\frac{1}{u(1+z)}\Big)\;.
\end{align} 
For $z=|z|e^{\mathrm{i}\phi}$ with $0<\phi<\pi$ one has 
$\mathrm{Im}(-u(1+z)+\log (ze^{-\mathrm{i}\pi}))=
-u|z|\sin \phi-(\pi-\phi)<0$. Therefore,
the residue equation $0= u(1+z)+\log (ze^{-\mathrm{i}\pi})$ has
solutions only on the negative real axis: $z=-x$ and 
$u(1-x)=\log x$ with unique solution $x=1$. This gives
\begin{align}
\int_0^\infty \frac{dq}{\pi^2+\big(u(1+q)-\log q\big)^2}
&=
\Big( \frac{1}{ -u+ \frac{1}{z}}\Big|_{z=-1} + \frac{1}{u}\Big)
= \frac{1}{u(u+1)}\;.
\tag*{\mbox{$\square$}}
\end{align} 
\end{proof}
Insertion into (\ref{T1-}) gives 
\begin{align}
(T0)'(b)=-\frac{1}{|\lambda|\Lambda^2+1+b}
\quad\Rightarrow\quad 
(T0)(b)=\log \Big(\frac{1}{1+\frac{b}{1+|\lambda|\Lambda^2}}\Big)\;,
\end{align}
which is pointwise convergent to $0$ for $\Lambda^2\to \infty$. 
This means that $G_{0b}=\exp(0)=1$ for all $b$ is a solution of 
(\ref{G0b}) for $\lambda<0$.

This solution is interesting in so far as the numerical investigation
in \cite{Grosse:2014lxa} shows a phase transition at critical 
coupling constant $\lambda_c\approx -0.39$. For $\lambda_c<\lambda\leq
0$ we find qualitative agreement with $\exp(\mathcal {K}_\lambda)$, 
see Figure~\ref{fig:2}, 
whereas
for $\lambda<\lambda_c$ we have $G_{0b}=1$ in a whole neighbourhood of
$b=0$. This suggests that $\lambda_c$ locates the transition between
solutions $G_{0b}\in \exp(\mathcal {K}_\lambda)$ and $G_{0b}=\exp(0)=1$.

\end{appendix}

\end{document}